\tikzset{vertex/.style={minimum size=1.8mm,circle,fill=black,draw,inner sep=0pt},
         decoration={markings,mark=at position .5 with {\arrow[black,thick]{stealth}}}}
\theoremstyle{plain}
\newtheorem{theorem}{Theorem}
\newtheorem{lemma}{Lemma}
\newtheorem{corollary}{Corollary}
\newtheorem{observation}{Observation}
\newtheorem{proposition}{Proposition}
\newcommand{\out}{\mbox{out}\xspace}
\newcommand{\Opt}{\ensuremath{\mathrm{Opt}}\xspace}
\newcommand{\cost}{\ensuremath{\mathrm{cost}}\xspace}
\newcommand{\SER}{\ensuremath{\mathrm{SER}}\xspace}
\newcommand{\SERup}{\ensuremath{\mathrm{SER}^+}\xspace}
\newcommand{\OptSER}{\ensuremath{\mathrm{Opt}_{\mathrm{SER}}}\xspace}
\newcommand{\OptSERup}{\ensuremath{\mathrm{Opt}_{\mathrm{SER}^+}}\xspace}
\newcommand{\startS}{\ensuremath{S_{\mathrm{start}}}}
\newcommand{\pendS}{\ensuremath{S_{\mathrm{pseudo-end}}}}
\newcommand{\startSp}{\ensuremath{S'_{\mathrm{start}}}}
\newcommand{\pendSp}{\ensuremath{S'_{\mathrm{pseudo-end}}}}
\newcommand{\apset}{\ensuremath{\mathcal{Q}}\xspace}
\newcommand{\degree}[1]{\ensuremath{\textrm{deg}_G(#1)}\xspace}
\newcommand{\indeg}[1]{\ensuremath{\textrm{deg}_G^-(#1)}\xspace}
\newcommand{\outdeg}[1]{\ensuremath{\textrm{deg}_G^+(#1)}\xspace}
\newcommand{\ie}{i.\,e.}
\title{Improved Integrality Gap Upper Bounds for TSP\\
        with Distances One and Two\thanks{
    This research was partially supported by 
    DFG grant BL511/10-1 and 
    ERC Starting Grant 306465 (BeyondWorstCase).
    }    
}
\author{Matthias Mnich\thanks{    
    Universit{\"a}t Bonn, Institut f{\"u}r Informatik, Friedrich-Ebert-Allee 144, 53113 Bonn, Germany,
    \texttt{mmnich@uni-bonn.de}    
    }     
    \and Tobias M{\"o}mke\thanks{
    Universit{\"a}t des Saarlandes, Campus E1-3, 66123 Saarbr{\"u}cken, Germany,
    \texttt{moemke@cs.uni-saarland.de} 
    }
}
\date{}
\newcommand{\STSP}{\ensuremath{(1,2)\mbox{-STSP}}\xspace}
\newcommand{\ATSP}{\ensuremath{(1,2)\mbox{-ATSP}}\xspace}
\newcommand{\NP}{\ensuremath{\mathsf{NP}}\xspace}
\begin{document}
\maketitle

\begin{abstract}
  We study the structure of solutions to linear programming formulations for the traveling salesperson problem (TSP).
  
  \smallskip
  We perform a detailed analysis of the support of the subtour elimination linear programming relaxation, which leads to algorithms that find $2$-matchings with few components in polynomial time.
  The number of components directly leads to integrality gap upper bounds for the TSP with distances one and two, for both undirected and directed graphs.
  
  \smallskip
  \noindent
  Our main results concern the subtour elimination relaxation with one additional cutting plane inequality:
  \begin{itemize}
    \item For undirected instances we obtain an integrality gap upper bound of
    $5/4$ without any further restrictions, of $7/6$ if the optimal LP solution
    is half-integral.
    \item For instances of order $n$ where the fractional LP value has a cost of $n$, we obtain a tight integrality gap upper bound of $10/9$ if there is an optimal solution with subcubic support graph. 
    The latter property that the graph is subcubic is implied if the solution is a basic solution in the fractional $2$-matching polytope.
    \item For directed instances we obtain an integrality gap upper bound of $3/2$, and of $4/3$ if given an optimal $1/2$-integral solution.
  \end{itemize}
  In the case of undirected graphs, we can avoid to add the cutting plane inequality if we accept slightly increased values.
  For the tight result, the cutting plane is not required.
  \smallskip

  Additionally, we show that relying on the structure of the support is not an artefact of our algorithm, but is necessary under standard complexity-theoretic assumptions: we show that finding improved solutions via local search is $\mathsf{W}[1]$-hard for $k$-edge change neighborhoods even for the TSP with distances one and two, which strengthens a result of D{\'a}niel Marx.
\end{abstract}

% !TEX root = Main.tex

\section{Introduction}
\label{sec:introduction}
The traveling salesperson problem (TSP) in metric graphs is one of the most fundamental $\mathsf{NP}$-hard optimization problems.
Given an undirected or directed graph $G$ with a metric on its edges, we seek a tour~$\mathcal T$ (a Hamiltonian cycle) of minimum cost in $G$, where the cost of $\mathcal T$ is the sum of costs of edges traversed by~$\mathcal T$.

Despite a vast body of research, the best approximation algorithm for metric TSP is still Christofides' algorithm~\cite{Christofides1976} from 1976, which has a performance guarantee of~$3/2$. 
Recall that the performance guarantee or approximation ratio of an algorithm for a problem is defined as a number $\alpha$ such that, in polynomial time, the algorithm computes a solution whose value is within a factor $\alpha$ of the optimal value.
Generally the bound $3/2$ is not believed to be tight.
However, the currently largest known lower bound on the performance guarantee obtainable in polynomial time is as low as $123/122$~\cite{KarpinskiEtAl2013}.

One of the most promising techniques to obtain an improved performance guarantee is to use a linear programming (LP) formulation of TSP.
Upper bounds on the integrality gap of the LP usually translate to approximation guarantees.
In this context, the \emph{subtour elimination relaxation} (\SER), or Held-Karp relaxation~\cite{HeldKarp1970}, is particularly important.
Its integrality gap is between $4/3$ and $3/2,$ and the value $4/3$ is conjectured to be tight~\cite{Goemans1995}.
For relevant special cases, the conjecture is known to be true~\cite{BoydEtAl2011,MomkeSvensson2011}.
It is also known that \SER has a close relation to $2$-matchings, as was pointed out for instance by Schalekamp et al.~\cite{SchalekampEtAl2014} in the context of perfect $2$-matchings.

\subsection{Our Contributions}
\label{sec:ourcontributions}
We investigate the structure of the \emph{support graph} of solutions to \SER, that is, the graph of edges with non-zero value in an optimal solution to \SER.
We show how to find a $2$-matching (\ie, a collection of paths and cycles) that approximates the minimum the number of components in the \SER support graph.
While we consider our structural findings to be valuable by themselves, they have a direct impact on the integrality gap of \SER for several TSP variants.
In particular, we obtain improved integrality gap upper bounds for the asymmetric and symmetric TSP with distances one and two, a classical and well-studied variant of TSP~\cite{Williamson1990,Vishwanathan1992,PapadimitriouYannakakis1993,BlaserRam2005,BlaserManthey2005,BermanKarpinski2006,KarpinskiSchmied2012,QSWvZ15}.
We refer to the symmetric variant (in undirected graphs) as \STSP and to the asymmetric variant as \ATSP.
\medskip

First, we augment \SER by a single cutting hyperplane to a linear program \SERup; this way, we are enforcing that an optimal solution to \SERup takes an integer value.
The modification requires an integer cost function, but is not specific to edge costs one and two.
We then consider the support graph of an optimal \SERup solution.
We show how to modify this graph in such a way that it allows us to prove integrality gap upper bounds for TSP by computing a $2$-matching with few components.
To this end, we define certain types of improvements that extend the improvements used by Berman and Karpinski~\cite{BermanKarpinski2006} in their approximation algorithm for \STSP.
For instance, we show how to transform a $2$-matching in the support to another $2$-matching not containing isolated vertices without increasing the number of components, in Sect.~\ref{sec:removingsingletoncomponents}.
With further improvements obtained by applying alternating paths, in Sect.~\ref{sec:integralitygapupperboundsfor12stsp} we find in polynomial time a $2$-matching with at most~$n/4$ components.
This $2$-matching then implies an integrality gap upper bound of $5/4$ for arbitrary \STSP instances.
\medskip

Second, we consider half-integral instances of \STSP.
A conjecture of Schalekamp et al.~\cite{SchalekampEtAl2014} implies that instances exist for which the integrality gap of \SER and \SERup is tight and which at the same time are basic solutions to the fractional perfect $2$-matching polytope.
These basic solutions are well understood and they have a quite specific structure~\cite{Balinski1965}.
In particular, they are half-integral (all LP values are multiples of~$1/2$) and they are subcubic (all degrees in the support are at most three). 
We show that if the half-integrality part of the conjecture is true, then integrality gap of \SERup is at most~$7/6$.
\medskip

Third, we consider the conjecture of Schalekamp et al.~\cite{SchalekampEtAl2014} without requiring half-integrality, in Sect.~\ref{sec:subcubicfractionallyhamiltoniansupports}.
For \STSP on instances $G$ that admit optimal basic solutions with subcubic support of $\SER(G)$ and $\OptSER(G) = |V(G)|$, we obtain a \emph{tight} integrality gap and an improved approximation guarantee of~$10/9$.
We think that the restriction to instances where $\OptSER(G) = |V(G)|$ is quite benign since any instance satisfies $\OptSER(G) \ge |V(G)|$, and usually the integrality gaps of linear programs decrease with increasing LP value.
\medskip

Fourth, we transfer our results from \STSP to \ATSP in a natural way.
We prove an \SERup integrality gap upper bound of $3/2$ for general instances of \ATSP, and~$4/3$ if additionally there is a half-integral optimal solution. %, in Sect.~\ref{sec:atsp}.
\medskip

Fifth, we show that the \SERup integrality gap upper bound converges to the \SER integrality gap upper bound with increasing instance size.
By an amplification technique, we use known computational results for small instances to show $\SER$ integrality gap upper bounds that differ by less than two percent from our $\SERup$ integrality gap upper bounds.
Our results for \SER provide the currently best integrality gap upper bound for instances with given optimal half-integral solution.
\medskip

Sixth, we identify structures in \STSP and \ATSP instances that allow us to increase the instance size by an arbitrary factor without decreasing the integrality gap of the instance.
One consequence of these results is that there are infinitely many \ATSP instances with integrality gap at least $6/5$.
The gained insights give raise to conjecture that the integrality gaps of \SER and \SERup coincide for both \STSP and \ATSP.
\medskip

Finally, we strengthen a result of Marx~\cite{Marx2008} about finding cheaper solutions to a \STSP instance by local search, in Sect.~\ref{sec:intractabilityoflocalsearch}.
Precisely, we show that finding a cheaper solution compared to a given tour by exchanging at most~$k$ edges is $\mathsf{W}[1]$-hard, even in undirected TSP instances with distances one and two.
This hardness result intuitively says that a brute force search of all subsets of $k$ edges---in time~ $n^{O(k)}$ for \STSP instances of order $n$---is essentially optimal, unless many canonical $\mathsf{NP}$-complete problems admit subexponential-time algorithms.
Such an intractability result was known before only for TSP instances with \emph{three} distinct city distances, due to Marx~\cite{Marx2008}.
This result suggests that a simple search for local improvements is not efficient.
Our proof is similar to that of Marx, with some simplifications.

\subsection{Overview of Techniques}
To show that the subtour elimination support graph contains a $2$-matching with few components, we apply a sequence of local improvements.
One important step is that we can exclude the solution computed by our algorithm to contain isolated vertices.
We use an induction that creates a tree of alternating paths, and show that it is always possible to increase the size of the tree unless there is an improved $2$-matching (i.\,e., with fewer components).

All five of our integrality gap upper bound results use an accounting technique that distributes an amount of $n$ coins to components of the $2$-matching.
We assign a sufficient amount of coins to each component of the $2$-matching to ensure that the total number of components cannot exceed the aimed-for upper bound.
However, we employ two entirely different schemes in order to provide a distribution of coins.
Our result for degree-$3$-bounded support graphs initially assigns one coin to each vertex, and then redistributes the coins to the components.
A similar accounting technique has been used by Berman and Karpinski~\cite{BermanKarpinski2006}.
However, we exploit properties of of the subtour elimination constraints in order to ensure the existence of fractional coins that are available provided that no local improvements are possible.

The remaining integrality gap upper-bound results use the LP values of the subtour elimination relaxation directly. 
One way to see the technique is to initially distribute $n$ coins to \emph{edges}, where each edge obtains a fraction of the coin according to its LP value.
As key idea, we formulate the distribution of values to components via a new linear program. 
The linear program takes the \SER solution~$x^*$ and the aimed-for number of coins per component as parameter and this way we reduce the analysis to finding a feasible solution to the linear program.
To find a feasible solution to the new linear program, we split the edges of the \SER support graph into sub-edges such that each sub-edge $e$ has the same LP value~$x^*_e$. 
In the resulting multigraph, we obtain a collection of disjoint alternating paths of which certain types then lead to improved $2$-matchings.

\subsection{Related Work}
\label{sec:relatedwork}
Both \STSP and \ATSP are well-studied from the approximation point of view.
For the problem \STSP, it is $\mathsf{NP}$-hard to obtain a performance guarantee better than $535/534$~\cite{KarpinskiSchmied2012}.
Papadimitriou and Yannakakis~\cite{PapadimitriouYannakakis1993} gave a $7/6$-approximation algorithm for \STSP.
The approximation factor was improved by Bl{\"a}ser and Ram~\cite{BlaserRam2005} to $65/56$, and then to $8/7$ by Berman and Karpinski~\cite{BermanKarpinski2006}.

The best known integrality gap lower bound of \STSP is $10/9$, due to Williamson~\cite{Williamson1990}.
Previous to our result, Qian et al.~\cite{QianEtAl2012} showed an integrality gap upper bound of $19/15$ 
for \STSP, and of $7/6$ if the integrality gap is attained by a basic solution of the fractional $2$-matching polytope.
After we obtained our results \cite{MM13}, Qian et al.~improved the integrality gap upper bound to $5/4$ and to $26/21$ for fractionally Hamiltonian instances with techniques distinct from ours~\cite{QSWvZ15}. 
With the additional assumption that a certain type of modification maintains the $2$-vertex connectedness of the support graph, they were able to show a tight integrality gap of~$10/9$.

For \ATSP, it is \NP-hard to obtain a performance ratio better than 207/206~\cite{KarpinskiSchmied2012}.
The first non-trivial approximation algorithm for \ATSP was given by Vishwanathan~\cite{Vishwanathan1992}, with an approximation factor of $17/12$.
This was improved to $4/3$ by Bl{\"a}ser and Manthey~\cite{BlaserManthey2005}.
The currently best approximation factor is $5/4$, and is due to Bl{\"a}ser~\cite{Blaser2004}.

% !TEX root = Main.tex

\section{Subtour Elimination Linear Programming Relaxation}
\label{sec:preparatorywork}
\subsection{Notation}
Let us start with some notation that will be used throughout the paper.
For an undirected (directed) graph $G$, let~$V(G)$ denote its set of vertices and $E(G)$ its set of edges (arcs).

Let $G$ be an undirected graph.
For vertex sets $S,S' \subseteq V(G)$, let $\delta_G(S,S') = \{\{v,w\}~|~v \in S, w \in S'\}$ be the set of edges between $S$ and $S'$.
We use $\delta_G(S) = \delta(S,V(G) \setminus S)$.
For a single vertex $v$, we write~$\delta_G(v)$ instead of $\delta_G(\{v\})$.
We define $N_G(v) = \{ u \in V(G)~|~\{u,v\} \in \delta_G(v)\}$ to be the neighborhood of $v$.
For each vertex $v\in V(G)$, let $\degree{v}=|N_G(v)|$ be its \emph{degree} in $G$.

Let $G$ be a directed graph with possibly bidirected arcs but without loops.
For each vertex $v\in V(G)$, let $\indeg{v}=|N^-_G(v)|$ and $\outdeg{v}=|N^+_G(v)|$ be its \emph{in-degree} and \emph{out-degree}, and let $\degree{v} = \indeg{v} + \outdeg{v}$ be its degree.
In directed graphs, we sometimes need to distinguish between arcs leaving a set of vertices and those entering the set, which we mark by superscripts $+$ or~$-$.
For vertex sets $S,S'\subseteq V(G)$, define $\delta_G(S,S') = \{(v,w),(w,v) \in E(G)~|~v \in S, w \in S'\}$, $\delta_G^+(S,S') = \{(v,w) \in E(G)~|~v \in S, w \in S'\}$, and $\delta_G^-(S,S') = \{(w,v) \in E(G) : v \in S, w \in S'\}$.
Analogous to the undirected case, we define $\delta_G^+(S) = \delta_G^+(S,V(G) \setminus S)$ and $\delta_G^-(S) := \delta_G^-(S,V(G) \setminus S)$.
We define $N_G(v) = \{ u \in V(G)~|~(u,v) \in \delta_G^-(v)\mbox{ or } (v,u) \in \delta_G^+(v)\}$,  $N_G^+(v) = \{ u \in V(G)~|~(v,u) \in \delta_G^+(v)\}$, and $N_G^-(v) = \{ u \in V(G)~|~(u,v) \in \delta_G^-(v)\}$.

\subsection{Subtour Elimination Linear Program}
For an instance $G$ of \STSP, we define a linear programming relaxation of an integer linear program that models an optimal TSP tour in $G$.
Given an undirected graph $G$, we introduce one variable $x_e$ for each edge $e \in E(G)$.
Variable $x_e$ models whether edge $e$ belongs to the TSP tour ($x_e = 1$) or not~($x_e = 0$).
For a set of edges $E' \subseteq E(G)$, we write $x(E') = \sum_{e \in E'} x_e$.
We formulate the \emph{subtour elimination linear programming relaxation} for $G$ as follows:
\begin{align}
                        \min \quad \sum_{e \in E(G)} \cost(e) x_e& \qquad \notag \\
\textnormal{subject to} \quad \sum_{e \in \delta_G(v)} x_e   &= 2 \qquad\mbox{ for } v \in V(G), \label{eqn:ser_general_1}\\
                        \sum_{e \in \delta_G(S)} x_e & \ge 2 \qquad\mbox{ for } \emptyset \neq S \subset V(G), \label{eqn:ser_general_2}\\
                                                 x_e & \ge 0 \qquad\mbox{ for } e \in E(G) \notag
\end{align}
where $\cost(e)$ is the cost of edge $e$.

We refer to the first set \eqref{eqn:ser_general_1} of constraints as the \emph{equality constraints}, and to the second set \eqref{eqn:ser_general_2} as the \emph{subtour elimination constraints}.
The equality constraints model that each node of a tour must be connected to other nodes by exactly two edges, and the subtour elimination constraints model that any non-empty proper subset of nodes must be connected by at least two edges with the remaining set of nodes.
If $G$ is a directed graph, we replace the constraints by \eqref{eqn:ser_general_1}--\eqref{eqn:ser_general_2} by
\begin{align}
  x(\delta_G^-(v)) = 1,  &\qquad & x(\delta_G^+(v)) = 1, v \in V(G), \label{eq:asubtour1}\\
  x(\delta_G^-(S)) \ge 1 &\qquad & x(\delta_G^+(S)) \ge 1, \emptyset \neq S \subset V(G), \label{eq:asubtour}
\end{align}
and $x_e \ge 0$ for all $e \in E(G)$.
The constraints on the right hand side of \eqref{eq:asubtour} are redundant, but we keep them for convenience.
We will refer to the polyhedra of both linear programs as $\SER(G)$, where the constraints depend on whether $G$ is directed or not.
The \emph{cost} of a solution $x$ to $\SER(G)$ is defined as $\cost(x) = \sum_{e \in E(G)}\cost(e)x_e$.
We refer to the \emph{value} of an optimal solution $x^*$ to $\SER(G)$ by $\OptSER(G)=\cost(x^*)$, and write $\Opt(G)$ for the cost of an optimal integral solution of $\SER(G)$.
We write \SER as shorthand for ``subtour elimination linear programming relaxation,'' for both the directed and undirected version.

We show our main results for a slightly modified version of \SER using the standard
technique of introducing a cutting hyperplane that rounds up the value of the objective function to the next integer.
Formally, for a given instance $G$ we first compute an optimal $\SER(G)$ solution and we obtain \SERup from \SER by adding the constraint
\begin{equation}
\label{eq:roundup}
\sum_{e \in E} \cost(e) x_e  \ge \lceil \OptSER(G) \rceil \enspace .
\end{equation}
The relaxation \SERup is valid for all TSP instances with integer cost functions, which includes graph-TSP and, by scaling, any TSP instance with rational edge costs.

We observe that \eqref{eq:roundup} is indeed a cutting hyperplane because all points removed from the polytope have a
fractional objective function value for any integer objective function. This
implies that none of the removed points can be located within the convex hull of
the integer points.
We refer to the polytope of the linear program for an instance $G$ by $\SERup(G)$. 

For notational convenience, we give the following definitions only for $\SER$ instead of considering generic linear programs; they extend directly to \SERup.
The \emph{integrality gap} of $\SER(G)$ is defined as $\Opt(G)/\OptSER(G)$, and the \emph{integrality gap} of $\SER$ is the supremum over the integrality gaps of all instances $G$.
The \emph{support graph} of a solution $x \in \SER(G)$ is the graph $G_x$ with $V(G_x) = V(G)$ and $E(G_x) = \{e \in E(G) : x_e > 0\}$.
An edge $e \in E(G_x)$ is called a \emph{$1$-edge}, if $x_e = 1$.
 
In the following proposition, we changed the formulation of the statement to fit our needs.
\begin{proposition}[Qian et al.~\cite{QSWvZ15}]
\label{pro:aux}
  For any instance $G$ of \STSP there is an instance $G'$ and an $\varepsilon \in [0,1)$ such that the \SER integrality gaps of $G$ and $G'$ are identical and $\OptSER(G') \le |V(G')| + \varepsilon$.
\end{proposition}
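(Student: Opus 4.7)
The plan splits on whether $\OptSER(G) - |V(G)| < 1$. If so, set $G' := G$ and $\varepsilon := \OptSER(G) - |V(G)|$; since every $\STSP$ instance with costs in $\{1,2\}$ satisfies $\OptSER(G) \ge |V(G)|$ (summing the equality constraints forces $\sum_e x_e = |V(G)|$, and each edge contributes cost at least $1$), this $\varepsilon$ lies in $[0,1)$ and the proposition holds trivially. Otherwise $\OptSER(G) \ge |V(G)| + 1$, and I would build $G'$ by a sequence of vertex-insertion gadgets, each preserving both $\Opt$ and $\OptSER$ (hence the integrality gap \emph{exactly}) while growing $|V|$ by one.

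The intended gadget: pick an optimal LP solution $x^*$ and an optimal tour $T$ of $G$ that share a cost-$2$ edge $e = \{v,w\}$ with $x^*_e = 1$, and form $G^+$ by adding a new vertex $u$ with $\{v,u\}, \{u,w\}$ declared cost $1$ and all other edges incident to $u$ cost $2$. Transforming $x^*$ into $x'$ by setting $x'_e := 0$, $x'_{vu} := x'_{uw} := 1$, and leaving other entries unchanged produces a feasible $\SER(G^+)$ solution of identical cost, whereas any $\SER(G^+)$ solution projects back to $\SER(G)$ without cost increase by short-circuiting $u$ along $e$; hence $\OptSER(G^+) = \OptSER(G)$. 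An analogous detour/short-circuit argument applied to Hamiltonian cycles (replacing the sub-tour $v\!-\!u\!-\!w$ by $\{v,w\}$ and vice versa, both at cost $2$) gives $\Opt(G^+) = \Opt(G)$. Since $|V(G^+)| = |V(G)| + 1$, iterating this step a bounded number of times drives $\OptSER - |V|$ below $1$ and yields the desired $G'$.

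The main obstacle is the selection step: showing that, whenever $\OptSER(G) > |V(G)|$, some optimal LP solution $x^*$ and some optimal tour $T$ share a cost-$2$ edge $e$ with $x^*_e = 1$. I would approach this by restricting to a basic optimal $x^*$, whose support values are constrained, and using an exchange argument to align a minimum-cost tour with a unit-weight cost-$2$ support edge. When no such pair can be produced directly, the single-vertex gadget is replaced by a short cost-$1$ path on two or more fresh vertices whose LP weights absorb a fractional value of $x^*_e$ and compensate the induced degree imbalance at $v$ and $w$; the same short-circuit/detour analysis then preserves both $\OptSER$ and $\Opt$ for the generalized gadget, so the iteration still terminates at the claimed bound.
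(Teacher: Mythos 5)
There is a genuine gap in your selection step, and it is more than a technical detail. Your gadget requires a cost-$2$ edge $e=\{v,w\}$ with $x^*_e=1$ that moreover belongs to an optimal tour $T$. Neither property is guaranteed when $\OptSER(G)-|V(G)|\ge 1$: the LP mass on cost-$2$ edges can be spread thinly over many edges, each with small fractional value, so no single cost-$2$ edge need be a $1$-edge at all, and even if such an edge existed there is no reason it must lie on some optimal tour. Your single-vertex detour gadget makes the new vertex $u$ degree-$2$ in the LP only because it carries the full unit $x'_{vu}=x'_{uw}=1$; if the edge carried only fractional mass, $u$'s degree constraint would fail, and your sketched fallback (a path of fresh vertices ``absorbing a fractional value of $x^*_e$'') suffers the same degree-deficiency problem at every internal vertex. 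The exchange argument you invoke to align $x^*$ and $T$ on a common unit-weight cost-$2$ edge is not spelled out, and I do not see how it could be made to work in general.

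The paper's argument (following Qian et al.) sidesteps all of this by working with the \emph{aggregate}: since $\sum_e x^*_e = n$ from the degree constraints, one has $\sum_{\cost(e)=2} x^*_e = \OptSER(G)-n\ge k$ for $k:=\lfloor \OptSER(G)-n\rfloor$. They then add $k$ auxiliary vertices, each given a cost-$1$ edge to \emph{every} vertex incident to a cost-$2$ support edge; this lets each auxiliary vertex collect fractional LP mass rerouted from many cost-$2$ edges until its degree reaches $2$, without ever needing a unit-weight cost-$2$ edge or alignment with an integral tour. Your plan shares the correct high-level skeleton (trivial case when $\OptSER(G)-|V(G)|<1$, then modify to reduce the excess), but the local one-edge-at-a-time gadget you build on top of it rests on an existence claim that is generally false, so the proof does not go through as written.
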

Thus loosely speaking they observed that there is a $\varepsilon \in [0,1)$ such that to show integrality gap upper bounds of \SER for \STSP, it suffices to consider graphs $G$ of order $n$ with $\OptSER(G) = n+\varepsilon$.
The key insight used in the proof is that, if $\OptSER(H) - |V(H)| \ge k$ for some integer $k \ge 1$, the LP values of all cost two edges sum up to at least $k$.
This allows to transform~$H$ into a graph $H'$ with $k$ auxiliary vertices such that each of them has a cost one edge to each vertex of $G$ that is incident to a cost two edge of the support graph.
We note that this transformation does not destroy half-integrality and that the reasoning directly translates to \ATSP.

The main purpose of using \SERup instead of \SER is the following insight.
\begin{lemma}
\label{lem:pluseps}
  For any instance $G$ of \STSP or \ATSP there is an instance $G'$ such that the \SERup integrality gaps of $G$ and $G'$ are identical and $\OptSER(G') = \OptSERup(G') = |V(G')|$.
  In particular, any optimal solutions $x^*$ to $\SERup(G')$ has the property that all edges in the support graph $G_{x^*}$ have a cost of one.
\end{lemma}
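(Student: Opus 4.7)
The plan is to adapt the absorption construction underlying Proposition~\ref{pro:aux} to the rounded-up relaxation \SERup. First I would note the basic lower bound $\OptSER(G) \ge n$: the degree-equality constraints \eqref{eqn:ser_general_1} (respectively \eqref{eq:asubtour1}) force the sum of LP weights over all edges (or arcs) to equal exactly $n$, and every edge has cost at least one. Since $\OptSERup(G) = \lceil \OptSER(G) \rceil$ is an integer, we may write $\OptSERup(G) = n + k$ for some integer $k \ge 0$. If $k = 0$, taking $G' = G$ already suffices: $\OptSER(G) \in [n, n+1)$ together with $\OptSERup(G) = n$ forces $\OptSER(G) = n$ as well, and since the total support LP weight equals $n$ while the total cost also equals $n$, no cost-two edge can carry positive weight, yielding the ``in particular'' conclusion.

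For $k \ge 1$, I would construct $G'$ exactly as in Proposition~\ref{pro:aux}: introduce $k$ auxiliary vertices, join each of them by cost-one edges to every vertex of $G$ incident to a cost-two edge in the support of a fixed optimal $\SER(G)$ solution $x^*$, and assign cost two to all remaining new edges. A short calculation shows that the total LP weight on cost-two support edges equals $\OptSER(G) - n \in (k-1, k]$, while the $k$ new vertices together require LP degree $2k$ supplied through cost-one edges to existing vertices. Redistributing $k$ units of slack (all of the cost-two weight plus a tiny amount shaved from cost-one edges to close the fractional rounding gap) through the new vertices yields a feasible $\SER(G')$ solution of cost exactly $n + k = |V(G')|$. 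Combined with the trivial bound $\OptSER(G') \ge |V(G')|$, this gives $\OptSER(G') = \OptSERup(G') = |V(G')|$, and the ``in particular'' claim follows by the same total-weight argument as in the $k = 0$ case.

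Identity of the two \SERup integrality gaps then reduces to a matching identity of integer optima, namely $\Opt(G') = \Opt(G) + k$, since the denominators already satisfy $\OptSERup(G') = \OptSERup(G) + k$. Any Hamiltonian tour in $G'$ shortcuts to one in $G$ via the metric inequality $\cost(\cdot) \le 2$, and conversely any Hamiltonian tour in $G$ lifts to one in $G'$ by inserting each auxiliary vertex in place of a cost-two edge at zero net cost, since every auxiliary vertex is adjacent by cost-one edges to both endpoints of some cost-two support edge. The main obstacle I foresee lies in two technical steps: first, arguing that the fractional rounding gap can always be absorbed by perturbing cost-one LP weights without violating any subtour elimination constraint; and second, ensuring during the lift that a Hamiltonian tour of $G$ can be modified at no cost increase to contain $k$ distinct cost-two edges suitable for insertion---a manipulation analogous to the one behind Proposition~\ref{pro:aux}, but carried out at the integer rather than the fractional tour level.
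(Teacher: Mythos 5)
Your overall strategy (absorb cost-two LP weight into auxiliary vertices, as in Proposition~\ref{pro:aux}) is the right family of ideas, but the proposal has two substantive flaws, both of which the paper sidesteps with a single cleaner choice.

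First, you drive the construction with an optimal \emph{\SER} solution $x^*$, and you correctly observe that its cost-two LP mass $\OptSER(G)-n$ lies in $(k-1,k]$ rather than being exactly $k$; you then propose to close the remaining fraction by ``shaving a tiny amount from cost-one edges.'' You flag this yourself as the main obstacle, and indeed it does not obviously go through: diverting cost-one mass through a new vertex $w$ costs nothing extra only if $w$ has cost-one edges to both endpoints of the diverted edge, but by construction $w$ is adjacent at cost one only to endpoints of cost-two support edges, so an arbitrary cost-one edge may increase the LP objective by twice the shaved mass, not once. The paper avoids the issue entirely by choosing $x$ to be an optimal \emph{\SERup} solution instead of an optimal \SER solution. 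Such an $x$ is still \SER-feasible, so the Proposition~\ref{pro:aux} construction applies verbatim, but now $\cost(x)=\OptSERup(G)$ is an integer, so $\sum_{\cost(e)=2} x_e = \OptSERup(G)-n$ is exactly $k$ — there is no fractional remainder and no perturbation is needed.

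Second, the integrality-gap bookkeeping in your last paragraph is incorrect. With $k$ defined by $\OptSERup(G)=n+k$ and $|V(G')|=n+k$, the conclusion $\OptSER(G')=\OptSERup(G')=|V(G')|$ gives $\OptSERup(G')=n+k=\OptSERup(G)$; the two denominators are \emph{equal}, not offset by $k$. So identity of the two \SERup gaps requires $\Opt(G')=\Opt(G)$, whereas you argue toward $\Opt(G')=\Opt(G)+k$. The lifting/shortcutting argument you sketch would therefore have to be retargeted, and the worry you raise about whether an optimal tour of $G$ can be modified at no cost to contain $k$ insertable cost-two edges is genuine and unresolved in your sketch. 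The paper does not reprove any of this: it cites Proposition~\ref{pro:aux} for the gap-preservation and restricts its own argument to the arithmetic showing that, when the construction is fed the \SERup optimum, the new instance satisfies $\OptSER(G')=\OptSERup(G')=|V(G')|$, from which the ``all support edges have cost one'' claim follows as you correctly note.
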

\begin{proof}
Since any optimal solution $x$ of $\SERup(H)$ is a feasible solution of $\SER(H)$, we can apply Proposition~\ref{pro:aux} based on $x$ in order to obtain the aimed-for instance $G$.
By the equality constraints, $\sum_{e \in E} x_e = n$ and therefore 
\[
\sum_{e \in E\colon \cost(e) = 2} x_e = \sum_{e \in E} \cost(e) x_e - \sum_{e \in E} x_e = \OptSERup(H) - n
\]
is an integer.
We conclude that $\lfloor \OptSER(G) \rfloor = \OptSER(G) = \OptSERup(G)$. 

Now the last claim follows easily from
  \begin{equation*}
    \sum_{e \in E(G)} x^*_e = n = \sum_{e \in E(G)} \cost(e) x^*_e \enspace .
  \end{equation*}
\end{proof}

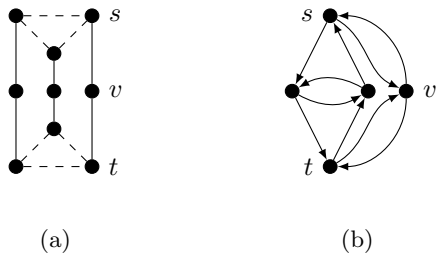
\begin{figure}[htb]
  \centering
  \begin{subfigure}[b]{0.22\textwidth}
    \centering
    \begin{tikzpicture}[scale=0.5]
      \node (1) at (0,0) [vertex, label=below:\phantom{M}]{};
      \node (2) at (2,0) [vertex, label=right:$t$]{};
      \node (3) at (1,1) [vertex]{};
      \node (4) at (0,2) [vertex]{};
      \node (5) at (1,2) [vertex]{};
      \node (6) at (2,2) [vertex, label=right:$v$]{};
      \node (7) at (1,3) [vertex]{};
      \node (8) at (0,4) [vertex]{};
      \node (9) at (2,4) [vertex, label=right:$s$]{};
      \draw[dashed](1)--(2);
      \draw[dashed](1)--(3);
      \draw[](1)--(4);
      \draw[dashed](2)--(3);
      \draw[](2)--(6);
      \draw[](3)--(5);
      \draw[](4)--(8);
      \draw[](5)--(7);
      \draw[](6)--(9);
      \draw[dashed](7)--(8);
      \draw[dashed](7)--(9);
      \draw[dashed](8)--(9);
    \end{tikzpicture}
    \caption{$~$}
    \label{fig:simple_gap_undirected}
  \end{subfigure}
  \quad
  \begin{subfigure}[b]{0.22\textwidth}
    \centering
    \begin{tikzpicture}[->,scale=0.5]
      \node (1) at (0,2) [vertex]{};
      \node (2) at (1,0) [vertex, label=below:\phantom{M}, label=left:$t$]{};
      \node (3) at (1,4) [vertex, label=left:$s$]{};
      \node (4) at (2,2) [vertex]{};
      \node (5) at (3,2) [vertex, label=right:$v$]{};
      \draw[>=latex] (1) -> (2);
      \draw[>=latex] (2) -> (4);
      \draw[>=latex] (4) -> (3);
      \draw[>=latex] (3) -> (1);
      \draw[>=latex] (1) to[bend right] (4);
      \draw[>=latex] (4) to[bend right] (1);
      \draw[>=latex] (2) to[out=30, in=205] (5);
      \draw[>=latex] (5) to[out=270, in = 0] (2);
      \draw[>=latex] (3) to[out=330, in=155] (5);
      \draw[>=latex] (5) to[out=90, in=0] (3);
    \end{tikzpicture}
    \caption{$~$}
    \label{fig:simple_gap_directed}
  \end{subfigure}
  \caption{
    \label{fig:simple_intgap}
    Integrality gap lower bound instances for \STSP and \ATSP, where edges/arcs are drawn if and only if they are of cost one. 
    For the undirected graphs, there is a solution $x$ to \SER such that $x_e=1/2$ for all dashed edges $e$ and $x_e=1$ otherwise. 
    For the directed graphs, $x_e=1/2$ for all depicted arcs $e$.}
\end{figure}
For \SER, Fig.~\ref{fig:simple_gap_undirected} shows a well-known \STSP instance with integrality gap~$10/9$. 
Since the fractional optimal solution has integer cost, the same instance also shows an integrality gap lower bound of $10/9$ for \SERup.

For \ATSP, we claim that the instance depicted in Fig.~\ref{fig:simple_gap_directed} provides a lower bound of~$6/5$ on the integrality gap for both \SER and \SERup.
Despite its simplicity, we are not aware of a previous appearance in the literature.
\begin{theorem}
\label{lem:simple_intgap}
The integrality gap of \SER and \SERup for \ATSP is at least $6/5$.
\end{theorem}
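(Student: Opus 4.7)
The plan is to consider the five-vertex instance $G$ of Fig.~\ref{fig:simple_gap_directed}, to exhibit the fractional solution $x^*$ that assigns $1/2$ to every cost-$1$ arc drawn and $0$ to all remaining arcs, to verify that $x^*$ is feasible for $\SER(G)$ with $\cost(x^*)=5$, and then to argue that every Hamiltonian tour in $G$ has cost at least $6$. Because $\cost(x^*)=5$ is already an integer, the cutting hyperplane \eqref{eq:roundup} is automatically satisfied and $x^*$ also lies in $\SERup(G)$; the resulting integrality gap of $6/5$ therefore applies to both relaxations simultaneously.

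To verify feasibility, the equality constraints \eqref{eq:asubtour1} follow by direct inspection of the figure: each of the five vertices has exactly two incoming and two outgoing support arcs, each of weight $1/2$. For the subtour elimination constraints \eqref{eq:asubtour}, summing the out-degree equalities over $v\in S$ yields the identity $x^*(\delta_G^+(S)) = |S| - x^*(E_G[S])$, where $E_G[S]$ denotes the arcs with both endpoints in $S$. Feasibility therefore reduces to checking $x^*(E_G[S]) \le |S|-1$ for every $\emptyset \neq S \subsetneq V(G)$. With only five vertices this amounts to inspecting the ten pairs and ten triples, and in each case the bound holds because no two vertices carry more than one unit of $x^*$-mass between them; the bound is tight precisely on the pairs $\{1,4\},\{t,v\},\{s,v\}$ and on the triples $\{1,t,4\},\{1,4,s\},\{t,s,v\}$.

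The only non-trivial step is the integer lower bound, where the key observation is that the cost-$1$ arcs at $v$ connect $v$ only to $s$ and to $t$. In any Hamiltonian tour, either both arcs at $v$ have cost $1$ (forcing the sub-path $s\to v\to t$ or $t\to v\to s$), or at least one of them has cost $2$. In the first case the tour is completed by a directed path of three arcs between the two endpoints that passes through $\{1,4\}$; there are exactly four such candidate paths in the cost-$1$ subgraph on $\{s,t,1,4\}$, and a direct enumeration shows that each one must traverse at least one cost-$2$ arc, giving a tour of total cost at least $2+4=6$. In the second case the pair of arcs at $v$ already contributes cost at least $3$ while the remaining three arcs contribute cost at least $3$, so the tour again costs at least $6$. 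Combining the two steps gives $\Opt(G)/\OptSER(G)\ge 6/5$, and the same bound for $\SERup$ by the first paragraph. The main obstacle is simply keeping this small case analysis organized; no deeper idea is required.
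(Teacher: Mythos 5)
Your proposal is correct and follows essentially the same approach as the paper: the same five-vertex instance, the same fractional solution with value $5$, and a case analysis (via forced neighbours of $v$) showing every tour has cost at least $6$. You are more explicit than the paper in verifying the cut constraints and enumerating the completing paths, but the argument is the same; one minor wording nit: the four candidate completing paths lie in the complete digraph on $\{s,t,1,4\}$, not in the cost-$1$ subgraph (as you yourself note, each of them uses a cost-$2$ arc).
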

\begin{proof}
Let us consider the instance of Fig.~\ref{fig:simple_gap_directed}.
Clearly, by checking all cuts we can observe that assigning $x_e=1/2$ to each arc yields a feasible solution to \SER and to \SERup of cost $5$.

In any optimal solution, there is at least one arc of cost two.
Otherwise, due to the symmetry, we can assume without loss of generality that an integral solution contains $(v,t)$ and there is only one arc of cost one available from $t$.
Either of the two possible next steps forces any integral solution to use an arc of cost $2$, a contradiction.
Therefore, an optimal integral solution has cost $6$, resulting in an integrality gap of $6/5$.
\end{proof}

\subsubsection{2-Matchings in the Support Graph}
For the argumentation within our proofs, we change the point of view to ``$2$-matchings''.
We define a \emph{$2$-matching} $M$ of an undirected graph $G$ as a subgraph of $G$ such that $\degree{v} \le 2$ for all $v \in V(G)$.
This allows us to talk about components of a 2-matching.
Note that in the literature, the term $2$-matching is sometimes used for perfect $2$-matchings, where all degrees are exactly two~\cite{QSWvZ15,SchalekampEtAl2014}.

For a directed graph $G$, we define a \emph{directed $2$-matching} $M$ to be a subgraph of $G$ such that the in-degree $\indeg{v} \le 1$ and the out-degree $\outdeg{v} \le 1$ for each vertex $v \in V(G)$.
In other words, we use the simpler term ``directed $2$-matching'' for a degree-two bounded $1$-transshipment.

Let $G$ be a graph of order $n$ that forms an instance of \STSP or \ATSP. 
Let $k$ be the minimum number such that there is a (directed) $2$-matching $M$ in $G$ with $k$ components.
Then it is not hard to see that $\Opt(G) = n+k$ if $k \ge 2$.
This observation motivates to focus only on the number of components of $M$.
In particular, we do not consider edges/arcs of cost two when showing our integrality gap upper bounds.
Given the (directed) 2-matching $M$, we now ``improve'' this to a 2-matching $M'$.
To this end, we define a \emph{singleton component} of a $2$-matching as a single vertex without incident edges/arcs.
Then similar to Berman and Karpinski~\cite{BermanKarpinski2006}, an \emph{improvement} of a 2-matching $M$ is a transformation to a $2$-matching $M'$ such that one of the following conditions is satisfied: 
\begin{enumerate}
  \item $M'$ has fewer components than $M$.\label{imp1}
  \item $M'$ has the same number of components as $M$ and more cycles.\label{imp2}
  \item $M'$ has the same number of components and the same number of of cycles as $M$, but more edges in cycles.\label{imp2prime}
  \item $M'$ has the same number of components and cycles as $M$ and the same
number of edges in cycles but fewer singleton components.\label{imp3}
  \item $M'$ has the same number of components and cycles as $M$, the same number of edges in cycles, and the same number of singleton components but fewer
components of size two.\label{imp4}
\end{enumerate}
Any improvement can only be applied linearly often and only revert improvements with higher indices, so there are at most $n^{O(1)}$ improvements in total.

If $G$ is directed, let $M$ be a directed 2-matching of $G$.
A vertex $v$ is a \emph{start vertex} of $M$ if $\delta_M^-(v) = \emptyset$ or the component containing $v$ is a cycle.
A vertex $v$ is an \emph{end vertex} of $M$ if $\delta_M^+(v) = \emptyset$ or the component containing $v$ is a cycle.
The reason to define vertices in cycles to be start/end vertices is that they become path starts/ends by removing an incident arc.

If there is an arc from an end vertex to a start vertex within $M$, simply adding the arc to~$M$ and possibly removing arcs from cycles leads to an improvement, unless both vertices are in one
cycle (that is, we obtain a new 2-matching with fewer components or with more cycles).
We call such improvements \emph{basic}.

For undirected graphs, we do not distinguish between start and end vertices.
Let $M$ be a 2-matching of $G$.
A vertex $v$ is an \emph{end vertex} of $M$ if $v$ has at most one adjacent vertex within its component or the component containing $v$ is a cycle.
Basic improvements are analogous to those of directed graphs, but we simply require an edge between two end vertices instead of an arc from an end vertex to a start vertex.

% !TEX root = Main.tex

\section{Removing Singleton Components}
\label{sec:removingsingletoncomponents}

In this section we show how to reduce the number of components in a 2-matching by removing singleton components.
Let $G$ be an instance of \STSP or \ATSP and let $M$ be a (directed) $2$-matching $M$ in~$G_x$ for some $x \in \SER(n,G)$.
Observe that the subgraph composed of all $1$-edges/arcs of $x$ is a (directed) $2$-matching in $G_x$.

\begin{lemma}
\label{lem:nosingle}
  There is an efficient algorithm that, given a (directed) 2-matching $M$ with a component that is a single vertex, finds a (directed) $2$-matching $M'$ in $G_x$ that improves~$M$.
  If $M$ contains all $1$-arcs/edges of $G_x$, then so does~$M'$.
\end{lemma}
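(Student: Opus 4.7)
The plan is to mimic the augmenting-paths paradigm from matching theory, building an alternating structure rooted at $v$ that ends at another end vertex of $M$ and using it to construct $M'$. By the equality constraint of $\SER$, $x(\delta_G(v)) = 2$ (respectively $x(\delta_G^-(v)) = x(\delta_G^+(v)) = 1$ in the directed case), so $v$ has at least one $G_x$-neighbor. If some $G_x$-neighbor $u$ of $v$ is itself an end vertex of $M$ — that is, a singleton, a path endpoint, or a cycle vertex — then $M' := M \cup \{\{v,u\}\}$, after additionally deleting one incident $M$-edge at $u$ if $u$ lies on a cycle, is a $2$-matching with one fewer component, yielding an improvement of type~\ref{imp1}.

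Otherwise every $G_x$-neighbor of $v$ is internal to a path of $M$. I then run an alternating exploration from $v$ that alternates, along each branch of the search tree, between an edge of $G_x \setminus M$ and an edge of $M$: from each internal vertex reached via a $G_x \setminus M$-edge I follow both of its $M$-edges to their other endpoints, and from each vertex reached via an $M$-edge I follow all its $G_x \setminus M$-edges to previously unvisited vertices. As soon as a $G_x \setminus M$-step lands on an end vertex $u^\star$ of $M$, I trace the alternating path $v, u_1, \dots, u^\star$ through the search tree and form $M'$ as the symmetric difference of $M$ with the edge set of that path. Since each internal pivot of the path gains and loses exactly one edge, $M'$ is still a $2$-matching; a short case analysis shows it is an improvement — at worst $v$ is no longer a singleton (type~\ref{imp3}), with stronger improvements when the path traverses only distinct $M$-components (type~\ref{imp1}) or re-enters some component and thereby creates a cycle (type~\ref{imp2}).

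The main obstacle is proving that the exploration must succeed. Suppose for contradiction that it terminates without finding an end vertex, and let $S$ be the set of all visited vertices. By construction every vertex of $S \setminus \{v\}$ is internal to an $M$-path, and both of its $M$-neighbors lie in $S$; moreover every $G_x \setminus M$-neighbor of $S$ already lies in $S$, otherwise the exploration could extend. Hence $\delta_G(S) \cap E(G_x) = \emptyset$, so $x(\delta_G(S)) = 0$, contradicting the subtour elimination constraint $x(\delta_G(S)) \ge 2$. That $M'$ still contains every $1$-edge of $G_x$ is then free: at each pivot vertex $u$ of the augmenting path, if both $M$-edges incident to $u$ were $1$-edges the equality constraint would already saturate the weight of $x$ at $u$, leaving no $G_x \setminus M$-edge at $u$ for the exploration to have traversed — so at most one such $M$-edge can be a $1$-edge, and we consistently delete the other; the same observation governs the cycle-breaking step of the easy case. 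The directed proof is entirely parallel, with "end vertex" replaced by "start vertex" or "end vertex" as the orientation of each added arc demands.
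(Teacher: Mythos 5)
Your high-level plan — grow an alternating structure from $v$ until it reaches another end vertex, then augment — is indeed the right idea, and it is the same germ as the paper's proof. However, there are three genuine gaps that the paper has to work hard to close and that your write-up glosses over.

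\textbf{The ``improvement'' claim is not established.} Writing $c$ for the number of components, $k$ for the number of cycles and $m$ for the number of edges of a degree-$2$-bounded graph on $n$ vertices, one has $c = n - m + k$. Your augmentation increases $m$ by exactly one, so $c' = c - 1 + (k'-k)$. This is an improvement only when $k' - k \le 1$. But an alternating path in the support can re-enter the same $M$-path $P$ several times, and each connecting edge whose two endpoints fall in the same shattered piece of $P$ closes a fresh cycle. Concretely, take $P = a\,p_1\cdots p_{14}\,b$, a singleton $v$, and suppose $G_x$ contains the connecting edges $\{v,p_1\}, \{p_2,p_4\}, \{p_5,p_7\}, \{p_8,p_{10}\}, \{p_{11},p_{13}\}, \{p_{12},u^\star\}$ with $u^\star$ an end vertex of another path $Q$. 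Your exploration can return the alternating path $v, p_1, p_2, p_4, p_5, p_7, p_8, p_{10}, p_{11}, p_{13}, p_{12}, u^\star$; augmenting along it removes $\{p_1,p_2\},\{p_4,p_5\},\{p_7,p_8\},\{p_{10},p_{11}\},\{p_{12},p_{13}\}$ and creates the \emph{three} triangles $p_2p_3p_4$, $p_5p_6p_7$, $p_8p_9p_{10}$ while going from $3$ to $5$ components — none of the five improvement types holds. Whether the LP constraints forbid this particular support is beside the point: your algorithm returns whatever root-to-leaf path the search finds first, and you have not argued that it finds a \emph{good} one. This is exactly why the paper does not simply take a symmetric difference; it grows $S$ by whole ``pseudo-components'' and explicitly maintains Invariant~\ref{inv:reduce} (after deleting a start vertex from $S$ one can re-partition $S$ into pseudo-components with no singletons), which is what rules out such fragmentations.

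\textbf{The contradiction argument does not give $x(\delta_G(S))=0$.} Your exploration only follows $M$-edges out of vertices reached by connecting edges and only follows connecting edges out of vertices reached by $M$-edges. Consequently, when the search halts, vertices reached by a connecting edge may still have unexplored connecting edges pointing outside $S$, and vertices reached by an $M$-edge may still have their other $M$-edge pointing outside $S$. So $\delta_G(S)\cap E(G_x)=\emptyset$ simply does not follow, and the clean contradiction with $x(\delta_G(S))\ge 2$ is unavailable. The paper confronts the same difficulty and therefore proves the much weaker (and carefully tailored) Invariant~\ref{inv:cut}, $x(\delta^+(S)\cap\delta^+(\startS))\ge 1$, by a degree-counting argument over the start and pseudo-end vertices of $S$; it never claims the whole cut is zero.

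\textbf{Preservation of $1$-edges is asserted, not proved.} You correctly observe that at a pivot $u$ at most one incident $M$-edge can be a $1$-edge (by the equality constraint at $u$). But the $M$-edge that gets deleted at $u$ is the one on the augmenting path, and the search is not shown to be able to always continue along the non-$1$-edge branch and still reach an end vertex. In the paper's argument the removed $M$-arc $e'$ at a pseudo-end $w$ satisfies $x_{e'}<1$ because $x(\delta^-(w))=1$ and the newly added arc into $w$ has positive value; moreover the undirected case needs an extra twist (re-orienting the path so the $1$-edge of a pseudo-component is the backward arc) that your write-up does not anticipate. As it stands, the claim that $M'$ contains all $1$-edges is not supported.

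In short, your proposal captures the augmenting-path intuition but omits precisely the machinery (pseudo-components, the tree of start-alternating paths, Invariants~\ref{inv:reduce}--\ref{inv:cut}) that makes the augmentation provably an improvement, makes the cut bound available, and makes the $1$-edge preservation go through.
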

\begin{proof}
We show by induction that there is a tree of special alternating paths starting from the singleton vertex such that we can grow the tree until we find an improvement.

  First, we show the lemma for directed graphs.
  Afterwards, the analogous result for undirected graphs follows easily.
  We assume that there are no basic improvements of $M$ as otherwise we are done.

  Within the proof, we write $\delta(S)$ as shorthand for $\delta_{G_x}(S)$.
  The basic idea of how to reduce the number of singleton components is as follows.
  Let $v$ be a vertex that forms a component in $M$. 
  If there is a vertex $w \in N^+(v)$ such that $w \in C$ for a component $C$ of $M$ and removing~$\delta^-_M(w)$ does not create a singleton component, we have found a suitable transformation by including~$(v,w)$ and removing $\delta^-_M(w)$.
  Since $x(\delta^-(w)) = 1$, $x(\delta^-_M(w)) < 1$ and thus we do not remove a $1$-arc.
  However, in general we need a sequence of transformations in order to ensure that the removal does not create a singleton component.
  In the following, we show how to find a sequence of such transformations.

  We show the claim of the lemma by induction on the size of a certain set~$S$ of vertices.
  Initially,~$S$ only contains $v$.
  A \emph{pseudo-component} $C'$ is a path of length one such that $C'$ is contained in a path~$P$ of~$M$, where $C'$ contains the start-vertex of $P$.
  In particular, a path of length one in $M$ is at the same time a pseudo-component.
  Let $\startS$ be the sets of start vertices in $S$, and $\pendS$ the set of pseudo-end vertices in $S$, \ie, ends of pseudo-components.

  We use a special type of alternating path. An arc in the path is a forward arc, if it is oriented towards the end and a backward arc if it is oriented towards the start.
  An end vertex is the end of a path in $M$ or any vertex of a cycle in $M$ (since by removing one arc, it becomes the end of a path).
  A sequence of arcs $Q$ is an \emph{alternating path of $M$} if 
  \begin{itemize}
    \item $Q$ is a path with alternating orientations of its arcs, \ie, for three vertices $\{z,y,z'\}$ such that $z$ and~$z'$ are adjacent to $y$ in $Q$,  the arcs are either $(z,y),(z',y)$ or $(y,z),(y,z')$;
    \item $Q$ starts with a tail of an arc at an end vertex;
    \item no forward arc of $Q$ is in $E(M)$;
    \item all backward arcs of $Q$ are in $E(M)$.
  \end{itemize}
  For a given set of vertices $S$, we say that $Q$ is a \emph{start-alternating path of $(M,S)$}, if additionally
  \begin{itemize}
    \item all vertices of $Q$ are in $S$;
    \item all tails of arcs in $Q$ are in \startS;
    \item all heads of arcs in $Q$ are in \pendS;
    \item the path $Q$ starts with $v$ as tail;
    \item the path $Q$ ends with the tail of an arc.
  \end{itemize} 
  We want $S$ to maintain the following invariants (see Fig.~\ref{fig:invariants}):
  \begin{enumerate}
    \item $S$ contains only $v$ and whole (vertex sets of) pseudo-components.\label{inv:whole}
    \item There is a collection $\mathcal{Q}$ of start-alternating paths of $(M,S)$ such that each vertex of $S$ is contained in at least one of the paths and the paths form a tree $T$ with $v$ as root such that all vertices in $\pendS$ have a degree of at most two within $T$.\label{inv:atree}
    \item There is no $1$-arc in the subgraph of $G$ induced by $S$.\label{inv:noone}
    \item If the total number of pseudo-components in $S$ is $k$ and we remove an arbitrary start vertex from $S$, there is a decomposition of the remaining vertices in $S$ into $k-1$ pseudo-components such that there are no singleton components.\label{inv:reduce}
    \item There is no arc $(s',s'')$ for any $s',s'' \in \startS$. \label{inv:start}
    \item $x(\delta^+(S) \cap \delta^+(\startS)) \ge 1.$ \label{inv:cut} 
  \end{enumerate}

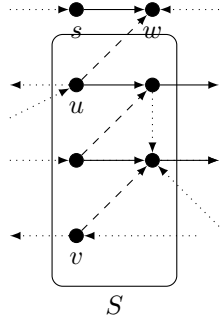
\begin{figure}[tb]
    \centering
      \centering
      \begin{tikzpicture}[->,scale=1]
        \node (v)  at (-3,0) [vertex, label=below:$v$]{};
        \node (s1) at (-3,1) [vertex]{};
        \node (s2) at (-3,2) [vertex, label=below:$u$]{};
        \node (s3) at (-3,3) [vertex, label=below:$s$]{};
        \node (e1) at (-2,1) [vertex]{};
        \node (e2) at (-2,2) [vertex]{};
        \node (e3) at (-2,3) [vertex, label=below:$w$]{};
        \node (l1) at (-1,0) {};
        \node (l1p)at (-1.3,0) {};
        \node (l2) at (-1,1) {};
        \node (l3) at (-1,2) {};
        \node (l4) at (-1,3) {};
        \node (r1) at (-4,0) {};
        \node (r2) at (-4,1) {};
        \node (r3) at (-4,2) {};
        \node (r3-1) at (-4,1.5) {};
        \node (r4) at (-4,3) {};
        \node (dummy) at (0,-1){};
        \draw[>=latex,dashed] (v) to node[midway,below] {} (e1);
        \draw[>=latex] (s1) -> (e1);
        \draw[>=latex, dashed] (s1) to node[midway,right] {}  (e2);
        \draw[>=latex] (s2) -> (e2);
        \draw[>=latex, dashed] (s2) to node[midway,right] {}  (e3);
        \draw[>=latex] (s3) -> (e3);
        \draw[>=latex, dotted] (l1) to node[midway,below] {} (e1);
        \draw[>=latex] (e1) -> (l2);
        \draw[>=latex] (e2) -> (l3);
        \draw[>=latex, dotted] (e2)to node[midway,right] {}  (e1);
        \draw[>=latex, dotted] (s2) to node[midway,below] {}  (r3);
        \draw[>=latex, dotted] (r2) to node[midway,below] {}  (s1);
        \draw[>=latex, dotted] (r3-1) to node[midway,below] {}  (s2);
        \draw[>=latex, dotted] (r4) to node[midway,below] {}  (s3);
        \draw[>=latex, dotted] (v) to node[midway,below] {} (r1);
        \draw[>=latex, dotted] (s1) -> (l2);
        \draw[>=latex, dotted] (l1p) to node[midway,below] {} (v);
        \draw[>=latex, dotted] (l4) to node[midway,below] {} (e3);
        \node[rectangle, draw, text width=4em, text centered, rounded corners, minimum height=9.5em] (S) at (-2.5,1) [label=below:$S$]{};
    \end{tikzpicture}
    \caption{
        \label{fig:invariants}
        Example of a set $S$ satisfying the invariants. 
        The solid arcs are arcs from $M$. 
        The path formed by dashed and solid arcs from $v$ to $u$ within $S$ forms a valid start-alternating path. The dotted arcs depict some types of the possible arcs of $G_x$ that are not in $M$.
        }
\end{figure}

  We show that we can either increase the size of $S$ or we find an improvement.
  Clearly, $S=\{v\}$ satisfies all invariants.

  Now suppose that $S$ contains $k$ pseudo-components.
  Then, by Invariant \ref{inv:cut}, there is an arc $e=(u,w)$ such that $u \in \startS$ and $w \notin S$.
  If~$w$ is a start vertex not in $S$, we add the arc $(u,w)$ and remove $u$ from $S$.
  By the induction hypothesis, this allows us to obtain $k-1$ pseudo-components in $S$ and, since pseudo-components are 
  starts of components in $G_x$, we are done.
  Thus we assume that $w$ is not a start vertex and therefore a vertex of a path~$P$ in $M$.
  If the arc $e' \in \delta_M^-(w)$ does not contain the start vertex of $P$,  we are also done: since $x(\delta^-(w))=1$ and $x_{(v,w)} > 0$, $x_{e'} < 1$ and the component of~$P$ without $e'$ that does not contain $w$ has at least two vertices.
  Therefore we may add $e$ to $M$, remove $e'$, and apply the induction hypothesis.

  The remaining possibility is that $w$ is a pseudo-end vertex of a pseudo-component.
  We claim that including the two vertices of $e'$ into $S$ still satisfies the invariants or we have found an improvement.
  Let $S' = S \cup \{w,s\}$, where $s$ is the second vertex of $e'$.

  Clearly, Invariant \ref{inv:whole} is satisfied.
  Similarly, Invariant \ref{inv:atree} follows easily since the two arcs $(u,w)$ and $(s,w)$ extend a start-alternating path or a sub-path.
  In the following, $\mathcal{Q}'$ is the set of start-alternating paths in $S'$, obtained from $\mathcal{Q}$ by adding the start-alternating path from $v$ to $s$ by extending a sub-path in $\mathcal{Q}$ (and therefore the union of paths in $\mathcal{Q}'$ is a tree again).
  Furthermore, $x_{(u,w)}<1$, since otherwise $w$ is a start vertex and we obtain an improvement by applying the induction hypothesis.
  Since $x_{(u,w)}>0$, $x_{(s,w)}<1$ which implies that Invariant \ref{inv:noone} is satisfied.
  To show that Invariant~\ref{inv:reduce} is valid for $S'$, let $k$ be the number of components in $S$ and thus $k+1$ is the number of components in $S'$.
  If we remove $s$, the component obtained by adding the arc $(u,w)$ with $u$ as start vertex has at least two vertices and by the induction hypothesis, removing $u$ from~$S$ leads to $k-1$ remaining components, that is, there are $k$ components left in $S'$.
  Removing another start vertex of $S$ also leads to~$k$ components by applying the induction hypothesis to~$S$ and keeping $(s,w)$ as separate component.

  For the Invariant~\ref{inv:start}, we show that the lemma follows directly if there are arcs from~$s$ to start vertices in $\startSp$.
  Suppose there is an arc $a = (s,s') \in \delta^+(\{s\},\startSp)$ for some vertex $s' \in \startSp$.
  Let~$Q$ be the start-alternating path in $\mathcal{Q}'$ that ends in $s$.
  Starting from~$v$, path~$Q$ has alternating forward- and backward arcs.
  We modify $M$ and obtain a $2$-matching~$M'$ by including all forward arcs of $Q$ and removing all backward arcs of $Q$.
  Additionally we include the arc $(s,s')$.
  Thus the number of added arcs is larger than the number of removed arcs and therefore it is sufficient to show that we did not create a cycle or a new singleton component. 
  There are no new cycles in $M'$, because all vertices except $v$ and $s'$ have the same degree in $M$ and $M'$, every second vertex of $Q$ is of degree one, and both the component containing $v$ and the one containing $s'$ are paths in $M'$.
  We did not create singleton components because each vertex in $Q$ has an incident arc in $M'$.
  The vertices not in $Q$ keep all of their incident arcs of~$M$.

  Finally, we show Invariant~\ref{inv:cut}. Let $k$ be the number of pseudo-components contained in $S'$. 
  Since $v$ is not part of a pseudo-component, $|\startSp| = k+1$ and thus, due to Invariant~\ref{inv:start}, $x(\delta^+(\startSp)) = k+1$.
  Note that $S' \setminus \startSp = \pendSp$ and $|\pendSp| = k$. 
  Therefore $x(\delta^-(\pendSp)) \le k$ and accordingly $x(\delta(\startSp,\pendSp)) \le k$. 
  We obtain that 
  \[
  x(\delta^+(S') \cap \delta^+(\startSp)) = x(\delta^+(\startSp) \setminus \delta^-(\pendSp)) \ge (k+1) - k = 1 \enspace .
  \]
  For an undirected graph $G$, we generate a directed graph $\overrightarrow{G}$ by replacing each edge $e=\{u,v\}$ by the two arcs $(u,v)$ and $(v,u)$.
  For a given solution $x$ to $\SER(n,G)$, we define $\overrightarrow{x}_{(u,v)} = \overrightarrow{x}_{(v,u)} = x_e/2$.
  It is not hard to check that if $x$ is a solution in $\SER(n,G)$, then $\overrightarrow{x}$ is a solution in $\SER(\overrightarrow{G},n)$.
  In $\overrightarrow{G}$, we can orient any path of $G$ in any direction.
  In particular, we can transform a given $2$-matching~$M$ in $G_x$ into a directed $2$-matching $\overrightarrow{M}$ in $\overrightarrow{G}_{\overrightarrow{x}}$ without changing the number of components, cycles, edges in cycles, or singleton components.
  Now we apply the already proved directed version of Lemma~\ref{lem:nosingle} to $\overrightarrow{G}$  and obtain an improved matching $\overrightarrow{M'}$. 
  If we apply the reverse transformation and replace each 2-cycle of $\overrightarrow{G}$ by a simple edge, we obtain an improved matching $M'$ in $G_x$.
  In order to include all $1$-edges into $M'$, we have to take some extra care.
  Note that since we do not have to distinguish between start and end vertices, the only possibility of a $1$-edge to be in a start-alternating path is that there is a pseudo component with a $1$-edge that belongs to a path in $M$ of length at least $2$.
  However, not both of the edges of the path can be $1$-edges and therefore, by reversing the orientation of all such paths, we avoid the problem.
\end{proof}

% !TEX root = Main.tex
\section{Alternating Paths}
\label{sec:alternating}
\label{sec:subcubicfractionallyhamiltoniansupports}
We now define undirected alternating paths.
A similar concept of alternating paths was used by Berman and Karpinski~\cite{BermanKarpinski2006}, but the details in our approach are different.
Let $G$ be an instance of \STSP, let $x$ be a solution to $\SERup(G)$, and let~$M$ be a $2$-matching in the support graph $G_x$.
We call edges in~$E(M)$ \emph{matching edges} and edges in $E(G_x) \setminus E(M)$ \emph{connecting edges} as they connect vertices that are not adjacent
within $M$.
A path $Q$ in $G_x$ with end vertices $s,t$ is \emph{alternating} if
\begin{enumerate}
  \item $s$ is an end vertex of $M$ and $s$ is incident to a connecting edge in $Q$;\label{alt:start}
  \item $t$ is an end vertex of $M$ and $t$ is incident to a connecting edge in $Q$;\label{alt:end}
  \item no internal vertex of $Q$ (\ie, in $V(Q) \setminus \{s,t\}$) is in a cycle of $M$;\label{alt:int-cycle}
  \item if $Q=\{s,t\}$, $s$ and $t$ are not both in the same cycle;\label{alt:no-cord}
  \item in $Q$, matching edges and connecting edges alternate.\label{alt:alt}
\end{enumerate}

Note that possibly $s=t$; then $s$ is incident to at least two connecting edges of~$Q$.
\newcommand{\inward}{inward\xspace}
An alternating path~$Q$ is \emph{\inward} if for any connecting edge $\{u,v\}$ in~$Q$,
\begin{enumerate}\addtocounter{enumi}{5}
  \item if $u,v$ are both in the same path~$P$ of $M$, either $u$ is an end vertex of~$Q$ or $\{u,v\},\{u,u'\}$ are consecutive edges of~$Q$, where $u'$ is the vertex adjacent to $u$ in $Q$ such that sub-path of~$P$ between $u$ and $v$ contains~$u'$ (see Fig.~\ref{fig:inwardexample}(a)).\label{alt:inward}
\end{enumerate}
\begin{figure}[tb]
  \centering
  \begin{tikzpicture}[scale=0.8]
    \node (u1) at (0,2) [vertex, label=below:$u$]{};
    \node (u2) at (1,2) [vertex]{};
    \node (u3) at (2,2) [vertex, label=below right:$u'$]{};
    \node (u4) at (3,2) [vertex, label=below:$v$]{};
    \node (u5) at (4,2) [vertex]{};
    \node (u6) at (5,2) [vertex]{};
    \node (m1) at (4,1) [vertex]{};
    \node (m2) at (5,1) [vertex]{};
    \node (l1) at (0,0) [vertex]{};
    \node (l2) at (1,0) [vertex]{};
    \node (l3) at (2,0) [vertex]{};
    \node (l4) at (3,0) [vertex]{};
    \node (l5) at (4,0) [vertex]{};
    \node (l6) at (5,0) [vertex]{};
    \node at (2.5,-1) {(a)};
    \draw[ultra thick] (u1) edge[bend left] (u4);
    \draw[ultra thick] (u4) edge[] (u3);
    \draw[ultra thick] (u3) edge[] (l3);
    \draw[ultra thick] (l3) edge[] (l2);
    \draw[ultra thick] (l2) edge[bend right] (l5);
    \draw[ultra thick] (l5) edge[] (l4);
    \draw[ultra thick] (l4) edge[] (m1);
    \draw[dashed] (u1) edge[] (u2);
    \draw[dashed] (u2) edge[] (u3);
    \draw[dashed] (u3) edge[] (u4);
    \draw[dashed] (u4) edge[] (u5);
    \draw[dashed] (u5) edge[] (u6);
    \draw[dashed] (m1) edge[] (m2);
    \draw[dashed] (l1) edge[] (l2);
    \draw[dashed] (l2) edge[] (l3);
    \draw[dashed] (l3) edge[] (l4);
    \draw[dashed] (l4) edge[] (l5);
    \draw[dashed] (l5) edge[] (l6);
  \end{tikzpicture}
  \hspace{2cm}
  \begin{tikzpicture}[scale=0.8]
    \node (u1) at (0,2) [vertex,label=below:$u$]{};
    \node (u2) at (1,2) [vertex]{};
    \node (u3) at (2,2) [vertex, label=below:$v$]{};
    \node (u4) at (3,2) [vertex, label=below right:$u'$]{};
    \node (u5) at (4,2) [vertex]{};
    \node (u6) at (5,2) [vertex]{};
    \node (m1) at (4,1) [vertex]{};
    \node (m2) at (5,1) [vertex]{};
    \node (l1) at (0,0) [vertex]{};
    \node (l2) at (1,0) [vertex]{};
    \node (l3) at (2,0) [vertex]{};
    \node (l4) at (3,0) [vertex]{};
    \node (l5) at (4,0) [vertex]{};
    \node (l6) at (5,0) [vertex]{};
    \node at (2.5,-1) {(b)};
    \draw[ultra thick] (u1) edge[bend left] (u3);
    \draw[ultra thick] (u4) edge[] (u3);
    \draw[ultra thick] (u4) edge[] (l3);
    \draw[ultra thick] (l3) edge[] (l2);
    \draw[ultra thick] (l2) edge[bend right] (l5);
    \draw[ultra thick] (l5) edge[] (l4);
    \draw[ultra thick] (l4) edge[] (m1);
    \draw[dashed] (u1) edge[] (u2);
    \draw[dashed] (u2) edge[] (u3);
    \draw[dashed] (u3) edge[] (u4);
    \draw[dashed] (u4) edge[] (u5);
    \draw[dashed] (u5) edge[] (u6);
    \draw[dashed] (m1) edge[] (m2);
    \draw[dashed] (l1) edge[] (l2);
    \draw[dashed] (l2) edge[] (l3);
    \draw[dashed] (l3) edge[] (l4);
    \draw[dashed] (l4) edge[] (l5);
    \draw[dashed] (l5) edge[] (l6);
  \end{tikzpicture}
  \caption{Example of (a) an \inward alternating path and (b) a violation of the inward property.}
\label{fig:inwardexample}
\end{figure}
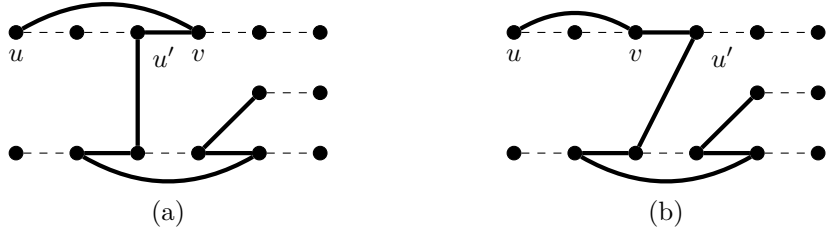

The definition of inward alternating paths uses that $\{u,v\} = \{v,u\}$ and thus the role of $u$ and $v$ can be exchanged. 
The intuition is that the path proceeds to the inside of cycles created by a connecting edges within a single path of $M$.
Suppose $u$, $v$, $Q$, and $P$ are as in property \ref{alt:inward} and $u'$ is adjacent to $u$ in~$Q$, but~$u'$ is not in the sub-path of $P$ between $u$ and $v$.
Then we say that $u$, $u'$, $v$, $\{u,u'\}$, and $\{u,v\}$ are \emph{involved} in the violation of the \inward property (see Fig.~\ref{fig:inwardexample}(b)).
The path $Q$ is a \emph{truncated alternating path} if it has the properties \ref{alt:start} and \ref{alt:int-cycle}--\ref{alt:alt}, but property \ref{alt:end} is violated.

A pair of vertices $s,t$ in a cycle $C$ is \emph{path-forming} if there is a path $P_{st}$ from~$s$ to $t$ in $G_x$ that contains all vertices of $C$ and no vertex of $V(G_x) \setminus V(C)$ (but~$P_{st}$ may use edges outside~$C$).
We say that we \emph{apply} an alternating path~$Q$ if we add all connecting edges of $Q$ to $M$ and remove all matching edges of $Q$.
If~$s$ is an end vertex of $Q$ and~$s$ is contained in a cycle $C$ of $M$, we additionally remove an edge incident to~$s$ in $C$ from $M$.
If both end vertices $s,t$ of $Q$ are in~$C$ such that $s$ and $t$ are path-forming in $C$, instead of removing two edges we replace the cycle by a path from $s$ to $t$ that visits all vertices of $C$.
If $Q$ starts and ends in the same vertex~$s$, we remove all edges outside $Q$ that are incident to $s$.

\begin{observation}
\label{appobs:alternatingend}
  Let $Q$ be an alternating path with two end vertices $s \neq t$ for a $2$-matching $M$ such that~$s$ and $t$ are end vertices of paths in $M$, and let $M'$ be  the $2$-matching obtained from applying $Q$ to $M$.
  Then all end vertices of paths in $M$ except $s$ and $t$ are also end vertices of paths in~$M'$.
\end{observation}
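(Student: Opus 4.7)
The plan is to track the $M$-degree of each vertex $v \neq s, t$ when applying $Q$ and argue that this degree is preserved, so that any vertex which was an end of a path in $M$ remains an end of a path in $M'$.

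First, I would observe that since $s$ and $t$ are end vertices of \emph{paths} in $M$ (and $s \neq t$), neither of them lies on a cycle of $M$. By the definition of \emph{applying} $Q$ given in the preceding paragraph, the operation then reduces to removing all matching edges of $Q$ from $M$ and adding all connecting edges of $Q$; no additional cycle edge is deleted, the path-forming clause does not trigger, and the $s=t$ clause does not trigger.

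Second, I would fix any vertex $v \neq s, t$ and split into two cases. If $v \notin V(Q)$, then no edge incident to $v$ is touched, so the $M'$-degree of $v$ equals its $M$-degree. If $v \in V(Q) \setminus \{s, t\}$, then $v$ is internal to $Q$, and by the alternation property (property~5 of alternating paths) its two incident edges in $Q$ consist of exactly one matching edge and one connecting edge. Applying $Q$ therefore removes one $M$-edge at $v$ and adds another, so once more the $M'$-degree of $v$ equals its $M$-degree.

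Third, I would conclude: if $v$ is an end vertex of a path in $M$, then its $M$-degree is at most one, hence its $M'$-degree is also at most one. Since every vertex of a cycle component in a $2$-matching has degree exactly two, the component of $v$ in $M'$ cannot be a cycle. Thus the component of $v$ in $M'$ is a path and $v$ is one of its end vertices.

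The only point requiring genuine care is the first step: verifying that, under the hypotheses of the observation, none of the special clauses in the definition of applying an alternating path are triggered, so that the operation really does reduce to a clean swap of matching and connecting edges along $Q$. Once this reduction is secured, the degree-preservation argument together with the forbidden-cycle observation makes the conclusion immediate.
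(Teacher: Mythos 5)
Your proof is correct and follows essentially the same route as the paper's: the key point in both is that every vertex $v\notin\{s,t\}$ has its $M$-degree preserved by the edge swap along $Q$ (untouched if $v\notin V(Q)$; one matching edge out, one connecting edge in if $v$ is internal to $Q$), so an end vertex of a path remains an end vertex of a path. You are slightly more explicit than the paper in first checking that the cycle-removal, path-forming, and $s=t$ clauses of "applying $Q$" do not fire under the hypotheses, which is a reasonable bit of extra care rather than a divergence in method.
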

\begin{proof}
  For any vertex $u \notin \{s,t\}$ in a path of $M$, when applying $Q$ to $M$, the number of edges incident to~$u$ that are added matches the number of removed incident edges. In particular, no incident edge of~$u$ is changed if $u$ is not in $Q$.
  After applying $Q$, both $s$ and $t$ are vertices of degree two. 
\end{proof}

\begin{lemma}
\label{lem:alternating}
  Let $Q$ be an alternating path in $G_x$ with end vertices $s \neq t$ for a $2$-matching $M$.
  We can find an improved $2$-matching $M'$ if
  \begin{enumerate}[label=(\alph*)]
    \item $Q$ has length less than $3$; or \label{imp:one}
    \item $Q$ has length less than $5$ and both $s,t$ are end vertices of different paths; or \label{imp:three}
    \item $Q$ has length less than $7$, both $s,t$ are end vertices of paths in $M$, the vertices of $Q$ are not all in one path $P$ of $M$, and neither the first edge nor the last edge of $Q$ is
involved in a violation of the \inward property; or \label{imp:five}\label{imp:path}
  \item $Q$ is an \inward alternating path of length less than $7$ and if $s$ and $t$ are in one cycle $C$, they are path-forming in $C$; or   \label{imp:cycle} \label{imp:same}
  \item $Q$ has length less than $7$, both $s$ and $t$ are end vertices of one path $P$ in $M$, and neither the first edge nor the last edge of $Q$ is involved in a violation of the \inward property and there is a truncated \inward alternating path of length less than 5 from an end vertex of a path $P' \neq P$ to $P$.\label{imp:special}
\end{enumerate}
\end{lemma}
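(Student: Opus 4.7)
My plan is to handle the five cases uniformly by applying the alternating path $Q$ (with the endpoint bookkeeping prescribed just before the lemma) and checking which criterion of the five-part improvement definition is met by the resulting $M'$. The central structural fact driving the argument is that, because $Q$ starts and ends with a connecting edge and strictly alternates, it has odd length $\ell\in\{1,3,5\}$ and contains $(\ell+1)/2$ connecting edges and $(\ell-1)/2$ matching edges; hence applying $Q$ increases $|E(M)|$ by exactly one. Every internal vertex of $Q$ swaps one matching edge for one connecting edge, so its $M$-degree is preserved; the endpoints $s,t$ each gain one incident edge, and if either lies on a cycle of $M$ the prescribed removal of one incident cycle edge restores the degree bound. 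Thus $M'$ is automatically a valid $2$-matching, and the improvement type is determined by counting components and, when the count is unchanged, comparing cycles, edges in cycles, singletons, and $2$-vertex components between $M$ and $M'$.

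Case~(a) forces $\ell=1$, since a length-$2$ alternating path cannot have both endpoints incident to connecting edges, so $Q=\{s,t\}$ is a single connecting edge. A short enumeration over the endpoint types (paths versus cycles, with the same-cycle case ruled out directly by the definition of alternating path) shows that adding $\{s,t\}$, together with an incident cycle-edge removal when applicable, either merges two components, converts a path into a cycle, or eliminates a singleton. Case~(b) additionally admits length~$3$: for $Q=s,v_1,v_2,t$ with matching edge $\{v_1,v_2\}\in E(P)$, removing $\{v_1,v_2\}$ splits $P$ into two subpaths, one attached by $\{s,v_1\}$ to the path $P_s\ni s$ and the other by $\{v_2,t\}$ to $P_t\ni t$. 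Since $P_s\neq P_t$ by hypothesis, a brief case split on whether $P$ equals $P_s$, $P_t$, or is a third path shows the component count strictly decreases in every subcase.

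Cases~(c), (d), (e) run the same rewrite at lengths~$3$ and~$5$, but the inward restriction and, in~(d), the path-forming hypothesis are needed to avoid three concrete traps. First, a connecting edge of $Q$ whose endpoints both lie on one path of $M$, combined with the adjacent matching edge along the wrong side of that path, would strand an off-$Q$ vertex with no remaining matching neighbor after the swap---this is exactly the inward-violation configuration, so forbidding it at the first and last edges of $Q$ guarantees $M'$ has no new singletons. Second, in~(d) when $s,t$ lie on a common cycle $C$, replacing $C$ by an $s$-to-$t$ path preserves all vertices of $C$ only when $s,t$ are path-forming in $C$; otherwise some vertices of $C$ become singletons, defeating the improvement. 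Third, in~(e) the endpoints share a path, so applying $Q$ alone would create a cycle rather than decrease the component count; the auxiliary truncated inward path from a different path $P'$ is spliced in so that the single swap attaches $P'$ to the rearranged pieces and yields a net reduction in components. The main obstacle throughout is the length-$5$ bookkeeping, especially when both matching edges of $Q$ sit in one path of $M$: one has to verify that no internal vertex of $Q$ accidentally loses both its matching neighbors under the swap, and the inward condition is exactly the tool that certifies this. Once each topological subcase has been dispatched in this way, the assignment to the appropriate improvement criterion is routine.
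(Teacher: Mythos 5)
Your high-level plan---apply $Q$, track the change in $|E(M)|$, and then compare components, cycles, etc.---matches the paper's strategy. But the reason you give for why the inward and path-forming hypotheses matter is not the right one, and the counting step that actually carries the burden is missing. No internal vertex of $Q$ can lose both of its $M$-edges: $Q$ alternates and visits each vertex once, so every vertex is incident to at most one matching edge of $Q$, hence at most one incident $M$-edge is removed, and vertices outside $Q$ lose no edges at all. Singletons therefore never appear when applying $Q$, so this is not what the inward condition controls. What actually goes wrong without inward is that a connecting edge $\{u,v\}$ of $Q$ with both ends in one path $P$, paired with the adjacent matching edge of $Q$ on the \emph{wrong} side of $P$, closes a new cycle in $M'$ that contains only that single connecting edge of $Q$; since $|E(M')|=|E(M)|+1$ in the path-endpoint cases, having two new cycles forces the component count to rise. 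The paper's proof bounds the number of connecting edges of $Q$ inside each new cycle of $M'$ and uses the inward condition to exclude cycles with a single connecting edge at the first/last edge of $Q$, then rules out the remaining ``one cycle with one connecting edge plus one cycle with two'' pattern by a separate argument. Your proposal never performs this count, so it does not exclude the bad configurations.

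There are also two concrete slips. In case~(b) the component count does not ``strictly decrease in every subcase'': if the matching edge $\{v_1,v_2\}$ lies in $P_s$ and $v_1$ is the endpoint of that edge nearer to $s$, then adding $\{s,v_1\}$ closes a cycle on the initial segment of $P_s$ while $\{v_2,t\}$ merges the remainder with $P_t$; the component count is unchanged and one more cycle appears, an improvement of a different kind. And $|E(M')|=|E(M)|+1$ holds only when neither $s$ nor $t$ lies on a cycle of $M$; once an endpoint is on a cycle the prescribed removal of an incident cycle edge cancels the gain, and case~(d) requires a case split on $|E(M')|$ that your sketch elides (there the path-forming hypothesis exists to avoid splitting $C$ into two paths, not to avoid singletons).
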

\begin{proof}
  We explore the number of cycles that can occur if the different types of alternating paths are applied.

  Suppose condition \ref{imp:one} is satisfied.
  By the definition of alternating paths, the length of $Q$ is odd.
  Therefore, $Q$ is a single edge between two end vertices and, due to property~\ref{alt:no-cord} in the definition of alternating paths, there is a basic improvement.
  That is, applying $Q$ decreases the number of components or, if $s$ and~$t$ are the two end vertices of one path, $M'$ has the same number of components and an increase in the number of cycles.

  For the remaining conditions, let us consider the following situation.
  Suppose there is a cycle~$C'$ in~$M'$ that is not in $M$.
  Then $C'$ is composed of edges in $E(M)$ and connecting edges in $E(Q)$.
  In particular,~$C'$ contains at least one connecting edge of $Q$. 
  This insight allows us to analyze the number of newly created cycles with respect to the number of connecting edges of $Q$.
  We will also use frequently that, unless $s$ or~$t$ is contained in a cycle of $M$, applying $Q$ increases the number of edges.

  Suppose condition \ref{imp:three} is satisfied.
  Then $|E(M')| > |E(M)|$ and the length of $Q$ is at most three.
  Due to condition \ref{imp:one}, we only have to consider a length of exactly three and thus $Q$ has exactly two connecting edges.
  At least one of these is between two different paths of $M$ and thus it cannot be the only connecting edge of a cycle.
  Therefore either the number of cycles increases by one (and the number of components stays the same) or the number of components decreases.

  Suppose condition \ref{imp:five} is satisfied, and therefore there are at most three connecting edges in~$Q$.
  We have $|E(M')| = |E(M)|+1$ and the statement of the lemma holds, unless the number of components is increased.
  By contradiction, let us assume that that $M'$ has more components than $M$.
  We note that the first and last edge of~$Q$ are not contained in cycles of $M'$ that only contain one connecting edge of~$Q$:
  either such an edge $e$ of~$Q$ spans between two components of~$M$ or the inward property ensures that there is a matching edge on the path of $M$ between the two ends of $e$. 
  In both cases a possible cycle of $M'$ contains at least two connecting edges.
  Thus the only possibility to increase the number of components is that we obtain one cycle with only one connecting edge and one cycle with two connecting edges such that the middle edge of $Q$ closes a cycle $C$ in a path $P'$ of $M$.
  As a result, the second cycle has to include all remaining edges of~$P'$ since there are only two connecting edges of $Q$ left and they have to be incident to $s$ and $t$. 
  Since by our assumption not both $s$ and $t$ are in $V(P')$, we conclude that the second cycle cannot be formed and thus $M'$ cannot have more components than $M$.

  Suppose condition \ref{imp:cycle} is satisfied.
  If $|E(M')| > |E(M)|$, the application of $Q$ did not remove any edges in cycles of $M$ (\ie, $s$ and $t$ are ends of paths in $M$).
  If $V(Q) \subseteq V(P)$ for some path~$P$ in $M$, the inward property enforces the direction of every edge of $Q$ and as a result, applying $Q$ transforms~$P$ into a single cycle. 
  If $V(Q)$ contains vertices not in $P$, condition \ref{imp:path} is satisfied and we are done.

  If $|E(M')| \le |E(M)|$, at least one end of $Q$ is contained in a cycle $C$ of $M$.
  Since after the application of $Q$ all vertices of $C$ are still in one component of $M'$, either $C$ is entirely included in a cycle of $M'$ or~$C$ is entirely included in a path of $M'$.
  This implies that unless $M'$ has a cycle that is not in $M$, the number of components is decreased.

  Due to the matching edges enforced by the inward property, we can exclude that in $M'$ there is a cycle $C'$ that is not in $M$ and contains only one connecting edge of $Q$.
  In particular, we may exclude that the middle edge of $Q$ is contained in a cycle of $M'$.
  We conclude that the only possibility to have a new cycle in~$M'$ 
  is that both $s$ and $t$ are in $C$.
  However, then the total number of edges in cycles is increased without changing the number of components or cycles.

  Suppose condition \ref{imp:special} is satisfied.
  If the length of $Q$ is at most three, applying~$Q$ results in a single cycle and we are done.
  Otherwise applying $Q$ creates up to two cycles.
  Additionally, the truncated \inward alternating path until the first vertex within $P$ becomes an \inward alternating path of length less than~$5$ to an end vertex of a cycle.
  By the proof of condition \ref{imp:cycle}, the number of components is reduced again and thus in total the number of components stays the same whereas the number of cycles increases.
\end{proof}

\begin{lemma}
\label{lem:alternatingclosed}
  Let $Q$ be an alternating path with respect to $M$ of length less than~$7$ whose both ends are the same end vertex $s$ of a path $P$ in $M$ and let $u$ be the vertex adjacent to $s$ in $M$.
  There is an improved $2$-matching~$M'$ if
  \begin{enumerate}[label=(\alph*)]
    \item $u$ is adjacent to an end vertex $t$ in another  path of $M$; or \label{imp:closed1}
    \item $\{s,u\}$ forms a component and there is an alternating path $Q'$ with respect to $M$ of length less than 7 that has both its ends in $u$\label{imp:closed3}
  \end{enumerate}
\end{lemma}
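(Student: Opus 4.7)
The plan is to handle both cases by applying the given closed alternating path(s) to $M$ and then exploiting the additional structural data to obtain a transformation that reduces the number of components, analogous to the argumentation in Lemma~\ref{lem:alternating}.

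For case~\ref{imp:closed1}, I would first apply the closed alternating path $Q$ to $M$ to obtain an intermediate $2$-matching $M_1$. Because $Q$ both starts and ends at $s$ with connecting edges, and $\{s,u\}$ is a matching edge, $\{s,u\}$ cannot lie in $Q$; hence by the rule for applying closed alternating paths, $\{s,u\}$ is removed. After this, $s$ has two connecting edges of $Q$ as its two matching-neighbours in $M_1$ (so $s$ is now an internal vertex of its component), while $u$ loses its matching-edge to $s$ and therefore becomes an end vertex of a path in $M_1$. The key claim to establish, by a short case analysis in the spirit of the proof of Lemma~\ref{lem:alternating}, is that applying a closed alternating path of length less than~$7$ at an endpoint of a path does not increase the number of components: every newly created cycle would need to use at least two connecting edges of $Q$, and the length bound restricts the ways in which this can occur. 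Once $u$ is an end vertex of a path in $M_1$ and $t$ is an end vertex of a different path, the edge $\{u,t\}$ is a basic improvement and yields the desired $M'$ with strictly fewer components than $M$.

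For case~\ref{imp:closed3}, I would apply both $Q$ and $Q'$ to $M$ simultaneously. Since $\{s,u\}$ is a 2-vertex component of $M$, the matching edge $\{s,u\}$ is neither in $Q$ (its two edges at $s$ are connecting) nor in $Q'$ (its two edges at $u$ are connecting), and $Q$ and $Q'$ share no vertices with each other's relevant connecting edges because their interiors stay in paths of $M$ distinct from the component $\{s,u\}$. Applying $Q$ removes $\{s,u\}$ and places $s$ with degree~$2$ into another component; applying $Q'$ symmetrically places $u$ with degree~$2$ into another component. Thus the 2-vertex component $\{s,u\}$ disappears and no new component is created, decreasing the total component count by at least one. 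As in case~\ref{imp:closed1}, the analysis reduces to verifying that neither $Q$ nor $Q'$, of length less than~$7$, introduces spurious components; since the two operations act on disjoint local structures (apart from the single edge $\{s,u\}$ they both aim to delete), their effects compose cleanly.

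The main obstacle in both cases is precisely this component-bookkeeping: showing that applying a closed alternating path of bounded length at an endpoint of a path (respectively, at both endpoints of a length-one component) does not secretly increase the number of components via unexpectedly closed cycles. This is handled by a case distinction on the length (which is $1$, $3$, or $5$ modulo the alternation pattern) together with the observation that any newly formed cycle must contain at least two connecting edges of the alternating path, which strongly constrains the possible configurations. Once these local bookkeeping lemmas are in place, the final edge addition in case~\ref{imp:closed1} and the combined double application in case~\ref{imp:closed3} each yield an improvement of type~\ref{imp1}.
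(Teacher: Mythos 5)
Your plan for case~\ref{imp:closed1} rests on the ``key claim'' that applying a closed alternating path $Q$ of length less than $7$ at a path endpoint never increases the number of components, justified by the assertion that every newly created cycle must use at least two connecting edges of $Q$. Both of these statements are false for closed alternating paths. Consider a length-$5$ path $Q = s,v_1,v_2,v_3,v_4,s$ with connecting edges $\{s,v_1\},\{v_2,v_3\},\{v_4,s\}$ and matching edges $\{v_1,v_2\},\{v_3,v_4\}$, where $v_1,\dotsc,v_4$ are internal vertices of a path $P'$ arranged as $\ldots,w,v_1,v_2,B,v_3,v_4,w',\ldots$ for some nonempty sub-path $B$. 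Applying $Q$ removes $\{v_1,v_2\},\{v_3,v_4\},\{s,u\}$ and adds $\{s,v_1\},\{v_2,v_3\},\{v_4,s\}$: the single middle connecting edge $\{v_2,v_3\}$ closes the cycle $v_2,B,v_3$, so a new cycle containing only one connecting edge of $Q$ is formed and the number of components increases by one. Your conclusion that the final edge $\{u,t\}$ then yields an improvement of type~\ref{imp1} is therefore unjustified; after adding $\{u,t\}$ the component count may only return to its original value. What actually saves the lemma is that $|E(M')| = |E(M)|+1$ for the combined operation (apply $Q$, then add $\{u,t\}$), so one new cycle still yields an improvement of type~\ref{imp2}, and one must instead rule out the creation of two new cycles; the paper does exactly this by a counting argument on the three connecting edges of $Q$, noting that the two incident to $s$ land in the same component of $M'$.

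Case~\ref{imp:closed3} inherits the same gap. You assert that applying $Q$ and $Q'$ simultaneously makes the two-vertex component $\{s,u\}$ disappear ``and no new component is created, decreasing the total component count by at least one.'' But each of $Q$ and $Q'$ can independently create a cycle as above, so the component count need not drop. The paper handles this via a three-way case split on whether applying $Q$ creates a cycle, and if so whether that cycle contains $s$; in the latter two cases it does not apply $Q'$ at all but instead performs a different secondary operation (re-attaching $s$ via the edge $\{s,u\}$, or an exchange in the spirit of Lemma~\ref{lem:nosingle}), and the resulting improvement is of type~\ref{imp4} or~\ref{imp4}/\ref{imp3} rather than type~\ref{imp1}. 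Your claim that $Q$ and $Q'$ ``share no vertices with each other's relevant connecting edges'' is also not something you can assume directly; the paper instead argues edge-disjointness from Lemma~\ref{lem:alternating}\ref{imp:path} (otherwise a shorter improving alternating path between $s$ and $u$ exists). In short, the component bookkeeping that you flag as the ``main obstacle'' is precisely where the proposal goes wrong, and the fix is not a local-cycle-counting lemma for a single closed path but a joint analysis of the composite operation together with a case split on where the one possible new cycle sits.
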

\begin{proof}
  Suppose condition \ref{imp:closed1} of Lemma~\ref{lem:alternating} is satisfied.
  Let $M''$ be the $2$-matching obtained from~$M$ by applying $Q$.
  By Observation~\ref{appobs:alternatingend}, $t$ is an end vertex of a path in~$M''$.
  Thus, applying $Q$ and adding~$\{u,t\}$ gives a new $2$-matching $M'$.
  We now argue that $M'$ is an improved $2$-matching compared to~$M$.
  
  The number of edges in $M'$ is $|E(M)|+1$, since applying $Q$ did not change the number of edges and we introduced $\{u,t\}$.
  Therefore, similar to the proof of Lemma~\ref{lem:alternating}, there is an improvement unless there are at least two cycles in $M'$ that are not in $M$.
  Again, each of the cycles has to contain connecting edges of~$Q$ and $Q$ has at most three connecting edges.
  In particular, the two connecting edges of $Q$ incident to~$s$ are in the same component of $M'$ and therefore the only possibility to generate two cycles is to have one cycles with one connecting edge of $Q$ and one cycle with the two connecting edges of $Q$ incident to~$s$.
  But then there are four internal vertices of $Q$ that are contained in one path of $M$ and there is no possibility to close a cycle with only the two connecting edges of $Q$ incident to $s$.

  Suppose condition \ref{imp:closed3} of Lemma~\ref{lem:alternating} is satisfied.
  By Lemma~\ref{lem:alternating}\ref{imp:path}, we may assume that $Q$ and $Q'$ are edge disjoint. In particular, this means that we may apply them one after the other.
  Let us first apply $Q$.
  This way, $u$ becomes a singleton component.
  We will show that there are at most two further applications of alternating paths such that in the end we obtain an improvement.
  \smallskip

  \noindent
  \textbf{Case 1:} Applying $Q$ does not introduce new cycles. Then we also apply $Q'$. 
  Since $\{s,u\}$ is only removed once, the two applications of alternating paths increase the number of edges.
  The application of $Q'$ can introduce at most one cycle and thus either the number of components is reduced or it stays the same as before and the number of cycles is increased.
  \smallskip

  \noindent
  \textbf{Case 2:} Applying $Q$ introduces a cycle that contains $\{s\}$.
  We first apply the alternating path $\{s,u\}$ such that $u$ together with the introduced cycle forms a new path.
  Note that the application of $Q$ also created a path $P$.
  If the length of $P$ is one, there is a basic improvement and overall we reduced the number of components.
  Otherwise, the number of components, cycles, edges in cycles, and singletons stayed unchanged but the number of components of size two is reduced.
  \smallskip

  \noindent
  \textbf{Case 3:} Applying $Q$ introduces a cycle that does not contain $\{s\}$.
  Then we apply an alternating path analogous to Lemma~\ref{lem:nosingle}: we introduce the edge $\{s,u\}$ and remove an edge incident to $s$ of LP value smaller then one.
  Let $\{s,s'\}$ be the removed edge. Then $s'$ is the end vertex of a path and $s'$ is incident to a cycle. 
  Thus we have found a basic improvement and overall we have reduced the number of components of size two while the number of components, cycles, edges in cycles, and singletons stayed unchanged. 
\end{proof}

The following lemma is simple but useful.
\begin{lemma}
\label{lem:threecycleimp}
  Let $M$ be a $2$-matching in graph $G_x$, let $s$ be the end vertex of a path in $M$ and let $Q=\{s,u\},\{u,v\},\{v,s\}$ be a path from $s$ to $s$ of length three such that $\{u,v\}$ is in some path~$P$ of~$M$.
  Let $\{u',u\}$ and $\{v,v'\}$ with $u' \neq v$ and $v' \neq u$ be the two edges incident to $u$ and $v$ in $P$ (if they exist).
  Then there is an improvement~if
  \begin{enumerate}[label=(\alph*)]
    \item $u$ or $v$ is an end vertex of $P$; or
    \item there is an end vertex $t \neq s$ and an edge $e=\{t,w\}$ for $w \in \{u',u,v,v'\}$ such that 
    $e \notin E(M)$.
  \end{enumerate}
\end{lemma}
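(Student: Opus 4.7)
\emph{Plan.} For part~(a), I will exhibit a basic improvement. Assuming without loss of generality that $v$ is an end vertex of $P$ (the case of $u$ is symmetric), simply add the edge $\{v,s\}$ (which belongs to $Q \subseteq E(G_x)$) to $M$. Because both $v$ and $s$ have $M$-degree at most one, the result is again a $2$-matching; it is a type-$1$ improvement if $v$ and $s$ lie in different components, and a type-$2$ improvement otherwise, since $\{v,s\}$ then closes their common path (necessarily $P$) into a cycle.

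For part~(b), I will construct a length-$3$ alternating path $Q'$ from $t$ to $s$ out of the external edge $e = \{t,w\}$ and two edges drawn from $Q$ and $P$, and then invoke Lemma~\ref{lem:alternating}(d). Specifically, if $w \in \{u,v\}$, set $Q' := \{t,w\},\{u,v\},\{\bar w,s\}$, where $\bar w$ is the other element of $\{u,v\}$; if $w \in \{u',v'\}$ (so the corresponding vertex exists, else case~(a) applies), set $Q' := \{t,w\},\{w,x\},\{x,s\}$, where $x$ is the $P$-neighbour of $w$ in $\{u,v\}$. In both variants, the middle edge is a matching edge of $P$, and the outer edges are connecting edges: the first by hypothesis on $e$, the last because it already appears in $Q$.

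The remaining verification reduces to three brief checks, each of which I expect to be routine. First, $Q'$ is a simple alternating path of length~$3$: its internal vertices are strictly internal to $P$ and hence have $M$-degree~$2$, while $t$ is an end vertex of some path of $M$, so simplicity (together with $t \neq s$) and the absence of internal vertices on cycles of $M$ both follow automatically, and matching/connecting edges alternate by construction. Second, $Q'$ is \emph{inward} because each of its two connecting edges is incident to $t$ or $s$, which are precisely the two ends of $Q'$; thus condition~(6) of the definition of \inward alternating path holds trivially. Third, $s$ and $t$ both lie in paths of $M$, making the cycle hypothesis of Lemma~\ref{lem:alternating}(d) vacuous, and $3 < 7$, so part~(d) yields the desired improved $2$-matching.

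I do not foresee a nontrivial obstacle here: the whole of part~(b) is a packaging exercise on top of Lemma~\ref{lem:alternating}, with the only mild care needed being the observation that if $u'$ or $v'$ fails to exist then $u$ or $v$ is an end vertex of $P$, so the configuration is already handled by part~(a).
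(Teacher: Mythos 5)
Part~(a) is correct and essentially matches the paper's proof; you are a bit more careful than the paper in distinguishing the type-1 from the type-2 improvement when $v$ and $s$ happen to lie in the same path.

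Part~(b) is where you diverge from the paper: rather than giving the paper's direct accounting of the edge swap (add $\{t,w\}$, remove the $P$-edge at $w$, add the freed $Q$-edge at $s$), you package the swap as a length-3 alternating path $Q'$ and appeal to Lemma~\ref{lem:alternating}\ref{imp:cycle}. Your construction of $Q'$ is correct, but the claim that ``$Q'$ is \inward because each of its two connecting edges is incident to $t$ or $s$\ldots thus condition~\ref{alt:inward} holds trivially'' is not. The \inward condition must be checked for \emph{both} endpoints of each connecting edge --- that is what the paper's sentence ``the role of $u$ and $v$ can be exchanged'' means, and the subsequent definition of a \emph{violation} is exactly the statement that the condition fails for \emph{some} labelling. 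For the connecting edge $\{t,w\}$ with $w$ an internal vertex of $P$, the endpoint $w$ is not an end vertex of $Q'$, so you must verify that $w$'s $Q'$-neighbour $x$ lies on the $P$-subpath between $t$ and $w$; if $t$ happens to be the end of $P$ on the side opposite to $x$, this fails, and then $Q'$ is not \inward and Lemma~\ref{lem:alternating}\ref{imp:cycle} is not applicable. The same issue arises at $\{x,s\}$ when $s$ is an end of $P$.

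The fix is easy in most cases and worth stating explicitly: when $s$ and $t$ lie in different components of $M$, invoke Lemma~\ref{lem:alternating}\ref{imp:three} instead (length $3<5$, end vertices of different paths --- no \inward requirement), and when $s,t$ are both ends of a path $P_0\neq P$, both connecting edges of $Q'$ cross between $P_0$ and $P$, so property~\ref{alt:inward} is vacuous and \ref{imp:cycle} does apply. The genuinely delicate residual case is $s,t$ both ends of $P$ itself: there the alternating path $Q'$ need not be \inward, none of conditions~\ref{imp:three}--\ref{imp:special} obviously applies, and indeed the raw edge swap can split $P$ into \emph{two} cycles (take $s$ on the $u$-side, $t$ on the $v$-side and $w\in\{v,v'\}$), which is not an improvement. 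The paper's direct argument glosses over this configuration as well, so your proof is not worse than the paper's here, but your ``trivially inward'' claim masks the issue rather than addressing it; either restrict $t$ to a different component, or argue separately that in that configuration a different choice of swap (here $w\in\{u,u'\}$, which closes $P$ into a single cycle) gives the improvement.
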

\begin{proof}
  If $u$ or $v$ is an end vertex of $P$ then we simply add $\{s,u\}$ or $\{s,v\}$ to the 2-matching as to obtain a 2-matching with fewer components.

  If there is an edge $e$ as specified in the second condition, we add the edge $\{t,w\}$, remove~$\{w,u\}$ or~$\{w,v\}$ from $P$ (only one of them can be in $P$), and add $\{s,u\}$ or $\{s,v\}$ depending on the removed edge.
  Then either we reduced the number of components or increased the number of cycles without increasing the number of components.
\end{proof}

\section{Computing Structured 2-Matchings in the Support of \SERup}
\label{sec:integralitygapupperboundsfor12stsp}
\label{sec:thealgorithm}
In the following we give an algorithm that computes a 2-matching $M$ in the
support of an optimal solution to $\SERup(G)$ for instances $G$ of \STSP such
that~$M$ has some nice properties. 

\begin{algorithm}[hbt]
  \caption{\label{alg:stsp}\STSP algorithm. 
  }
  \SetAlgoNoLine
  \SetKwInOut{Input}{Input}
  \SetKwInOut{Output}{Output}
  \Input{An instance $G$ of \STSP such that $\SERup(G)$ has a solution.}
  \Output{A $2$-matching $M$ of $G$.}
  Compute an optimal solution $x^*$ of $\SERup(G)$\;
  Let $M$ be the $2$-matching in $G_{x^*}$ with $E(M)=\emptyset$ and let $M'$ be the $2$-matching in~$G_{x^*}$ that contains all $1$-edges \tcp*[r]{We may assume that $E(M') \neq \emptyset$~(Boyd and Pulleyblank~\cite{BoydPulleyblank1990})}
  \While{$M \neq M'$}{
    $M := M'$\;
    Apply Lemma~\ref{lem:nosingle} to $M'$
    \tcp*[r]{Remove the singleton components.}
    If there is an alternating path $Q$ in $G_{x^*}$ that satisfies properties of Lemma~\ref{lem:alternating}, Lemma~\ref{lem:alternatingclosed}, 
    or Lemma~\ref{lem:threecycleimp}, obtain an improved $M'$\;
  }
\end{algorithm}
Let us analyze the algorithm.
Each iteration of the \texttt{while} loop, except the last one, gives an improved $2$-matching.
Hence, Algorithm~\ref{alg:stsp} computes a feasible solution to \STSP.
We can compute~$x^*$ in polynomial time.
Since the total number of improvements is bounded by a polynomial in~$n$, the number of iterations of the while loop is also polynomial in $n$.
Since Lemma~\ref{lem:nosingle} provides an efficient algorithm and we can list all alternating paths of a fixed constant length in polynomial time, also all steps in the \texttt{while} loop can be done in polynomial time.
Thus, Algorithm~\ref{alg:stsp} runs in polynomial time.

In some theorems we require that paths in the $2$-matching $M'$ do not end with degree-$2$ vertices.
A natural way to obtain that property would be to ensure that all $1$-edges stay in $M'$, but this does not seem to be true in general.
However, without further assumptions we are able to show the following.

\begin{lemma}
\label{lem:enddegree}
  The execution of Algorithm~\ref{alg:stsp} 
  yields a $2$-matching so that all end vertices of paths have a degree of at least 3 in~$G_{x^*}$.
\end{lemma}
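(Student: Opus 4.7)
The plan is to argue by contradiction: suppose that at termination of Algorithm~\ref{alg:stsp} some path $P$ of the final $2$-matching $M$ has an end vertex $s$ with $\deg_{G_{x^*}}(s)\le 2$. First I would derive that $\deg_{G_{x^*}}(s)=2$ and both incident support edges at $s$ are $1$-edges. Since $s$ is incident in $M$ to its $P$-neighbor, we already have $\deg_{G_{x^*}}(s)\ge 1$; the case $\deg_{G_{x^*}}(s)=1$ is ruled out by the equality constraint~\eqref{eqn:ser_general_1} combined with the standard bound $x^*_e\le 1$, which follows from the subtour inequality~\eqref{eqn:ser_general_2} applied to the two-vertex set formed by $e$'s endpoints. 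Hence the two support edges at $s$ are the matching edge $\{s,p\}\in E(M)$ and a connecting edge $\{s,t\}\notin E(M)$, both of $x^*$-value exactly $1$.

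Next I case-split on the position of $t$ inside $M$ and, in each case, exhibit an improvement step that Algorithm~\ref{alg:stsp} has not yet exhausted, contradicting termination. If $t$ were a singleton, Lemma~\ref{lem:nosingle} would still apply. If $t$ is an end vertex of a different path $P'\ne P$, or a vertex of a cycle of $M$, then $Q=\{s,t\}$ is an alternating path of length~$1$ and Lemma~\ref{lem:alternating}\ref{imp:one} or Lemma~\ref{lem:alternating}\ref{imp:cycle} yields a basic improvement. If $t$ is the opposite end of $P$ itself, then $|V(P)|\ge 3$ (since $t\ne p$ because the two support edges at $s$ are distinct), and again Lemma~\ref{lem:alternating}\ref{imp:one} applies: the single-edge $Q$ converts $P$ into a cycle, an improvement of type~\ref{imp2}.

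The main obstacle is the remaining case in which $t$ is an internal vertex of some path $P'$ of $M$ (possibly $P'=P$). Here $t$ has two $M$-neighbors $t_1,t_2\in V(P')$ and the length-$1$ candidate $Q=\{s,t\}$ is no longer alternating because $t$ is not an end vertex. My plan is to extend $Q$ to a length-$3$ alternating path $Q_i=\{s,t\},\{t,t_i\},\{t_i,w\}$ for a suitable choice of $i\in\{1,2\}$ and end vertex $w$ with $\{t_i,w\}\in E(G_{x^*})\setminus E(M)$. The existence of such a continuation will be derived from the fact that $x^*_{\{s,t\}}=1$ forces the subtour constraint on $\{s,t\}$ to be tight, so that combining it with the subtour inequality on $\{s,t,t_1,t_2\}$ shows that at least one of $t_1,t_2$ carries a connecting edge in $G_{x^*}$; the previously handled cases then force the far endpoint $w$ of that edge to be an end vertex of a path or cycle of $M$.

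With $Q_i$ in hand, I conclude by invoking Lemma~\ref{lem:alternating}\ref{imp:three} if $s$ and $w$ are ends of distinct paths, Lemma~\ref{lem:alternating}\ref{imp:path} if they are ends of the same path but $P'\ne P$ (then the vertices of $Q_i$ are not all contained in one path of $M$ and both terminal edges of $Q_i$ cross between two distinct paths of $M$, hence are vacuously not involved in an \emph{inward} violation), Lemma~\ref{lem:alternating}\ref{imp:cycle} or \ref{imp:same} when $P'=P$ and $Q_i$ stays inside $P$, and Lemma~\ref{lem:threecycleimp} or Lemma~\ref{lem:alternatingclosed} in the degenerate sub-case where $Q_i$ closes back at $s$ through the triangle $\{s,t\},\{t,p\},\{p,s\}$. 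In every sub-case some improvement rule used by Algorithm~\ref{alg:stsp} still applies to $M$, contradicting termination and completing the proof.
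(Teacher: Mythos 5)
Your proof takes a genuinely different route from the paper and, as written, has a gap that I do not see how to close.

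The paper does not argue by contradiction at termination. Instead it proves an \emph{invariant}: starting from the $2$-matching of all $1$-edges (whose path ends automatically have degree $\ge 3$, because an end vertex has exactly one $1$-edge, so the remaining LP mass of $1$ must split over at least two further edges), it shows that \emph{every improvement step can be chosen} so that the set of path-end vertices only shrinks (modulo the explicitly controlled cases). The two tools are Observation~\ref{appobs:alternatingend} --- after applying an alternating path with distinct ends $s,t$, the only vertices whose degree in the $2$-matching changes are $s$ and $t$, and they become internal, so no new end vertex of small support degree can appear from inside a path --- and, when an end of the alternating path lies on a cycle, the freedom to delete whichever of the two cycle edges at $s$ is \emph{not} a $1$-edge, which guarantees the newly exposed cycle vertex still has support degree $\ge 3$. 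The lemma for the whole execution then follows by induction over the iterations, never needing to show that a ``bad'' $2$-matching always admits an improvement.

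Your argument, by contrast, assumes the algorithm terminated with a degree-$2$ end vertex $s$ and tries to conjure an improvement from that assumption alone. The first steps are fine: $x^*(\delta(s))=2$ together with $x^*_e\le 1$ forces $\deg_{G_{x^*}}(s)=2$ with both incident edges being $1$-edges, and your handling of the cases ``$t$ singleton'', ``$t$ end of another path'', ``$t$ in a cycle'', ``$t$ the other end of $P$'' is correct. The problem is the remaining case, $t$ internal to some path $P'$. You extend $\{s,t\}$ through a matching edge $\{t,t_i\}$ to a connecting edge $\{t_i,w\}$ and then claim that ``the previously handled cases force the far endpoint $w$ of that edge to be an end vertex of a path or cycle of $M$.'' This does not follow: those cases classified the possible positions of $t$, not of $w$, and nothing in the LP constraints or in the exhaustion of shorter improvements prevents $w$ from being an internal vertex of some path. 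If $w$ is internal, your length-$3$ $Q_i$ is not an alternating path (property~\ref{alt:end} fails), so none of the parts of Lemma~\ref{lem:alternating} applies, and your contradiction evaporates. Also, the claimed derivation from the tightness of the subtour inequality on $\{s,t\}$ and on $\{s,t,t_1,t_2\}$ does not yield the existence of a \emph{connecting} edge at $t_1$ or $t_2$: the cut $\delta(\{s,t,t_1,t_2\})$ may be met entirely by the matching edges $\{t_1,t_1'\}$, $\{t_2,t_2'\}$ and $\{s,p\}$.

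More fundamentally, the paper explicitly acknowledges that $1$-edges may leave the matching during the run, i.e.\ there is no obvious structural obstruction to a $2$-matching in $G_{x^*}$ having a degree-$2$ path end and yet admitting none of the improvements of Lemmas~\ref{lem:nosingle}--\ref{lem:threecycleimp}. The lemma is true because the algorithm, making the right choices at each step, never \emph{reaches} such a state; it is not true that such a state, if it occurred, would necessarily admit an improvement. A proof by contradiction at termination would therefore have to re-derive the invariance anyway. I would recommend switching to the paper's scheme: verify the invariant for the initial all-$1$-edges matching, then argue per improvement step using Observation~\ref{appobs:alternatingend} and the careful choice of which cycle edge to delete.
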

\begin{proof}
  We show the following more general statement:
  \begin{quote}
    \emph{Let $M$ be a $2$-matching in some graph $G_x$ such that each end vertex of a path in~$M$ has a degree of at least $3$ in $G_x$.
    If there is an improved $2$-matching $M''$ obtained from applying Lemma~\ref{lem:nosingle}, Lemma~\ref{lem:alternating}, Lemma~\ref{lem:alternatingclosed}, 
    or Lemma~\ref{lem:threecycleimp}, then there is an improved $2$ matching $M'$ such that each end vertex $s$ of a path in $M'$ has degree at least $3$ in~$G_x$.}
  \end{quote}

  Suppose $s \neq t$ and we obtained $M''$ by applying an alternating path $Q$ within one of the lemmas.
  Then, by Observation~\ref{appobs:alternatingend}, in $M''$ both the degree of $s$ and $t$ is $2$ and all other vertices in paths of $M$ have the same degree as in $M$.
  In particular, no degree-$1$ vertex of a path in~$M$ can become an end vertex.
  The only intersection of $Q$ with cycles of $M$ can occur if $s$ or $t$ are in some cycle, as otherwise we would have considered a shorter alternating path.
  If both $s$ and~$t$ are in the same cycle $C$, then we only considered~$Q$ if $s$ and $t$ are path-forming in $C$ and the application of $Q$ did not create an end vertex of a path in~$V(C)$.
  Otherwise, assume that~$s$ belongs to a cycle $C$ with $t \notin V(C)$.
  Let $e$ be the edge incident to $s$ in~$Q$.
  Since $x_e > 0$ and $x(\delta_{G_x}(s)) = 2$, $s$ can have at most one incident $1$-edge.
  When applying $Q$, we can choose to remove one of two edges in $C$ and in $M'$ we choose to not remove a $1$-edge of $C$ (while in~$M''$ a $1$-edge may have been removed).
  We conclude that for all improvements due to Lemma~\ref{lem:alternating} the claim of this lemma holds, since each improvement is obtained due to one or two applications of alternating paths.
  The same is true for Lemma~\ref{lem:nosingle} and Lemma~\ref{lem:threecycleimp}.

  Now let us assume that $Q$ is an alternating path that both starts and ends in the same vertex~$s$.
  Applying $Q$ potentially creates a degree $2$ vertex, but when using 
  Lemma~\ref{lem:alternatingclosed} this is excluded, since we apply subsequent alternating paths. 
\end{proof}

\section{Integrality Gap Upper Bounds for \texorpdfstring{Symmetric $(1,2)$-TSP}{Symmetric (1,2)TSP}}
\label{sec:integralityupperboundsfor12stsp}
We now introduce our method to determine the quality of the solution computed by
Algorithm~\ref{alg:stsp}.

\medskip

\newcommand{\LPX}{\ensuremath{\mathrm{LP}(x^*)}\xspace}
\newcommand{\LPN}{\ensuremath{\mathrm{LP}(x)}\xspace}
Our general approach is as follows.
Let~$G$ be an instance of \STSP and let $x^*$ be an optimal solution to $\SERup(G)$.
Further, let $M$ be a $2$-matching in $G_{x^*}$ and $\mathcal{C}$ be the set of components of $M$.
Our approach is to map the LP values $x^*_e$ of edges $e$ to components of $M$, in such a way that either the minimum sum of LP values (over all mappings) is at least some value $\alpha$ or we find an improvement to $M$.
To this end, we introduce a new linear program \LPX with variables~$y_{C,e}$ for each pair of a component $C \in \mathcal{C}$ and edge $e \in E(G_{x^*})$
\begin{align}
  \sum_{e \in E(G_{x^*})} y_{C,e}  &\ge \alpha &&\mbox{ for all } C \in \mathcal{C},\label{appcon:alpha}\\
  \sum_{C \in \mathcal{C}} y_{C,e} &\le x^*_e  &&\mbox{ for all } e \in E(G_{x^*}),\label{appcon:double}\\
                           y_{C,e} &\ge 0      &&\mbox{ for all } C \in \mathcal{C}, e \in E(G_{x^*})
\end{align}

We show that we either find an improvement of $M$, or we find a feasible solution to \LPX.
The rationale behind finding a feasible solution to \LPX is as follows.
Within \LPX, each $x^*_e$ is a fixed constant.
Since $\cost(x^*)=n$, all edges in $M$ are of cost one. 
Therefore, to a feasible solution of \LPX we have to add at most $n/\alpha$ edges of cost at most two in order to obtain a tour of $G$.
In other words, a feasible solution to \LPX can be augmented to a tour of $G$ of cost $n + n/\alpha$.
This way, we obtain an $(\alpha+1)/\alpha$-approximation for \STSP.

\medskip

We will now give the details of the approach.
We start with the following known result.
\begin{lemma}[Wolsey~\cite{Wolsey1980}]
\label{applem:pointset}
  Let $S$ be a set of vertices in $G$, $S \neq V(G)$. Then $x^*(\delta_G(S) \cup \delta_G(S,S)) \ge |S| + 1$.
\end{lemma}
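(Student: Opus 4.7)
The plan is to combine two ingredients: the degree equality \eqref{eqn:ser_general_1} applied to each vertex of $S$, and the subtour elimination inequality \eqref{eqn:ser_general_2} applied to the set $S$ itself (noting that $S \neq V(G)$ and $S \neq \emptyset$ are needed to make the latter meaningful).

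First I would sum the degree equalities over $v \in S$, which gives $\sum_{v \in S} x^*(\delta_G(v)) = 2|S|$. On the left-hand side, each edge with both endpoints in $S$ (that is, each edge in $\delta_G(S,S)$) is counted twice, while each edge in the cut $\delta_G(S)$ is counted exactly once. Hence
\begin{equation*}
  2\, x^*(\delta_G(S,S)) + x^*(\delta_G(S)) \;=\; 2|S|.
\end{equation*}
Since $\delta_G(S,S)$ and $\delta_G(S)$ are disjoint edge sets, this rearranges to
\begin{equation*}
  x^*\bigl(\delta_G(S) \cup \delta_G(S,S)\bigr) \;=\; x^*(\delta_G(S,S)) + x^*(\delta_G(S)) \;=\; |S| + \tfrac{1}{2}\, x^*(\delta_G(S)).
\end{equation*}

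Then I would invoke the subtour elimination constraint \eqref{eqn:ser_general_2} for $S$, which gives $x^*(\delta_G(S)) \ge 2$, and plug in to conclude that $x^*(\delta_G(S) \cup \delta_G(S,S)) \ge |S| + 1$, as claimed. There is no real obstacle here: the only thing to keep track of is the double-counting of interior edges versus single-counting of cut edges when summing the degree equations, and the fact that the two edge sets in the statement are disjoint so that the LP values add.
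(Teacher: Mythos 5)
Your proof is correct and uses the same ingredients as the paper: summing the degree equalities over $S$ with the double-counting of interior edges, then applying the subtour elimination constraint $x^*(\delta_G(S)) \ge 2$. The only difference is a cosmetic reorganization of the algebra; the argument is essentially identical.
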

\begin{proof}
  Since $x^*(\delta_G(S))\ge 2$ and $\sum_{v\in S} x^*(\delta_G(v))=2|S|$, we have
  \begin{equation*}
      x^*(\delta_G(S) \cup \delta_G(S,S)) = (x^*(\delta_{G}(S)) + \sum_{v \in S} x^*(\delta_{G}(v)))/2 \ge |S|+1,
  \end{equation*}
  where we have to divide the second term by two since the LP value of each edge is added twice.
\end{proof}

At this point we can already show that our approach works for $\alpha = 4$.
\begin{theorem}
\label{thm:stsp}
  There is a polynomial-time $5/4$ approximation algorithm for \STSP with respect to $\OptSERup(G)$.
\end{theorem}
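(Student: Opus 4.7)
The plan is to run Algorithm~\ref{alg:stsp}, which produces in polynomial time a $2$-matching $M$ in the support of an optimal $\SERup$ solution $x^*$ such that none of the improvement rules of Lemmas~\ref{lem:nosingle}, \ref{lem:alternating}, \ref{lem:alternatingclosed}, or \ref{lem:threecycleimp} is applicable. By Lemma~\ref{lem:pluseps} I may pass to an equivalent instance with $\OptSERup(G)=|V(G)|=n$ in which every edge of $G_{x^*}$ has cost one, so the final tour cost will equal its number of edges and any excess above $n$ comes purely from the connecting edges I must add to close $M$ into a Hamiltonian cycle, each of which has cost at most two.

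Next I would instantiate $\LPX$ with parameter $\alpha=4$ and prove it has a feasible solution. Since no constraint of $\LPX$ restricts which edge may contribute to which component, and since the total available supply equals $\sum_e x^*_e = n$ by the equality constraints \eqref{eqn:ser_general_1}, feasibility is equivalent to the bound $4|\mathcal{C}|\le n$, where $\mathcal{C}$ denotes the set of components of $M$. To construct a feasible $y$, I would first try the fair assignment that sends $x^*_e$ entirely to $C$ for every edge $e$ internal to $C$ and splits $x^*_e$ evenly between $C$ and $C'$ for every edge $e$ crossing between distinct components $C,C'$. Using the degree constraints, the mass received under this scheme by a component $C$ with vertex set $S_C$ is exactly $|S_C|$, so it suffices that each component has at least four vertices.

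The main obstacle is therefore handling components of size two or three. For such a small $C$ the fair assignment leaves a deficit, but Lemma~\ref{applem:pointset} guarantees $x^*(\delta_G(S_C)\cup \delta_G(S_C,S_C))\ge |S_C|+1$, so there is boundary LP mass available. I would argue that the absence of the improvements of Lemmas~\ref{lem:nosingle}, \ref{lem:alternating}, \ref{lem:alternatingclosed}, and \ref{lem:threecycleimp} --- in particular the elimination of singleton components and the nonexistence of short alternating paths between end vertices of distinct components --- structurally forces enough unused boundary mass around each small component that I can redirect the missing units from neighboring larger components into $C$ without ever violating the upper bound $\sum_C y_{C,e}\le x^*_e$ on any edge. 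This is precisely the ``improvement-or-$\LPX$-feasible'' dichotomy: whenever the fair scheme fails at a small component, one of the improvement lemmas would fire, contradicting termination of Algorithm~\ref{alg:stsp}. Establishing this reallocation is the technical heart of the argument.

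Finally, having obtained $|\mathcal{C}|\le n/4$, I would close $M$ into a Hamiltonian tour by removing one cost-one edge from each cycle component of $M$ and adding $|\mathcal{C}|$ connecting edges of cost at most two to concatenate the resulting paths into a single cycle. Writing $p$ and $c$ for the numbers of path and cycle components of $M$, the total tour cost is at most $(n-p)-c+2(p+c)=n+|\mathcal{C}|\le n+n/4=(5/4)\,\OptSERup(G)$. Since $x^*$ is computable in polynomial time, Algorithm~\ref{alg:stsp} performs only polynomially many improvement steps, the $\LPX$ feasibility witness is constructed explicitly, and the closing procedure is elementary, the whole procedure is a polynomial-time $5/4$-approximation with respect to $\OptSERup$.
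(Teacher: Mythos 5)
Your high-level plan — produce $M$ via Algorithm~\ref{alg:stsp}, pass to the normalized instance with $\OptSERup(G)=n$, prove $|\mathcal{C}|\le n/4$ by exhibiting a feasible $y$ for $\LPX$ with $\alpha=4$, and close $M$ into a tour — is the right skeleton, and your final accounting of the tour cost is correct. But the route you propose for the central step is genuinely different from the paper's and, as written, has a real gap.

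You correctly observe that $\LPX$ has no constraint tying a particular edge to a particular component, so its feasibility is exactly the inequality $\alpha|\mathcal{C}|\le \sum_e x^*_e=n$; the LP is bookkeeping, not leverage. You then propose a vertex-based ``fair assignment'' giving every component $C$ exactly $|V(C)|$ units, and intend to repair the deficit at size-$2$ and size-$3$ components by ``redirecting'' boundary mass from neighbors, citing Lemma~\ref{applem:pointset} and the termination of Algorithm~\ref{alg:stsp}. This repair is not carried out, and it is not clear it can be. Lemma~\ref{applem:pointset} gives $x^*(\delta(S_C)\cup\delta(S_C,S_C))\ge |S_C|+1$, but under your scheme the other half of the boundary mass belongs to a neighboring component which may itself be tight (\eg\ a size-$4$ component receives exactly $4$ and has no slack to donate). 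You would need a global Hall-type or flow argument showing no small component is ``trapped'' by tight neighbors, and the connection between that statement and the nonexistence of short alternating paths is far from immediate — the improvement lemmas speak about end vertices and alternating paths of bounded length, not about the surplus of the fair assignment around a component. The paper avoids this entirely: it never uses the fair assignment. Instead it replaces $x^*$ on matching edges by the complement $z^*$, splits edges into sub-edges, constructs a disjoint family $\apset$ of inward alternating paths and truncated ones starting at end vertices (each end of a path component is the end of exactly $N$ such paths, by the balance relation~\eqref{appeq:endsub}), and assigns to each path component the first and third sub-edge of each of its $2N$ alternating paths. The length-$\ge 7$ consequence of Lemma~\ref{lem:alternating} is precisely what guarantees these assignments are disjoint and hence feasible. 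In short: your approach would be more elementary if it worked, but the redistribution step — which you label ``the technical heart of the argument'' — is the whole theorem, and the local-improvement machinery of the paper does not plug into it in the way you assume.
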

\begin{proof}
  Let $M$ be the $2$-matching computed by Algorithm~\ref{alg:stsp} 
  and let $x^*$ be the corresponding optimal solution to $\SERup(G)$.
  Instead of $\delta_{G_{x^*}}$, we simply write~$\delta$.

  To analyze the algorithm, we construct a solution to \LPX for $\alpha = 4$.
  In order to obtain meaningful alternating paths, we ``complement'' the values $x^*$ for edges of $M$.
  Formally, let $z^*$ be the vector of length $|x^*| = |E(G)|$ with entries $0 \le z^*_e \le 1$ for all $e \in E(G)$.
  We set $z^*_e = x^*_e$ for all $e \in E(G) \setminus E(M)$ and $z^*_e = 1-x^*_e$ for all $e \in E(M)$.

  We note two properties of $z^*$.
  For any internal vertex $v$ of a component in $M$,
  \begin{equation}
  \label{appeq:internal}
    \begin{split}
      z^*(\delta(v)\setminus E(M)) & = 2 - x^*(\delta(v) \cap E(M))\\
                                 & = 2 - (2 - z^*(\delta(v) \cap E(M)))\\
                                 & = z^*(\delta(v) \cap E(M)) \enspace .
    \end{split}
  \end{equation}
  
  For any end vertex $s$ of a path in $M$,
  \begin{equation}
  \label{appeq:end}
    \begin{split}
      z^*(\delta(s)\setminus E(M)) - 1 & = 1 - x^*(\delta(s) \cap E(M))\\
                                     & = 1 - (1 - z^*(\delta(s) \cap E(M)))\\
                                     & = z^*(\delta(s) \cap E(M)) \enspace .
    \end{split}
  \end{equation}

  To simplify the discussion, we subdivide the edges as follows.
  Let $N \in \mathbb{N}$ be the smallest integer such that $x^*_e \cdot N$ is an integer for all $e \in E(G)$, where we used that each $x^*_e$ is a rational number.
  Note that this way, also $z^*_e \cdot N$ is an integer.
  Now we define scaled versions $x,z$ of $x^*,z^*$ by setting $x_e = N \cdot x^*_e$ and $z_e = N \cdot z^*_e$ for each $e \in E(G)$.
  Correspondingly, we aim to construct a solution $y$ to \LPN for $\alpha=4N$.

  For each edge $e$, we introduce $N$ parallel edges $e_1,e_2,\dotsc,e_N$ which we call the \emph{sub-edges} of $e$. 
  We extend the vectors~$x$ and $z$ by setting $x_{e_i} = 1$ for all $i \le x_e$ and $x_{e_i}=0$ for all remaining $i$.
  If $e \notin E(M)$, we set $z_{e_i} = x_{e_i}$ for all $i$ and, if $e \in E(M)$, $z_{e_i} = 1$ for all indices $x_e < i \le N$; for all remaining $i$ we set $z_{e_i} = 0$. A sub-edge of $e$ with index $i$ is a sub-edge of $x_e$ if $x_{e_i}=1$.
  Correspondingly the sub-edges of $x$ all sub-edges of edges in the support of $x$. The sub-edges of $z$ are defined analogously.

  The purpose of the subdivision is that we can assign sub-edges to components.
  That is, we reduced the problem to find a feasible solution $y$ to \LPN to assigning at least $4N$ sub-edges of $x$ to each of the components.

  Note that for each internal vertex $v$ of a path in $M$, by \eqref{appeq:internal},
  \begin{multline}
  \label{appeq:internalsub}
    |\{e_i :  e \in \delta(v) \cap E(M), 1 \le i \le N, z_{e_i} > 0\}|
        = |\{e_i :  e \in \delta(v) \setminus E(M), 1 \le i \le N, z_{e_i} > 0\}|
  \end{multline}
  and for an end vertex $s$ of a path in $M$, by \eqref{appeq:end},
  \begin{multline}
  \label{appeq:endsub}
    |\{e_i :  e \in \delta(s) \cap E(M), 1 \le i \le N, z_{e_i} > 0\}|
      = |\{e_i :  e \in \delta(s) \setminus E(M), 1 \le i \le N, z_{e_i} > 0\}| - N \enspace .
  \end{multline}
  Let $I$ be the set of edges that lead from an end vertex of a path of $M$ into the same path, that is,
    \[
    I := \{\{s,v\} \in E(G_{x}) \setminus E(M) :
    \mbox{$s$ is an end vertex of a path $C$ in $M$ and $v$ is in $C$}\}\enspace .
    \]
  The edges in $I$ are somewhat problematic, as they may become truncated inward alternating paths of length one.
  To this end, we exhibit a pairing of sub-edges from $I$ with sub-edges of either $x$ or $z$.
  A pairing with sub-edges of $x$ ensures that we can assign them directly to components and a pairing with~$z$ ensures that we can form inward alternating paths of length at least two.

  We obtain a family $\mathcal F$ of sub-edge pairs by iterating over all sub-edges $e'=\{s,v\}$ of $z$ at edges in~$I$, where $s$ is an end-vertex of some component $P$.
  Let $\{u,v\}$ be the edge of $P$ such that $u$ is located between~$s$ and $v$ in $P$.
  If there is an unpaired sub-edge $e''$ of $z$ at $\{u,v\}$, we add the pair $\{e',e''\}$ to $\mathcal F$.
  Otherwise, there is an unpaired sub-edge $e'''$ of $x$ at $\{u,v\}$ and we add $\{e',e'''\}$ to $\mathcal F$.
  Let $\mathcal F_z$ be the subset of pairs in $\mathcal F$ whose second component is a sub-edge of $z$, and let $\mathcal F_x$ be the subset of pairs in $\mathcal F$ whose second component is a sub-edge of $x$.
  
  Let us now consider the following procedure to create alternating paths.
  Initially all sub-edges are unmarked.
  The procedure then marks some sub-edges by one of two types: either to be used as internal edge of an alternating path, or to be used as the end of an alternating path.
  
  \begin{algorithm}[H]
  \begin{enumerate}
    \item Choose an end vertex $s$ of a path in $M$ with less than $N$ sub-edges that are marked to be\linebreak used as ends of alternating paths. 
    \item Extend a path $Q$ from $s$ by choosing unmarked sub-edges of $z$ that do not violate the \inward property (property \ref{alt:inward} of the definition) and that alternate between~$E(G_z) \setminus E(M)$ and\\
    sub-edges within paths of $M$.
    \begin{enumerate}
      \item Whenever $Q$ uses a sub-edge contained in some pair in $\mathcal F_z$, extend $Q$ with the second\\ sub-edge of that pair.
      \item Whenever $Q$ uses a sub-edge contained in some pair in $\mathcal F_x$, stop (and leave a truncated alternating path).
      \item If an end vertex $t$ of a path in $M$ is reached by a sub-edge not in $M$ and there are\linebreak less than $N$ edges marked as ends of paths incident to $t$, \textbf{stop}.
      \item If a cycle is reached, \textbf{stop}.
      \item If there is no unmarked sub-edge that can be followed, \textbf{stop}.
    \end{enumerate}
    \item Mark the first and (if not truncated) the last sub-edge of $Q$ to be an end of an alternating  \linebreak path and mark the remaining sub-edges of $Q$ to be internal sub-edges.
  \end{enumerate}
    \caption{Creating an \inward alternating path or a truncated \inward alternating path.}
  \end{algorithm}
  It is not hard to check that any path $Q$ created by the procedure is either an \inward alternating path or a truncated \inward alternating path.
  Let \apset be the set of paths obtained by iteratively applying the procedure until each end vertex of each path in~$M$ is the end vertex of exactly~$N$ inward alternating paths or truncated inward alternating paths.
  Note that by \eqref{appeq:endsub}, such a set~\apset exists.
  We emphasize that we need \apset only for the analysis.

  Let us now construct a solution $y$ to \LPN.
  For each cycle $C$ of $M$ we set $y_{C,e}= 1$ for all sub-edges~$e$ of $x$ incident to vertices of $C$.
  For the remaining cases, let us fix a path $Q \in \apset$.
  \begin{itemize}
    \item If $Q$ is an \inward alternating path of length less than $7$ with both ends at one vertex $s$ of component $C$, we set $y_{C,e} = 1$ for each sub-edge $e$ of $x$ such that $e \in \delta(s)$.
    \item If $Q=e_1,e_2,\dotsc,e_k$ is an \inward alternating path of length at least $7$ from component $C_1$ to component $C_2$ (possibly with $C_1=C_2$), we set $y_{C_1,e_1} = y_{C_1,e_3} = y_{C_2,e_k} = y_{C_2,k-2}=1$. 
    \item If $Q=e_1,e_2,\dotsc,e_k$ is a truncated \inward alternating path of length at least three that starts from a path $C$ of $M$, we set $y_{C,e_1} = y_{C,e_3} = 1$.
    (There is no truncated \inward alternating path starting from a cycle.)
    \item If $Q=e_1,e_2$ is a truncated \inward alternating path of length two,
      by the properties of~$z$, the only reason that $Q$ was forced to stop is that any prolongation violates the \inward property.
      Let~$u$ and~$v$ be the vertices such that $e_1=\{s,u\}$ and $e_2=\{u,v\}$.
      Let $P_s$ be the path of $M$ containing $s$ and~$P_u$ be the path of $M$ that contains $u$ (possibly $P_u = P_s$).
      Let $P'_u$ be the sub-path of $P_u$ without $u$ that is left when removing $v$ (see Fig.~\ref{fig:truncated}).
      If there is a sub-edge $e$ of $x_{\{v,w\}}$ for some $w$ in $P'_u$ such that $y_{C',e}=0$ for all $C'$, we set $y_{P_s,e}=1$.
      Otherwise we set $y_{P_s,e'}= 1$ for all sub-edges $e'$ of $x$ with $e' \in \delta(s)$.
      \begin{figure}[ht]
        \centering
        \begin{tikzpicture}[scale=0.8]
          \node (0) at (0,0) [label=below:$P_u$]{};
          \node (1) at (2,0) [vertex, label=below:$u$]{};
          \node (2) at (3,0) [vertex, label=below:$v$]{};
          \node (3) at (4,0) [label=below:$P'_u$]{};
          \node (4) at (5,0) [vertex]{};
          \node (5) at (6,0) [vertex]{};
          \node (6) at (7,0) [vertex]{};
          \node (7) at (9,0) {};
          \draw[] (0) -- (1);
          \draw[] (1) -- (2);
          \draw[] (2) -- (4);
          \draw[] (4) -- (5);
          \draw[] (5) -- (6);
          \draw[] (6) -- (7);
          \draw[thick] (3)--(7);
          \draw[dashed] (2) edge[bend left] (4);
          \draw[dashed] (2) edge[bend left] (5);
          \draw[dashed] (2) edge[bend left] (6);
        \end{tikzpicture}
      \caption{Assignment for truncated inward alternating paths of length two.}
      \label{fig:truncated}
    \end{figure}
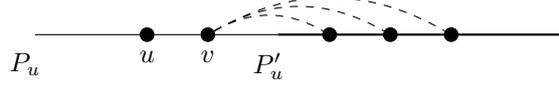
    \item Finally, let us consider the remaining case, that is, a truncated alternating path $e_1$ of length one that starts from a path $C$.
      This is only possible if $\{e_1,e'\} \in \mathcal F_x$ for some sub-edge $e'$ of $x$.
      We set $y_{C,e_1} = y_{C,e'} = 1$.
  \end{itemize}
  We now show that the constraints \eqref{appcon:double} are satisfied by $y$.
  If we only consider cycles, alternating paths between two different vertices, alternating paths with one end vertex of length at least $7$, and truncated alternating paths of length at least three, it is easy to verify that we assigned the LP value of each sub-edge to at most one component, and thus we did not violate the constraints.
  Note that we did not yet assign LP values of any edge in $E(M)$.

  For truncated alternating paths $Q=e$ of length one, additionally to the edge of the truncated alternating path we assigned the LP value of one sub-edges $e'$ of $E(M)$.
  We did not assign the value of one of these edges for two different truncated alternating paths of length one, since this would imply the existence of an \inward alternating path of length three which is excluded due to Lemma~\ref{lem:alternating}\ref{imp:cycle}.

  In the remaining cases, we may have to set $y_{C,e}=1$ for all sub-edges of $x$ with $e \in \delta(s)$ where $s$ is an end vertex of a path $P$ in $M$.
  These assignments do not interfere with assignments for cycles as otherwise there is a basic improvement.
  Note also that the sub-edges in $E(M)$ used in the assignment for truncated alternating paths of length one are not in $\delta(s)$.

  Let $s'$ be the vertex adjacent to $s$ in $P$.
  Suppose that $P$ has at least three vertices and there is no end vertex $t \notin E(P)$ of a path in $M$ such $\{s',t\} \in E(G_z)$.
  Let us consider any alternating path $Q \in \apset$ that contains a sub-edge of~$x_{\{s,s'\}}$.
  We split $Q$ into $Q_{s'}$ and $Q_s$ by removing the sub-edge of $x_{\{s,s'\}}$ such that $s \in V(Q_s)$.
  By our assumption and the definition of alternating paths, the length of $Q_{s'}$ is at least three.
  Note that due to the \inward property, the vertex adjacent to $s'$ in $Q_{s'}$ is not the other end vertex of $P$.
  By condition \ref{imp:cycle} of Lemma \ref{lem:alternating}, we can also exclude that $Q_s$ has less than 7 edges, unless $Q_s$ is truncated or $Q_s$ leads from $s$ to $s$.
  Therefore, none of the LP values of sub-edges incident to $s$ is assigned to another component than~$P$.

  As a result, we may assume for the remaining cases that either $P$ has exactly two vertices or $s'$ is adjacent to an end vertex of another path than $P$ of $M$.
  We lead both situations to a contradiction.

  By Lemma~\ref{lem:alternatingclosed}, there is no alternating path of length less than $7$ from~$s$ to $s$ unless $P$ has exactly two vertices and there is no alternating path shorter than $7$ that starts and ends at the second vertex.
  Therefore, if $Q$ is such an alternating path, assigning values of sub-edges in $\delta(s)$ to $P$ does not interfere with any of the previous cases.

  We continue with truncated \inward alternating paths of length two.
  If we already have assigned the third sub-edge consecutive to $Q$, we are done.
  Otherwise, we have assigned the sub-edges incident to $s$.
  As before, let $s,u,v$ be the vertices of the alternating path $Q$ and let $P_u$ be the path of $M$ that contains~$u$ and its sub-path $P_u'$.
  By the definition of \inward alternating paths and the definition of $z$, there is a vertex $w$ in $P'_u$ and a truncated alternating path $\hat{Q}$ such that it either is of length three and ends in $v$ with a sub-edge of $\{w,v\}$ or $\hat{Q}$ is of length two, ends in $w$, and a sub-edge of $\{w,v\}$ was assigned to the component where $\hat{Q}$ ends.
  Let $\tilde{Q}$ be the alternating path obtained by concatenating $Q$, $\hat{Q}$, and (if the length of $\hat{Q}$ is two) a sub-edge of~$\{u,w\}$.
  The length of $\tilde{Q}$ is exactly five.
  Let $w'$ and $t$ be the two remaining vertices of $\tilde{Q}$ such that $t$ is the end vertex.
  The only vertices of $\tilde{Q}$ that are possibly involved in the violation of the \inward property are those adjacent to $v$ and $w$ in $\tilde{Q}$, i,\,e., $u$, $v$, $w$, and $w'$.
  In particular, the first and last edge of $\tilde{Q}$ are not involved in the violation of the inward property.

  Therefore, by Lemma~\ref{lem:alternating}\ref{imp:five}, $\hat{Q}$ cannot start from a path other than~$P_s$.
  Since $u$, $v$, $w$, and $w'$ are in the same component of $M$ and none of them is an end vertex, $P_s$ has a length of at least~$5$.
  By our considerations above,
  also $s \neq t$.
  By Lemma~\ref{lem:alternating}\ref{imp:same} of Lemma~\ref{lem:alternating}, $\tilde{Q}$ contains internal vertices of $P_s$.

  Since $u$, $v$, $w$, and $w'$ are in the same component of $M$, we conclude that $\tilde{Q}$ does not leave~$P_s$.
  Since~$v$ and $w$ are the only vertices involved in the violation of the \inward property, we conclude that $\tilde{Q}$ has the type described in \ref{imp:special} of Lemma~\ref{lem:alternating}.
  Since $P_u$ has more than two vertices, by our assumption there is an end vertex of some path $P' \neq P_u$ adjacent to $P_u$; this case was considered by condition \ref{imp:special}.

  \medskip

  Finally we show that the constraints \eqref{appcon:alpha} are satisfied for $\alpha = 4N$.
  By Lemma~\ref{applem:pointset} this is clearly true for cycles.
  For each end vertex of a path in $M$, note that we assigned either $2N$ sub-edges to the component of $s$ or all sub-edges in $\delta(s)$, which also sums up to $2N$.
  Therefore, we assigned~$4N$ disjoint sub-edges to each path in $M$, which amounts to a total value of $4$ with respect to \LPX.
\end{proof}

When a half-integral solution becomes available, we can improve the approximation factor.
\begin{theorem}
\label{thm:halfint}
  If we can obtain an optimal half-integral solution $x^*$ to $\SERup(G)$ in polynomial time, then there is a polynomial time $7/6$ approximation algorithm for \STSP with respect to $\OptSERup(G)$.
\end{theorem}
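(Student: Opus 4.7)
The plan is to adapt the proof of Theorem~\ref{thm:stsp} so that one may push $\alpha$ from $4$ up to $6$ in the linear program $\LPX$. A feasible solution to $\LPX$ with $\alpha = 6$ extends to a tour by paying at most cost~$2$ for each of the at most $n/6$ missing edges, for a total cost of at most $n + n/6 = 7n/6$; combined with the assumption $\OptSERup(G) = n$ (which by Lemma~\ref{lem:pluseps} is without loss of generality) this yields the claimed ratio. Since $x^*$ is half-integral, the scaling factor of the proof of Theorem~\ref{thm:stsp} is $N = 2$, and each edge of the support carries at most two sub-edges of~$x$ and of~$z$, each of unit scaled value.

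As in Theorem~\ref{thm:stsp}, I would first run Algorithm~\ref{alg:stsp} to obtain a $2$-matching $M$ in $G_{x^*}$ whose components cannot be improved via Lemmas~\ref{lem:nosingle}, \ref{lem:alternating}, \ref{lem:alternatingclosed}, or~\ref{lem:threecycleimp}. For every component $C$ of $M$, the goal is to place $6N = 12$ scaled sub-edges of $x$ on $C$ without violating~\eqref{appcon:double}. Cycle components of size at least~$5$ already receive the required sub-edges directly from Lemma~\ref{applem:pointset}. For triangles and $4$-cycles, half-integrality forces every incident edge to carry a $1/2$- or $1$-valued scaled sub-edge, and applying Lemma~\ref{applem:pointset} to all proper subsets of the cycle shows that at least $12$ sub-edges of $x$ are available incident to $V(C)$ unless a chord would enable an improvement via Lemma~\ref{lem:threecycleimp} or Lemma~\ref{lem:alternating}.

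For a path component, the $2N$ sub-edges per endpoint already allocated in the proof of Theorem~\ref{thm:stsp} cover only $4N = 8$ sub-edges out of the target $12$; I would collect the remaining $N$ sub-edges at each endpoint by reaching one step into the path. Let $s$ be an endpoint and $u$ its $M$-neighbour. Half-integrality leaves only two configurations at $u$: either $\{s,u\}$ has value~$1/2$, in which case the sub-edge of $z$ on $\{s,u\}$ together with an \inward alternating continuation through $u$ supplies the extra~$N$ sub-edges; or $\{s,u\}$ has value~$1$, forcing the non-matching edges incident to~$u$ to have total LP value at least~$1$ (two $1/2$-edges or one $1$-edge), which furnishes a truncated alternating path whose non-improvability is certified by Lemma~\ref{lem:alternating}. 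Either way, the new sub-edges are disjoint from those already claimed for~$C$ at $s$ and from those claimed for any other component.

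The main obstacle will be the shortest path components, in particular paths of length~$1$: by Lemma~\ref{applem:pointset} the total LP value incident to a two-vertex component equals $4 - x^*_{su} \le 7/2$, falling well short of the required~$6$, so extra mass must be supplied from farther edges. Here I would exploit that both endpoints carry several incident non-matching edges of value at least~$1/2$; Algorithm~\ref{alg:stsp} has already ruled out $3$-cycle closures by Lemma~\ref{lem:threecycleimp} and short closed alternating paths by Lemma~\ref{lem:alternatingclosed}, so tracing one non-matching edge at each endpoint into the next matching edge produces two disjoint alternating continuations, and a pairing analogous to $\mathcal F_x$ and $\mathcal F_z$ from the proof of Theorem~\ref{thm:stsp} shows that $12$ sub-edges can be assigned to the component without collision. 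Paths of lengths~$2$ and~$3$ are handled by the same pairing but with shorter alternating continuations, the obstructions being eliminated by Algorithm~\ref{alg:stsp}.
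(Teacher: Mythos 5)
Your high-level plan matches the paper's: keep the $\LPX$/sub-edge framework from Theorem~\ref{thm:stsp}, raise $\alpha$ to $6$, and use $N=2$ because $x^*$ is half-integral, so that $6N=12$ sub-edges per component must be found. However, several of the steps you sketch would not go through as stated, and you miss ingredients that the paper's proof actually needs.

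First, your claim that for triangles and $4$-cycles ``at least $12$ sub-edges of $x$ are available incident to $V(C)$'' is incorrect. By Lemma~\ref{applem:pointset} one has $x^*(\delta(V(C)) \cup \delta(V(C),V(C))) \ge |V(C)|+1$, i.e.\ $4$ for a triangle and $5$ for a $4$-cycle, hence only $8$ resp.\ $10$ incident sub-edges at $N=2$ --- short of $12$. The extra sub-edges must come from \emph{matching} edges of other components reached by alternating paths starting at $C$ (the paper assigns the second sub-edge $e_2$ of each such path), and one then has to argue carefully that these are not simultaneously claimed by the components that contain them. For triangles this cannot always be done with the improvements from Lemmas~\ref{lem:alternating}--\ref{lem:threecycleimp}; the paper must introduce a genuinely new improvement type (length-$3$ alternating paths from a triangle vertex to a path end vertex, whenever application yields a cycle longer than three), which your proposal does not anticipate.

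Second, the set $\apset$ has to be modified in two ways before the accounting works: new inward alternating paths must also be grown from unmarked sub-edges \emph{leaving cycles} (not just from path end vertices), and certain paths must be recombined to eliminate truncated length-one alternating paths that are sub-edges of $1$-edges. Without the latter swap your plan to ``reach one step into the path'' for the extra $N$ sub-edges per endpoint breaks down precisely at those $1$-edges.

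Third, the treatment of short path components is where the bulk of the case analysis lives and cannot be dismissed with the sentence about ``two disjoint alternating continuations.'' The cases in the paper are organized by the length and type of the alternating path $Q$ leaving the endpoint $s$ (length $\ge 3$ ending at a cycle, at another path, or back at $s$ at lengths $5$ and $3$; truncated of length $1$; of length exactly $2$), and for the length-$2$ and closed length-$5$ paths a non-trivial collision-avoidance argument is needed against alternating paths assigned to other endpoints of the same path. Your sketch recognizes that the two-vertex path has only $4 - x^*_{su} \le 7/2$ incident LP value, which is the right observation, but an appeal to ``a pairing analogous to $\mathcal F_x$ and $\mathcal F_z$'' does not by itself show disjointness from the sub-edges already assigned at the other endpoint or at neighboring cycles.

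In summary, you chose the right architecture and correctly identified where it is tight, but the places you wave at (triangle accounting, the $\apset$ modifications, the new improvement, and the short-path case analysis) are exactly the places where the real work is, and at least one of your explicit claims (incident sub-edge counts for small cycles) is false.
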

\begin{proof}
  Since Proposition~\ref{pro:aux} preserves half-integrality, also Lemma~\ref{lem:pluseps} does and we may assume that all edges in $x^*$ have a cost of one.
  Let $M$ be the $2$-matching obtained from Algorithm~\ref{alg:stsp}. 
  Additionally, during the execution of the algorithm we also consider a very specific type of improvement, discussed in the end of this proof.

  The analysis of the algorithm is similar to the previous one, but for the constraints \eqref{appcon:alpha} we show that we may choose $\alpha=6$.
  In particular, we use the same $z$ as in the proof of Theorem~\ref{thm:stsp}.

  We use a modified set $\apset$, where for each new alternating path we mark the edges as we did in the original construction of $\apset$.
  We start with the construction of $\apset$ as in the proof of Theorem~\ref{thm:stsp}.
  Then, for each unmarked sub-edge of $x$ leaving a cycle, we extend a new \inward alternating path in the same manner as we did for end vertices of paths. 
  As a result, additionally to the $N$ sub-edges incident to each end vertex of a path of $M$, all sub-edges within \apset leaving cycles are marked to be ends of alternating paths.

  We swap certain edges in order to prevent a problematic type of path.
  We want to avoid truncated alternating paths of length one that are sub-paths of $1$-edges.

  Let $Q\in\apset$ be such a truncated alternating path of length one starting from $s$ such that its sub-edge~$e$ is part of a $1$-edge $\{s,u\}$.
  Let $e'$ be the second sub-edge of $x_{\{s,u\}}$.
  By the construction of $\apset$, $e'$ is in a (truncated) alternating path $Q' \in \apset$ and $Q'$ does not stop at $u$.

  By Lemma~\ref{lem:enddegree}, there is a third sub-edge $e''$ of $x$ with $e''\notin E(M)$ that is incident to $s$ and $e''$ is not a sub-edge of a $1$-edge.
  If $e''$ is not marked, we remove $Q$ from $\apset$ and extend a new \inward alternating path from~$e''$.
  Otherwise, we recombine the alternating paths incident to $s$.
  Let $Q'' \in \apset$ be the alternating path containing $e''$.
  If $Q''$ ends in $s$, $Q'$ does not since there are exactly two alternating paths ending in~$s$ and also $Q$ ends in $s$. 
  Then we cut $Q'$ at $s$.
  We extend the part of $Q'$ not containing $e'$ with $e$ and~$Q$ becomes the remaining path of $Q'$ starting with $e'$ at $s$.
  If $Q''$ does not end in $s$, we cut $Q''$ at $s$. Then the part of $Q''$ containing $e''$ ends in $s$ and we extend the other part of $Q''$ with $e$.
  We apply these modifications to all ``bad'' alternating paths in $\apset$ and update the marking of edges.

  Since we only consider edges in the support of $x^*$, for each matching edge $e$ such that a sub-edge of~$e$ is in $Q$, $z^*_e = x^*_e = 1/2$.

  Let us construct a solution $y$ to $\LPN$.
  The types of assignments are performed in the order as listed, that is, each type of assignment is applied iteratively until there is no further occurrence of that type.
  As in Theorem~\ref{thm:stsp}, we argue via the extended version of $\LPN$ for sub-edges.
  \smallskip

\noindent
{\bf Paths.}
\smallskip

\noindent
  Fix an end vertex $s$ of a path $P_s$ in $M$ and a path $Q \in \apset$ starting from~$s$.
  Let $s'$ be the vertex adjacent to~$s$ in $P_s$.
  We aim to find a set of $6$ sub-edges for~$s$ such that for each sub-edge $e$ in the set, $y_{P_s,e} = 1$.
  This way, the constraints \eqref{appcon:alpha} are satisfied for $\alpha=6$ since $P_s$ has two ends and thus 12 assigned sub-edges.

  In the following, we either directly find $6$ sub-edges whose values were assigned to $P_s$ or, since the degree of $s$ is at least three, we assign two times the values of $3$ sub-edges to $P_s$.
  \smallskip

  \noindent
  {\bf -- $\mathbf{Q}$ is a path of length at least 3.} 
  \smallskip
  
  \noindent
  Let $e_1,e_2,e_3$ be the first three sub-edges of $Q$.
  We set $y_{P_s,e_1} = y_{P_s,e_2} = y_{P_s,e_3} = 1$.
  This assignment covers the following situations entirely, since the assignment can take place on both sides of $Q$ without assigning values twice 
  (and therefore, in particular, we do not violate the constraints~\eqref{appcon:double}).
  \begin{itemize}
  \item ${Q}$ leads to the end vertex of a path which is not $s$.
  Then its length is at least $7$, by Lemma~\ref{lem:alternating}\ref{imp:path},\ref{imp:same}.
  \item ${Q}$ leads to a cycle.
  Then its length is at least $5$, by Lemma~\ref{lem:alternating}\ref{imp:cycle}.
  \item ${Q}$ leads from ${s}$ to ${s}$ and its length is at least 7.
  \end{itemize}

  Otherwise, we have to assign additional values. We distinguish the following cases.
  \smallskip

  \noindent
  Case 1: Suppose ${Q}$ leads from ${s}$ to ${s}$ and its length is exactly 5.
  Then we assign $y_{P_s,e} = 1$ for all~$e$ in~$Q$ and we still have to find an additional sub-edge.
  Let us consider the edge $\{s,s'\}$. 
  \smallskip

  \noindent
  Case 1.1: If $x_{\{s,s'\}} = 2$, none of its sub-edges is in any path of $\apset$.
  We choose one of the two sub-edges of $\{s,s'\}$, $e$, such that $y_{P,e}=0$ for all paths $P$ in $M$ and set $y_{P_s,e}=1$.
  Note that we that we consider $\{s,s'\}$ at most twice and therefore we can always find such a sub-edge $e$.
  \smallskip

  \noindent
  Case 1.2: If $x_{\{s,s'\}} = 1$, there is a sub-edge $e' \in \delta(v)$ that is neither a sub-edge of~$x_{\{s,s'\}}$ nor in $Q$.
  Then we set $y_{P_s,e'}=1$.
  We have to exclude that $e'$ was used for a previous assignment.
  \smallskip

  \noindent
  Case 1.2.1: If~$e'$ is in no alternating path of $\apset$, a double use would directly imply the existence of a basic improvement.\\
  Case 1.2.2: Otherwise, let $Q'\in\apset$ be the path containing $e'$.\\
  Case 1.2.2.1: If $s$ is an end vertex of $Q'$, there is nothing else to do since in this case, $e'$ is already assigned tp $P_s$ (\ie, $y_{P_s,e'}=1$) due to $Q'$.\\
  Case 1.2.2.2: Otherwise let $Q'_1$ and $Q'_2$ be the two sub-paths of $Q'$ that end at $s$ such that $e'$ is in $Q'_2$.
  Note that $Q'_2$ itself is a (truncated) \inward alternating path.
  By Lemma~\ref{lem:alternatingclosed}\ref{imp:closed1}, if $Q'_1$ leads to an end vertex, either its length is at least two and it leads to a cycle, or the length is at least four and it leads to the end of a path in $M$.
  In all cases, we did not assign $e'$ to any component yet.

  \noindent
  Case 2: Suppose $Q$ leads from $s$ to $s$ and its length is at exactly $3$.
  Let $s,u,v$ be the three vertices of $Q$ where~$\{u,v\}$ is an edge of a path $P$ in $M$.
  Lemma~\ref{lem:threecycleimp} implies that neither $u$ nor $v$ is an end vertex of~$P$.
  Also none of the sub-edges incident to $u$ or $v$ in $P$ is used by any of the assignments considered until now since this would require a violation of property (b).
  Let us consider the set $S$ of sub-edges incident to $u$ or $v$ within $Q$ or $P$.
  Then $|S| \ge 5$ and for each $e \in S$, we set $y_{P_s,e} = 1$.
  If $|S|=5$, we assign one more sub-edge to~$P_s$ analogously to Case 1.
  \smallskip

  \noindent
  {\bf -- $\mathbf{Q}$ is a path of length less than 3.}
  \smallskip

  \noindent
  Case 1: Suppose $Q = e$ is a truncated alternating path of length one.
  Then, due to the edge pairings in the construction of $\apset$ and the transformation of $\apset$, $Q$ is incident to a $1$-edge $e'$ of $P_s$ in direction towards $s$.
  Note that~$e'$ is not at either end of $P_s$ and, since $Q$ is a single edge to an end vertex,~$e'$ has not been used by any of the previous assignments (since we have Lemma~\ref{lem:threecycleimp}).
  We set $y_{P_s,e''} = 1$ for each of the two sub-edges~$e''$ of $e'$, and we set $y_{P_s,e}= 1$.
  Note that $e''$ is not assigned due to two different truncated alternating paths of length one, since otherwise there is an \inward alternating path of length three considered in Lemma~\ref{lem:alternating}\ref{imp:cycle}.
  \smallskip

  \noindent
  Case 2: Suppose that $Q = (e,e')$ is an alternating path of length exactly two.
  We assign the values similar to the proof of Theorem~\ref{thm:stsp}.
  That is, we set $y_{P_s,e} = y_{P_s,e'} = 1$.\\
  Case 2.1: If we find a sub-edge~$e''$ of $x$ adjacent to $e'$ that was not yet assigned (see the proof of Theorem~\ref{thm:stsp}), we set $y_{P_s,e''}=1$.
  \smallskip

  \noindent
  Case 2.2: Suppose that we were not able to assign three sub-edges.
  Then, as we have seen in the proof of Theorem~\ref{thm:stsp}, we can find an alternating path $\tilde{Q}$ of length $5$ from $s$ to some end vertex $t$ such that the first and last edge of $\tilde{Q}$ does not violate the \inward property.
  If $s=t$, the situation is exactly the same as for \inward alternating paths of length $5$ from $s$ to $s$ considered above.
  Otherwise, as we have seen in the proof of Theorem~\ref{thm:stsp}, we may assume that all vertices of $\tilde{Q}$ are in $P_s$ and the neighbor of $s$ in $P_s$ is not adjacent to the end vertex of another path.
  We distinguish two possibilities.
  (i) If $Q$ is the only truncated alternating path of length $2$ starting from $s$, we assign the value of a third edge in the same way as we did for the last edge in the case of \inward alternating paths from $s$ to $s$ of length $5$.
  (ii) Otherwise, we conclude that there is a second alternating path~$\tilde{Q}'$ of length $5$ from $s$ to $t$ and for $t$, we already have assigned $6$ sub-edges from $\tilde{Q}$ and $\tilde{Q}'$ to $P_s$.
  For each of the two alternating paths, we assigne the value of a third edge in the same way as we did for the last edge in the case of \inward alternating paths from $s$ to $s$ of length $5$, one of them incident to $s$ and the other one incident to $t$.
  \smallskip

  \noindent
  \textbf{Cycles.}
  \smallskip

  \noindent
  For each cycle $C$ of $M$ we set $y_{C,e}=1$ for all sub-edges $e$ of $x$ incident to vertices of $C$.
  \smallskip

  \noindent
  {\bf -- $C$ has length at least 5.}
  By Lemma~\ref{applem:pointset}, we have assigned a sufficient amount to satisfy the constraints~\eqref{appcon:alpha} of \LPN.
  \smallskip

  \noindent
  {\bf -- $C$ has length 4.}
  There are at least two different (truncated) alternating paths $Q,Q'$ starting from~$C$ with the edges $e_1,e_2$ resp.~$e'_1,e'_2$, 
  since there are at least $4$ sub-edges of $x$ leaving $C$ and there are no truncated alternating paths of length one starting from cycles. 
  Then both $e_2$ and $e_2'$ are matching edges.
  We set $y_{C,e_2} = y_{C,e'_2} = 1$ and thus, by Lemma~\ref{applem:pointset}, we have assigned a sufficient amount to satisfy the constraints \eqref{appcon:alpha} of \LPN.
  \smallskip

  \noindent
  {\bf -- $C$ has length 3.}
  Similar to $4$-cycles, if $|V(C)|=3$, for each alternating path $Q$ starting from $C$ with $e_1,e_2$, we set $y_{C,e_2}=1$. 
  However, the assignment does not provide a guarantee that sufficiently many sub-edges were assigned to $C$.
  By the subtour elimination constraints, there are at least $4$ alternating paths starting from $C$. 
  \smallskip

  \noindent
  Case 1: If no sub-edge $e_2$ was used twice, we have assigned an amount of at least two to $C$ additionally to the previously assigned value of at least $4$ (by Lemma~\ref{applem:pointset}).
  \smallskip
  
  \noindent
  Case 2: Otherwise, an edge $e_2$ was assigned twice, which implies that there is an alternating path from $C$ to $C$ of length exactly three. 
  By Lemma~\ref{lem:alternating}\ref{imp:cycle}, such an alternating path cannot have two different end vertices.
  We therefore name the vertices such that $V(C)=\{v_1,v_2,v_3\}$ and there is an alternating path~$Q$ of length three from $v_1$ to $v_1$.
  Then $\mathrm{deg}_{G_{x^*}}(v_1)=4$ and therefore $x^*_{\{v_1,v_2\}} = x^*_{\{v_1,v_3\}} = 1/2$.
  \smallskip
  
  \noindent
  Case 2.1: If also $x^*_{\{v_2,v_3\}} = 1/2$, then $\mathrm{deg}_{G_{x^*}}(v_2) = \mathrm{deg}_{G_{x^*}}(v_3) = 4$ and there are $6$ alternating paths starting from $C$ with at most three edges assigned twice.
  Then the total amount of values assigned to~$C$ is at least $6$, which is sufficient.
  \smallskip

  \noindent
  Case 2.2: The remaining case is that $x^*_{\{v_2,v_3\}} = 1$, in which case we have assigned $11$ sub-edges 
  to $C$ (see Fig.~\ref{appfig:triangle}).
  Thus, we aim to assign one more sub-edge to $C$.
  Let $u,u'$ be the two remaining vertices of $Q$ and let $P$ be the path of $M$ that contains $u$ and $u'$. 
  By Lemma~\ref{lem:alternating}\ref{imp:cycle}, there is no \inward alternating path of length less than $7$ from $v_1$ to another end vertex and therefore there are vertices $w,w' \in V(P)$ such that $\{w,u\},\{u,u'\},\{u',w'\} \in E(P)$.
  By the definition of $z$, $x^*_{\{u,u'\}} = 1/2$.
  \smallskip

  \noindent
  Case 2.2.1: If either $u$ or $u'$ is incident to a $1$-edges $e'$, we choose one of its two sub-edges $e'_1$ such that $y_{C',e'_1}=0$ for all components $C'$ and set $y_{C,e'_1}=1$.
  Since as in our previous discussion $e'$ is considered at most twice (once from each side), we can always find such a sub-edge $e'_1$.

  \noindent
  Case 2.2.2: Otherwise, there are vertices $s,s'$ such that there are edges $\{u,s\},\{u',s'\} \in E(G_{x^*})$.
  
  If a sub-edge $e'$ of $\{u,w\}$ has not yet been assigned, it is save to set $y_{C,e'} = 1$ and we are done, since consecutive appearences of Case 2.2.2 in $P$ are excluded by Lemma~\ref{lem:alternating}\ref{imp:cycle}.
  Therfore we may assume that $e'$ has already been assigned, which implies that either $u$ or $w$ has a connecting edge leading to an end vertex that is not $v_1$.
  However, $w$ cannot have an incident connecting edge to an end vertex, since otherwise either there is an improvement or its path in \apset does not include $e'$.
  We conclude that $s$ is an end vertex.
  In particular, $s$ has to be an end vertex of $P$ since otherwise there is an improvement.

  We now introduce a new type of improvement, announced in the beginning of the proof.
  Let us recall the discussion of alternating paths of length three for paths (Lemma~\ref{lem:alternating}\ref{imp:three}).
  We did not allow such paths to have ends at cycles, since the application of the alternating path may introduce a new cycle $C'$ without changing the number of components and thus it does not necessarily provide an improvement.
  There is an improvement, however, if the alternating path starts at a cycle of length three and the newly created cycle has a length of more than three: 
  the number of vertices in cycles is increased without decreasing the number of cycles or increasing the number of components.
  In the algorithm, we now additionally consider alternating paths of length three between two end vertices if such an increase of the cycle length occurs.
  In particular, in the following we assume that from $C$ there is no alternating path of length three to an end vertex of a path such that applying the alternating path produces a cycle of length at least four.
  Since the improvement is an application of an alternating path, we do not introduce end vertices of degree two.
  The total number of considered improvements for all appearances of the special situation only increases by a polynomial factor. 
  The additional improvement implies that $\{s,w\} \in E(P)$. 
  Note that $\{s,w\}$ must have a sub-edge $e''$ that has not yet been assigned: 
  it is not contained in any path of \apset as one of the first three edges and alternating paths from $s$ may only have assigned a sub-edge of $\{s,w\}$ if $\{s,w\}$ is a $1$-edge.
  We set $y_{C,e''}=1$ such that there are at least $12$ sub-edges assigned to $C$.
      \begin{figure}[h]
        \centering
        \begin{tikzpicture}[scale=0.8]
          \node (v1)  at (2.5,1) [vertex, label=right:$v_1$]{};
          \node (v2)  at (2,2) [vertex, label=above:$v_2$]{};
          \node (v3)  at (3,2) [vertex, label=above:$v_3$]{};
          \node (s)   at (0,0) [vertex, label=below:$s$]{};
          \node (w)   at (1,0) [vertex, label=below:$w$]{};
          \node (u)   at (2,0) [vertex, label=below:$u$]{};
          \node (up)  at (3,0) [vertex, label=below:$u'$]{};
          \node (wp)  at (4,0) [vertex, label=below:$w'$]{};
          \node (sp)  at (5,0) [vertex, label=below:$s'$]{};
          \node (v2p) at (1,2.2) {};
          \node (v3p) at (4,2.2) {};
          \node (wa) at (0,-0.2) {};
          \node (wpa) at (5,-0.2) {};
          \node (sa) at (-1,0) {};
          \node (spa) at (6,0) {};
          \draw[] (s) -- (w);
          \draw[dashed] (w) -- (u);
          \draw[dashed] (u) -- (up);
          \draw[dashed] (up) -- (wp);
          \draw[] (wp) -- (sp);
          \draw[dashed] (v1) -- (v2);
          \draw[dashed] (v1) -- (v3);
          \draw[] (v2) -- (v3);
          \draw[dashed] (v2) -- (v2p);
          \draw[dashed] (v3) -- (v3p);
          \draw[dashed] (v1) -- (u);
          \draw[dashed] (v1) -- (up);
          \draw[dashed] (s) to[bend left] (u);
          \draw[dashed] (sp) to[bend right] (up);
          \draw[dashed] (w) -- (wa);
          \draw[dashed] (wp) -- (wpa);
          \draw[dashed] (sp) -- (spa);
          \draw[dashed] (s) -- (sa);
        \end{tikzpicture}
      \caption{Special case of three cycles. The dashed edges have an LP-value of $1/2$.}
      \label{appfig:triangle}
    \end{figure}
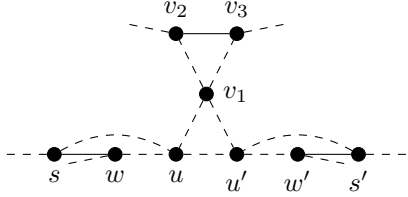
\end{proof}

\section{Subcubic Support Graphs}
\label{app:subcubicfractionallyhamiltoniansupports}
In this section we show the following theorem. 

\begin{theorem}
\label{appthm:subcubicapprox}
  For instances of \STSP admitting an optimal basic solution to $\SER(G)$ with subcubic support such that $\OptSER(G)=|V(G)|$, the tight upper bound on the integrality gap is $10/9$.
\end{theorem}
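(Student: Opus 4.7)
The plan is to apply the LP-accounting framework of Theorems~\ref{thm:stsp} and~\ref{thm:halfint} with the tighter parameter $\alpha=9$, so that Algorithm~\ref{alg:stsp}, augmented with the cycle-lengthening improvement introduced in the proof of Theorem~\ref{thm:halfint}, produces a 2-matching with at most $n/9$ components. Completing this matching to a tour by cost-2 edges then yields cost at most $n+n/9 = (10/9)\OptSER(G)$. Since the Williamson instance of Fig.~\ref{fig:simple_gap_undirected} has an optimal $\SER$-solution that is basic, subcubic, and satisfies $\OptSER = n$, the resulting bound of $10/9$ is tight.

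First I would set up the structural assumptions. Since $\OptSER(G)=n$ is an integer, $\SERup(G)=\SER(G)$, and Lemma~\ref{lem:pluseps} shows that every edge in the support has cost one. By Balinski's classical characterization of basic solutions to the fractional 2-matching polytope, $x^*$ is half-integral, so together with subcubicity every vertex of the support is either degree-2 with two incident 1-edges or degree-3 with exactly one incident 1-edge and two incident $1/2$-edges. Running Algorithm~\ref{alg:stsp} (plus the cycle-lengthening improvement) produces a 2-matching $M$ to which no improvement of Lemmas~\ref{lem:nosingle}, \ref{lem:alternating}, \ref{lem:alternatingclosed}, or~\ref{lem:threecycleimp} applies, and whose path-end vertices have support-degree at least $3$ by Lemma~\ref{lem:enddegree}; combined with half-integrality, this means each path-end has exactly two incident non-matching $1/2$-edges.

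The core step is to build a feasible solution $y$ to $\LPN$ (as defined in Section~\ref{sec:integralityupperboundsfor12stsp}) with $\alpha = 9N$, i.e.\ to assign $9$ units of LP value to every component of $M$. For cycles of length at least $8$ this follows directly from Lemma~\ref{applem:pointset}; for shorter cycles I would adapt the cycle analysis of Case~2 in the proof of Theorem~\ref{thm:halfint}, exploiting that under subcubicity and half-integrality, every vertex of a short cycle must have an incident $1/2$-valued escaping edge, and these escaping edges start disjoint inward alternating paths whose contributions add up to $9$. For path components I aim to assign $4.5$ units per end vertex $s$, distributed as in Theorems~\ref{thm:stsp} and~\ref{thm:halfint}: each inward alternating path of length at least $3$ contributes via its first three sub-edges ($1/2 + 1/2 + 1/2 = 3/2$ of LP mass together with a further half-edge at~$s$), path-to-path paths of length~$\ge 7$ contribute on both sides, and the configurations ruled out by Lemma~\ref{lem:alternating}(\ref{imp:path}--\ref{imp:special}) and Lemma~\ref{lem:alternatingclosed} force enough additional LP mass into the accounting of $s$.

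The main obstacle is the fine bookkeeping for the tight configurations that Williamson's gadget realizes: truncated inward alternating paths of length $2$, inward $s$-to-$s$ paths of length $5$, and $3$-cycles with a single $1$-edge. For the $\alpha=9$ target these cases must be handled with essentially zero slack, so each sub-edge of LP value $1/2$ must be accounted to exactly one component. Here the subcubic hypothesis is decisive: it bounds the branching of alternating paths at each vertex, limits the overlap between the paths of the set $\apset$ used to define $y$, and restricts the value palette to $\{1/2,1\}$, reducing the analysis to a finite case check that parallels — but is strictly sharper than — the case analysis of Theorem~\ref{thm:halfint}. Either a Lemma~\ref{lem:alternating}--\ref{lem:threecycleimp} improvement applies (contradicting the output of the algorithm), or the local LP mass around the configuration matches exactly $9$ per component, which is precisely the ratio realized by Williamson's instance.
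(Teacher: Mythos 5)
Your plan is to push the sub-edge/LP-mass accounting of Theorems~\ref{thm:stsp} and~\ref{thm:halfint} up to $\alpha = 9$. That is a genuinely different route from the paper, which explicitly abandons LP-value accounting here: the proof of Theorem~\ref{appthm:subcubicapprox} instead places one \emph{coin on each vertex} and redistributes those coins to components (Rules~C1--C4 for cycles and P1--P4 for paths), in the spirit of Berman--Karpinski. Unfortunately there are two concrete problems with your approach.

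First, your appeal to Balinski is not available. The theorem hypothesis is that $x^*$ is an optimal \emph{basic solution to $\SER(G)$} with subcubic support --- not that it is a basic solution of the fractional perfect $2$-matching polytope. Basic solutions of $\SER(G)$ need not be half-integral (the subtour constraints change the vertex set of the polytope), and indeed the paper's third contribution is stated as being ``without requiring half-integrality.'' Your ``finite case check'' with value palette $\{1/2,1\}$ therefore does not apply; once arbitrary rationals are allowed, the tight local bookkeeping you describe collapses. The paper's coin argument sidesteps this because coins live on vertices, not on edges, so the edge values never enter the accounting.

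Second, even if you granted half-integrality, the LP mass near a path end is simply too small for $\alpha=9$. By Lemma~\ref{lem:enddegree} an end vertex $s$ has degree $3$, with one matching $1$-edge and two non-matching $1/2$-edges, so exactly two alternating paths start at $s$. Each such path contributes at most $3$ of its leading sub-edges without double-assignment (as used in Theorem~\ref{thm:halfint} to reach $\alpha=6$); with only two paths per end this gives $6$ sub-edges per end, i.e.\ $12$ per path component, whereas $\alpha=9$, $N=2$ requires $18$. Pushing further along the alternating paths to grab a fourth or fifth sub-edge collides with the opposite end once the path length is near the Lemma~\ref{lem:alternating} threshold of $7$, so the shortfall of $\approx 3$ sub-edges per end does not close. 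This is precisely why the paper switches technique: the coin redistribution collects one full coin from each of the (up to four) vertices in $N_{\mathrm{end}}(P)$ \emph{plus} half a coin from each of their component neighbours, yielding $1 + 4\cdot(1+\tfrac12+\tfrac12) = 9$ in the clean case, a budget the LP-mass method cannot match with only degree-$3$ ends. You also omit the structural Lemma~\ref{applem:cyclecut} on short cycles (which is where the basic-solution assumption is actually used), needed to guarantee the two path-forming escape vertices that drive the cycle-side accounting.

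The lower-bound half of your argument (Williamson's instance of Fig.~\ref{fig:simple_gap_undirected}) is fine and matches the paper.
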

The argumentation to proof this theorem differs considerably from the previous two, because we do not assign the LP values to the components.
Instead, we charge every vertex with a coin and redistribute these coins fractionally such that each component obtains at least $9$ coins.
The basic idea is that cycles in the $2$-matching are well-behaved and that for each path we can collect two coins for each edge leaving an end vertex. However, the complications arise from the interferences if there are edges from the end of a path to the path itself.

Before we present the actual result of this subsection, let us observe some properties of cycles in a $2$-matchings $M$ within the support graph.

For each component $C$ of $M$, let $\out(C)$ be the set of edges in $\delta(V(C))$ incident to end vertices of $C$.

\begin{lemma}
\label{applem:cyclecut}
  Let $G_x$ be a subcubic support graph of an optimal basic solution $x$ of $\SER(G)$, and let~$C$ be a cycle of a $2$-matching $M$ in $G_x$ with $|V(C)| \le 6$.
  Then $|\out(C)| \ge 3$ and there are two vertices $u,v \in V(C)$ such that $u$ and $v$ are path-forming and $|\out(C) \cap \delta_{G_x}(\{u,v\})| = 2$.
\end{lemma}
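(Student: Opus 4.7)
The plan is to split the proof into two parts: first showing $|\out(C)| \ge 3$, and then exhibiting the required path-forming pair. Throughout we may assume $V(C) \subsetneq V(G)$, since otherwise $M$ is a Hamiltonian cycle of $G$ and $\out(C)$ is empty. We will use the standard bound $x_e \le 1$ for every $e = \{u,v\} \in E(G_x)$, obtained by applying the subtour elimination constraint to $S = \{u,v\}$: this yields $4 - 2x_e = x(\delta_{G_x}(S)) \ge 2$, hence $x_e \le 1$.

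Because every vertex of a cycle qualifies as an end vertex of $M$, we have $\out(C) = \delta_{G_x}(V(C))$. The subtour constraint gives $x(\delta_{G_x}(V(C))) \ge 2$, which combined with $x_e \le 1$ already yields $|\out(C)| \ge 2$. To upgrade to $|\out(C)| \ge 3$, suppose for contradiction that $|\out(C)| = 2$; both outside edges $e_1, e_2$ then have $x$-value exactly $1$, and their endpoints $u_1, u_2 \in V(C)$ are distinct by subcubicity. Each vertex of $V(C) \setminus \{u_1,u_2\}$ has degree $2$ in $G_x$, so its two cycle edges both take value $1$. Propagating around $C$ forces every cycle edge to have value $1$; but then $x(\delta_{G_x}(u_i)) = 1 + 1 + 1 = 3$, contradicting the equality constraint.

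For the path-forming pair, the key observation is that any two vertices $u, v$ adjacent on $C$ are path-forming: removing the cycle edge $\{u,v\}$ leaves a Hamiltonian path of $V(C)$ from $u$ to $v$ consisting only of cycle edges, which all lie in $G_x$. By subcubicity, each vertex of $V(C)$ carries at most one outside edge, so the set $U \subseteq V(C)$ of endpoints of outside edges satisfies $|U| = |\out(C)| \ge 3$, and it suffices to find two vertices of $U$ adjacent on $C$. For $|V(C)| \in \{3,4,5\}$ the independence number of $C_{|V(C)|}$ is at most $2$, so three vertices of $V(C)$ must contain an adjacent pair. For $|V(C)| = 6$ the same conclusion holds whenever $|U| \ge 4$ (as $C_6$ has independence number $3$) or $U$ of size $3$ is not an independent set.

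The remaining case, which is the main obstacle, is $|V(C)| = 6$ with $U$ one of the two alternating independent sets of $C_6$, WLOG $U = \{v_0, v_2, v_4\}$ on the cycle $v_0 v_1 v_2 v_3 v_4 v_5 v_0$. The handle is the degree-sum identity
\[
  \sum_{v \in V(C)} \deg_{G_x}(v) = 2 \cdot 6 + 2k + |\out(C)| = 15 + 2k,
\]
where $k$ counts the chords of $C$ in $G_x$. Since $v_0, v_2, v_4$ contribute $3$ each and $v_1, v_3, v_5$ each contribute at least $2$, this gives $\deg_{G_x}(v_1) + \deg_{G_x}(v_3) + \deg_{G_x}(v_5) = 6 + 2k$. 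If $k = 0$ then $v_1, v_3, v_5$ all have degree $2$, so the propagation argument of the previous paragraph forces every cycle edge to have value $1$, contradicting the equality constraint at $v_0$. Hence $k \ge 1$, so some chord $\{v_i, v_j\}$ with $i, j \in \{1,3,5\}$ exists; in each of the three possible positions of this chord, one can concatenate cycle edges with the chord into a Hamiltonian path of $V(C)$ between two vertices of $U$. For instance, the chord $\{v_1, v_3\}$ witnesses that $v_0$ and $v_2$ are path-forming via $v_0 - v_5 - v_4 - v_3 - v_1 - v_2$; the other two chord cases are symmetric.
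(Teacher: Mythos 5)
Your argument for $|\out(C)| \ge 3$ has a gap, and it is exactly the place where the paper invokes the hypothesis that $x$ is a \emph{basic} solution, which you never use. You assert that ``each vertex of $V(C)\setminus\{u_1,u_2\}$ has degree $2$ in $G_x$,'' but this is false in general: such a vertex can carry a chord of $C$ (an edge of $G_x$ with both ends in $V(C)$) and hence have degree~$3$. In fact the paper's proof observes precisely the opposite of your claim: the equality constraints force the cycle edges at $u_1,u_2$ to be non-$1$-edges, hence the cycle neighbors of $u_1,u_2$ cannot have degree~$2$ and therefore \emph{must} carry chords. Once chords are allowed, the propagation argument collapses. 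For instance, on a $4$-cycle $u_1,w_1,u_2,w_2$ with the chord $\{w_1,w_2\}$, outside $1$-edges at $u_1,u_2$, and $x$-value $1/2$ on all four cycle edges and $1$ on the chord, every equality and subtour constraint is satisfied with subcubic support and $|\out(C)|=2$. What saves the lemma is that such a point cannot be an \emph{extreme} point of $\SER(G)$ --- the paper sketches that the restriction to $C$ decomposes as a convex combination of two $u_1$--$u_2$ paths --- and you would need to supply an analogue of this argument (for $|V(C)|\in\{4,6\}$, after ruling out $3$ and $5$ by parity of the chords) to close the gap. Your propagation observation is only valid for $|V(C)|=3$, where chords are impossible.

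Your treatment of the second claim (the existence of a path-forming pair $u,v$ with $|\out(C)\cap\delta_{G_x}(\{u,v\})|=2$) is actually cleaner and more complete than the paper's very terse ``follows immediately.'' The observation that adjacent cycle vertices are automatically path-forming, together with the independence-number bound for $C_3,C_4,C_5$, is a nice way to dispose of the short cycles, and your degree-sum identity for the exceptional $|V(C)|=6$ alternating case correctly forces a chord among $\{v_1,v_3,v_5\}$ (here the $k=0$ propagation is legitimate because you have explicitly assumed away the chords). But this whole part is conditioned on $|\out(C)| \ge 3$, so the missing basicity argument in the first part is the blocking issue.
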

\begin{proof}
  Clearly $|\out(C)| \ge 2$, since the constraints of \SER imply that $x_e \le 1$ for each edge $e \in E(G)$, unless $G$ consists of only $2$ vertices (if $x_e > 1$, $x(\delta_{G_x}(V(e))) < 2$).
  For the sake of contradiction, suppose that $|\out(C)|<3$ and let $s$ and $t$ be the two vertices with incident edges leaving $C$.
  Then the degree of $s$ and $t$ is $3$ and the two edges leaving $C$ are $1$-edges since $x(\delta_{G_x}(V(C)))\ge 2$.
  As a consequence, the edges incident to $s$ and $t$ within $C$ cannot be $1$-edges (due to the equality constraints).
  Since a vertex in~$G_x$ can only have degree $2$ if it is incident to two $1$-edges, all adjacent vertices of $s$ and $t$ have degree $3$.
  However, this implies that there is a chord in $C$ (an edge $e \notin E(C)$ with both ends in $V(C)$).
  The vertices~$s$ and~$t$ cannot be neighbors, since otherwise $x(\delta_{G_x}(V(C) \setminus \{s,t\})) < 2$.
  Therefore, $|V(C)| \notin \{3,5\}$.
  If $|V|=4$ there is one chord, and if $|V| = 6$ there are two chords.
  It is not hard to check that in all possible arrangements, the solution is a convex combination of two paths from $s$ to $t$, contradicting that~$x$ is a basic solution.

  The second claim follows immediately from our considerations since either there are two neighbors within $C$ that both have neighbors in $V(G_x) \setminus V(C)$ or $C$ has chords. 
\end{proof}

For the proof of Theorem~\ref{appthm:subcubicapprox}, let $x$ be an optimal basic solution to $\SER(G)$ with subcubic support graph $G_x$.
We write $\delta$ as shorthand for $\delta_{G_x}$.

We start with running Algorithm~\ref{alg:stsp} 
and obtain a 2-matching $M$ of the support graph $G_x$ for which none of the considered improvements are possible.
However, within the proof we use some types of improvements that are quite specific and therefore we did not include them into Lemma~\ref{lem:alternating}.
We will argue for these improvements that we do not violate Lemma~\ref{lem:enddegree} if we consider them in Algorithm~\ref{alg:stsp}.

For each vertex $v\in V(G_x)$, let $\mathsf{comp}_M(v)$ denote the component of $M$ that $v$ belongs to (that is, the component of $M$ that has either one   or two edges incident to $v$), and call each vertex $v'\in V(G_x)$ that is adjacent to $v$ in $M$ a \emph{component neighbor} of $v$ (with respect to $M$).

We now analyze the number of components in $M$.
To this end, we put coins on each vertex, and redistribute these coins according to certain rules.
Initially, we assign one coin to each vertex of $G$.
Then we redistribute the coins fractionally to the components of $M$.
We show how to assign at least nine coins to each component of $M$.

\subsection{Cycles}
\label{app:subcubic-cycles}
We first show how to assign at least nine coins to each component $C$ of $M$ that is a cycle.
Let $N_{\textnormal{end}}(C) = \{w\in V(G)\setminus V(C)~|~\{v,w\} \in \delta(V(C))\}$ be the set of vertices that are neighbors of~$C$.
We define a set $S$ depending on the size of $C$:
\begin{itemize}
  \item If $|V(C)| \ge 7$, set $S = \{v\}$ for each vertex $v \in V(S)$ that has some neighbor outside $C$.
  \item If $V(C) \in \{4,5,6\}$, set $S = \{u,v\}$ for the two vertices considered in Lemma~\ref{applem:cyclecut}.
  \item If $|V(C)| = 3$, set $S = V(C)$.
\end{itemize}
Let $S' := \{v \in N_{\textnormal{end}}(C)~|~\{v,w\} \in \delta(S)\mbox{ for some }w \in V(G_x)\}$ be its set of neighbors not in~$C$.

Recall that each vertex $v$ in the support graph is equipped with exactly one coin.
Now we redistribute these coins and halfs of them to all cycles $C$, according to the following rules:
\begin{enumerate}[leftmargin=0.8in]
  \item[\emph{Rule C1:}] each vertex $v\in V(C)$ assigns its whole coin to $C$.
  \item[\emph{Rule C2:}] each component neighbor $w$ of $S'$ assigns $1/2$ of its coin to $C$.
\end{enumerate}
Additionally, if $|V(C)|=4$, let $s \in V(C) \setminus S$ be such that $\{s,t\} \in E(G_x)$ for $t \notin V(C)$.
By Lemma~\ref{applem:cyclecut}, such a vertex $s$ exists.
\begin{enumerate}[leftmargin=0.8in]
  \item[\emph{Rule C3:}] If $t$ is not a component neighbor of some vertex in~$S'$, assign the coin of $t$ to $C$.
  \item[\emph{Rule C4:}] If $t$ is a component neighbor of some vertex in $S'$, assign half a coin of $t$ to $C$, and assign half a coin from $t$'s component neighbor $w \notin S'$ to $C$.
\end{enumerate}
Next we show that all considered vertices exist, and that we did not assign more coins than available.

First, notice that all vertices in $S'$ are internal vertices of paths, for otherwise there is a basic improvement because there is an alternating path of length one between two end vertices.

Second, fix a cycle $C' \neq C$ in $M$.
Let $w \in N_{\textnormal{end}}(C)$ and $w' \in N_{\textnormal{end}}(C')$, and let $v\in V(C),v'\in V(C')$ be such that $\{v,w\}\in \delta(V(C)),\{v',w'\}\in \delta(V(C'))$.
Notice that $w\notin V(C')$ and, symmetrically, $w'\notin V(C)$ as otherwise there is a basic improvement (an alternating path of length one between $C$ and $C'$).
Now, for any path $P$ in $M$ that contains both $w,w'$, the two vertices cannot be consecutive in $P$, for otherwise there is an inward alternating path of length~$3$, which is excluded due to Lemma~\ref{lem:alternating}\ref{imp:cycle}.

Third, let $w,w' \in N_{\textnormal{end}}(C)$, and let $v,v'\in V(C)$ be such that $\{v,w\},\{v'w'\}\in \delta(V(C))$.
Now, for any path $P$ in $M$ that contains both $w,w'$, if $v$ and $v'$ are path-forming, these two vertices cannot be consecutive on $P$.
Otherwise, there is an inward alternating path of length~$3$, which is excluded due to Lemma~\ref{lem:alternating}\ref{imp:cycle}.

Fourth, let $C$ be a cycle with $|V(C)|=4$ and let $\{u,v,s\} \subset V(C)$ be the three vertices considered above.
Let $u',v',s'$ their neighbors in $N_{\textnormal{end}}(C)$ where $S' = \{u',v'\}$.
Then $s'$ cannot be incident to both~$u'$ and $v'$, since either $s$ and $u$ or $s$ and $v$ are path forming, a situation considered above.

With these four observations, clearly all assigned coins are available.
Note that each cycle obtained at least $9$ coins.

We conclude this subsection with an observation that follows immediate from the four observations above and that will be useful for assigning coins to paths.
\begin{observation}
\label{appobs:assignment}
  Let $C$ be a cycle in $M$ and let $u,v \in V(C)$ be two vertices that are not path-forming.
  Let $u',v'$ be the neighbors of $u$ and $v$ not in $V(C)$.
  Then
  \begin{itemize}
    \item the total number of coins assigned to $C$ by $u$ and $v$ is at most $3$, and
    \item half a coin of each component neighbor of either~$u'$ or of~$v'$ has not been assigned to any cycle, unless~$u'$ or $v'$ has a neighbor $w'$ such that the whole coin of $w'$ was assigned to $C$,
  \end{itemize}
\end{observation}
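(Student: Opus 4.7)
The plan is to verify both bullets by tracing every coin that could flow into $C$ through Rules C1--C4 involving the vertices $u, v, u', v'$, and to use the local optimality of $M$ (via the alternating-path lemmas) to block the remaining problematic flows.

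For the first bullet, the only contributions to $C$ that can plausibly be charged to $u$ or $v$ are: (i) Rule C1 applied at $u$ and at $v$, giving exactly two coins, and (ii) Rule C2 applied via $u' \in S'$ or $v' \in S'$, since $u, v \in V(C)$ rules out $u$ and $v$ being component $M$-neighbors of anyone in $S'$ (those neighbors necessarily lie in $M$-paths) and Rules C3/C4 only ever move the coin of the outside vertex $t \notin V(C)$ and one of its $M$-neighbors. Invoking Lemma \ref{applem:cyclecut}, for $|V(C)| \in \{4,5,6\}$ the set $S$ consists of exactly one path-forming pair; since $\{u,v\}$ is not path-forming by hypothesis, at most one of $u, v$ lies in $S$, hence at most one of $u', v'$ lies in $S'$. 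Rule C2 thus contributes at most $2 \cdot (1/2) = 1$ coin attributable to $u'$ or $v'$, bringing the total charged through $\{u,v\}$ to at most $3$. For $|V(C)| \ge 7$, Rule C1 alone already supplies at least seven coins directly to $C$, and for $|V(C)| = 3$ every pair of vertices is path-forming, so the observation is vacuous.

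For the second bullet, fix a component $M$-neighbor $x$ of $u'$ (the argument for $v'$ is symmetric) and assume for contradiction that the other half-coin of $x$ has been assigned to a distinct cycle $C''$. By Rule C2 this forces $x$ to be a component $M$-neighbor of some $u'' \in S''$, so $u''$ is an end vertex of a path adjacent in $G_x$ to some $w'' \in V(C'')$. The sequence $u, u', x, u'', w''$ then yields a short alternating path from a vertex of $C$ to a vertex of $C''$. I would verify that (after choosing the cycle exit edge at each end) it is an \inward alternating path of length at most five between two end vertices of distinct components, and then invoke Lemma~\ref{lem:alternating}\ref{imp:cycle} to derive an improvement of $M$, contradicting the termination of Algorithm~\ref{alg:stsp}. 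The exception in the \emph{unless} clause captures exactly the cases where this construction breaks: either $u'$ has an additional neighbor $w' \in V(C)$ (a ``chord'' relative to $C$, whose whole coin is assigned to $C$ by Rule C1), or $|V(C)|=4$ and $u'$ is adjacent to the vertex $t$ of Rule C3; in either subcase the required alternation or \inward property (property~\ref{alt:inward}) of the short path fails.

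The main obstacle will be the alternating-path verification in the second bullet: we must case-split on whether $u''$ is internal to its path or coincides with $x$'s end, on whether $x$ or $u''$ is itself adjacent to $V(C)$ (which may collapse the path length below three or violate alternation), and on whether the traversal respects the \inward property \ref{alt:inward} with respect to the paths of $M$ it crosses. A secondary technical point is to exploit subcubicity of $G_x$ to bound the number of component neighbors of $u'$ and of neighbors of $u'$ in $V(C) \setminus \{u\}$, keeping the case analysis finite; the degenerate case $u' = v'$ also needs to be handled separately so that coins of common component neighbors are not double-charged.
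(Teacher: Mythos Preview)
Your first bullet is fine and matches the paper's reasoning: since $\{u,v\}$ is not path-forming, it cannot coincide with the set $S$ (which is a single vertex for $|V(C)|\ge 7$, a path-forming pair for $|V(C)|\in\{4,5,6\}$, and the whole triangle for $|V(C)|=3$, where the hypothesis is vacuous). Hence at most one of $u,v$ lies in $S$, at most one of $u',v'$ lies in $S'$, and Rule~C2 contributes at most one further coin beyond the two from Rule~C1.

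The second bullet, however, has a genuine gap. Your proposed alternating path $u,u',x,u'',w''$ does not alternate: both $\{u',x\}$ and $\{x,u''\}$ are matching edges (you defined $u''$ as a \emph{component} neighbour of $x$), so the path has two consecutive matching edges and Lemma~\ref{lem:alternating}\ref{imp:cycle} does not apply. No local adjustment fixes this, because the only connecting edge at $u'$ is $\{u,u'\}$ and the only connecting edge at $u''$ is $\{u'',w''\}$, by subcubicity.

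The paper's argument is both different and simpler: it feeds the conclusion of the first bullet directly into the second. Since at most one of $u,v$ lies in $S$, say $v\notin S$. By subcubicity $v'$ is an internal path vertex with exactly one connecting edge, namely $\{v,v'\}$; as $v\notin S$, this means $v'\notin S'^{*}$ for \emph{every} cycle $C^{*}$. Consequently any component neighbour $x$ of $v'$ can lose at most one half-coin via Rule~C2 (only through its \emph{other} component neighbour), so $x$ retains a free half-coin. No contradiction or alternating-path construction is needed. The ``unless'' clause then handles the residual $|V(C)|=4$ case where Rules~C3/C4 may absorb an additional half-coin from a component neighbour of $v'$; this is exactly the content of the paper's fourth observation about the vertex $t$ and its neighbours. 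Your attempt to treat $u'$ and $v'$ symmetrically (``the argument for $v'$ is symmetric'') misses the point: the whole force of the observation is that one of the two sides is automatically safe.
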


\subsection{Paths}
\label{app:subcubic-paths}
Now we are dealing with paths $P, P', P'', \hdots$ that form components of the 2-matching $M$.
For a path $P$, let $v_1,u_1,u_2,\hdots,u_t,v_2$ be an ordering of its such that $\{v_1,u_1\},\{u_t,v_2\}$ and $\{u_i,u_{i+1}\}$ for $i = 1,\hdots,t-1$ are exactly the edges of $P$.
Hence, $v_1,v_2$ are the two end vertices of $P$, which form the set $\mathsf{end}(P)$.
By $\mathsf{int}(P) = \{u_1,\hdots,u_t\}$ we denote the set of \emph{internal vertices} of $P$.

Our goal is to show that each path $P$ of $M$ can receive at least 9 coins, where $P$ receives coins according to the following rules:
\begin{itemize}[leftmargin=0.8in]
  \item[\emph{Rule P1:}] Every end vertex $v\in\mathsf{end}(P)$ sends 1/2 coin to the path $P$.
  \item[\emph{Rule P2:}] Every internal vertex $v\in\mathsf{int}(P')$, for $P'\not=P$, that is adjacent to an end vertex of~$P$, sends 1 coin to $P$.
  \item[\emph{Rule P3:}] Every vertex $v\in\mathsf{int}(P')\cup\mathsf{end}(P')$, for $P'\not=P$, that is not adjacent to an end vertex of $P$ but is adjacent to a vertex covered by Rule P2, sends 1/2 coin to $P$.
  \item[\emph{Rule P4:}] Every other vertex $v$ sends its coin to the path $\mathsf{comp}_M(v)$ that it belongs to.
\end{itemize}
Notice that these half coins have not been assigned to cycles or to other paths before since otherwise there is a basic improvement.
They may, however have been assigned to the same path twice.
We will ensure that we do not double-count them.

To analyze the redistribution of coins, we use the tool of alternating paths, which we denote by $Q, Q', Q'',\hdots$.
For each type of path $P$ that can occur, we show that it can receive at least~9 coins from its internal vertices, its end vertices and vertices of paths $P'\not= P$ connected to~$P$ by connecting edges, for otherwise we can find an improvement $M'$ of $M$ by exhibiting some alternating paths.

Similar to cycles, let $N_{\textnormal{end}}(P) = \{w\in V\setminus V(P)~|~\{v,w\} \in \delta(\mathsf{end}(P),V \setminus V(P))\}$ be the set of vertices in components of $M$ other than $P$ that are adjacent to end vertices of $P$.
As was the case with cycles, $N_{\textnormal{end}}(P)$ only contains internal vertices of paths.
Recall that the degree of the vertices in $\mathsf{end}(P)$ is exactly three, by Lemma~\ref{lem:enddegree}.
We distinguish five cases, for $|\out(P)|\in \{0,1,2,3,4\}$.
Due to the degree restriction, $|\out(P)| \le 4$.

%%%%%%%---------------------
%%%% The following commands are needed for cross-referencing Cases
\newcounter{word}
\makeatletter
\newcommand*{\LBL}{%
  \@dblarg\@LBL
}
\def\@LBL[#1]#2{%
  \begingroup
    \renewcommand*{\theword}{#2}%
    \refstepcounter{word}%
    \label{#1}%
    #2%
  \endgroup
}
\makeatother
%%%%%%%%%%%%%%%%%-------------------------------

\smallskip
\noindent
\textbf{\LBL{Case 1}:} $|\out(P)| = 4$.

\noindent
\textbf{\LBL{Case 1.1}:} The four vertices in $N_{\textnormal{end}}(P)$ form an independent set.\\
Then the path $P$ receives at least 9 coins, namely $2 \times 1/2 = 1$ coin from its two end vertices, and then $4 \times  (1 + 1/2 + 1/2) = 8$ coins from each of the four vertices in $N_{\textnormal{end}}(P)$ and their respective neighbors in components of $M$ other than $P$.

\noindent
\textbf{\LBL{Case 1.2}:} There are two vertices $w_1,w_2\in N_{\textnormal{end}}(P)$ that are adjacent in $M$.

\noindent
\textbf{\LBL{Case 1.2.1}:} The vertices $w_1,w_2$ are adjacent to distinct vertices in $N_{\textnormal{end}}(P)$.\\
Then there is an 
\inward alternating path of length three considered by Lemma~\ref{lem:alternating}\ref{imp:cycle}.
\smallskip

\noindent
\textbf{\LBL{Case 1.2.2}:} Vertices $w_1,w_2$ are in component $C$ and adjacent to the same end vertex $v_i \in \mathsf{end}(P)$.\\
Then we can add $v_i$ to $C$, which makes the neighbor $u\in\{u_1,u_t\}$ of $v_i$ in $P$ an end vertex.
In particular, if there is an edge $e$ incident to $u$ and an end vertex such that $e$ is not in $P$, there is an improvement without creating new
end vertices, a contradiction.

We assign to $P$ the two coins of $w_1,w_2$ the two half coins of the two component neighbors of $w_1,w_2$, half a coin of $v_i$, and half a coin of the component neighbor of $v_i$ to $P$.
Thus, the two edges gain a total of~4 coins for $P$.
\smallskip

\noindent
This finishes \ref{Case 1.2}, because in either subcase 1.2.1 and 1.2.2 we can assign at least $8$ coins to~$P$ additionally to the one coin that $P$ had already.
\smallskip

\smallskip
\noindent
\textbf{\LBL{Case 2}:} $|\out(P)| = 3$.

\noindent
\textbf{\LBL{Case 2.1}:} The three vertices in $N_{\textnormal{end}}(P)$ form an independent set.\\
Then $P$ collects $3 \times 2 = 6$ coins from vertices outside $P$ analogous to \ref{Case 1}, and $2 \times 1/2$ coins from its end vertices.
Assume, without loss of generality, that $v_1\in \mathsf{end}(P)$ is the end vertex of $P$ for which there is an internal vertex $u_i$ in $P$ such that $\{v_1,u_i\}$ is an edge in $G_x$ for some $i\in\{2,\hdots,t\}$.
Hence, $P$ collects half a coin from $u_i$.
Thus, $P$ receives at least $7\frac{1}{2}$ coins, and needs to collect $1\frac{1}{2}$ more coins.

\noindent
\textbf{\LBL{Case 2.1.1}:} There is a connecting edge $e\in E(G_x)\setminus E(M)$ with $e = \{z,w\}$ for $z \in \{u_{i-1},u_{i+1}\}$ such that $w$ is an end vertex of a path $P'\not= P$.\\
In this case, $Q = (v_1,u_i,z,w)$ is an alternating path of length $3 < 5$, with end vertices $s = v_1$ and $t = w$ of different paths.
Thus, by Lemma~\ref{lem:alternating}\ref{imp:three}, there is an improvement of $M$, a contradiction.

\noindent
\textbf{\LBL{Case 2.1.2}:} There is a connecting edge $e\in E(G_x)\setminus E(M)$ with $e = \{u_{i-1},w\}$ such that~$w$ is an end vertex of a cycle.\\
In this case, $Q = (v_1,u_i,u_{i-1},w)$ is an inward alternating path of length $3 < 5$, with end vertices $s = v_1$ and $t = w$ of different paths.
Thus, by Lemma~\ref{lem:alternating}\ref{imp:cycle}, there is an improvement of $M$, a contradiction.

\noindent
\textbf{\LBL{Case 2.1.3}:} There is a connecting edge $e\in E(G_x)\setminus E(M)$ with $e = \{u_{i+1},w\}$ such that $w$ is an end vertex of a cycle and the
previous cases do not apply.

\noindent
\textbf{\LBL{Case 2.1.3.1}:} There is no connecting edge $e' \in E(G_x)\setminus E(M)$ with $e = \{u_{j},w'\}$ for $j<i$ such that $w'$ is an end vertex of some component.\\
Then none of the coins between $v_1$ and $u_i$ has been assigned.
We assign the remaining half coin of $v_i$, and one coin of $u_{i-1}$ to $P$.
Therefore total number of coins assigned to $P$ is $9$.

\noindent
\textbf{\LBL{Case 2.1.3.2}:} There is a connecting edge $e' \in E(G_x)\setminus E(M)$ with $e = \{u_{j},w'\}$ for some $j<i$ such that~$w'$ is an end vertex of some component.

\noindent
\textbf{\LBL{Case 2.1.3.2.1}:} For some choice of $j$ and $j'$, $j < i < j'$, such that $w'$ and $w''$ are end vertices adjacent to $u_j$ and $u_j'$, both $w',w''$ are not path-forming and belong to one cycle $C$.\\
Then, by Observation~\ref{appobs:assignment}, at most $3$ coins have been sent to $C$ due to $w'$ and $w''$ and therefore the component neighbors of $u_j$ and $u_{j'}$ still have together at least two half coins that have not yet been assigned.
We assign these two half coins to $P$ and additionally half a coin of $u_{i-1}$.
Again, the total number of coins assigned to $P$ is $9$.

\noindent
\textbf{\LBL{Case 2.1.3.2.2}:} Otherwise, there is an improvement by including the edges $\{v_1,u_i\}$, $\{u_j,w'\}$ and $\{u_{i+1},w\}$ and removing up to four edges.\\
Notice that for each removal we can choose an edge such that either we do not create a new end vertex or the removed edge is not a $1$-edge.
Therefore we can keep up the invariant of Lemma~\ref{lem:enddegree}.

\noindent
\textbf{\LBL{Case 2.1.4}:} No such connecting edge $e$ exists.\\
Then $P$ also receives $2 \times 1/2 = 1$ coin from the two neighbors of $u_i$ and the whole coin of $u_i$, and hence receives $9$ coins in total.
\smallskip

\noindent
\textbf{\LBL{Case 2.2}:} There are two vertices $w_1,w_2\in N_{\textnormal{end}}(P)$ that are adjacent in $M$.

\noindent
\textbf{\LBL{Case 2.2.1}:} Vertices $w_1,w_2$ are adjacent to distinct vertices in $N_{\textnormal{end}}(P)$.\\
Then there is an improvement by removing the edge $\{w_1,w_2\}$ and adding the two edges from~$P$ to~$w_1$ and $w_2$.
\smallskip

\noindent
\textbf{\LBL{Case 2.2.2}:} Vertices $w_1,w_2$ are in component $C$ and adjacent to the same end vertex $v_i \in \mathsf{end}(P)$.\\
Then we can add $v_i$ to $C$, which makes the neighbor $u\in\{u_1,u_t\}$ of $v_i$ in $P$ an end vertex.
In particular, if there is an edge $e$ incident to $u$ and an end vertex such that $e$ is not in $P$, there is an improvement, a contradiction.

We assign to $P$ the two coins of $w_1,w_2$, the two half coins of the two component neighbors of $w_1,w_2$, half a coin of $v_i$, and half a coin of the component neighbor of $v_i$ to $P$.
Thus, the two edges gain a total of~4 coins for $P$.
Additionally, $P$ collects the remaining $5$ coins analogously to~\ref{Case 2.1}, where a collected coin from $v_i$ implies that there is an improvement. This finishes \ref{Case 2.2}, because $P$ can collect $9$ coins.
\smallskip

\smallskip
\noindent
\textbf{\LBL{Case 3}:} $|\out(P)| = 2$.\\
Then there are two edges $e_1,e_2\in E(G_x)\setminus M$ that each have one endpoint in $\mathsf{end}(P)$ and its other endpoint in $\mathsf{int}(P)$, and their endpoints in $\mathsf{int}(P)$ are distinct.
We analyze the possibilities how these edges may interfere with each other.
Let $e_1 = \{v,u_i\}$ and $e_2 = \{v',u_j\}$, where $v,v'\in\mathsf{end}(P)$ with possibly $v = v'$, and $u_i,u_j\in\mathsf{int}(P)$ with $u_i \not= u_j$.

Analogous to \ref{Case 1}, since $|\out(P)| = 2$, the path $P$ already collects $2 \times 2 = 4$ coins from vertices outside $P$ and $2 \times 1/2 = 1$ coin from its two end vertices.
\smallskip

\noindent
\textbf{\LBL{Case 3.1}:} Vertex $v = v'$.\\
By renaming we may assume that $v=v_1$.

\noindent
\textbf{\LBL{Case 3.1.1}:} Vertices $u_i,u_j$ are adjacent in $P$.\\
We say that a vertex $w$ in $P$ \emph{acts as an end vertex} if there is a path $P'$ in $G_x$ with $V(P)=V(P')$ such that $w$ is an end vertex of $P'$. We only consider simple transformations that can be done efficiently.
Then the neighbor $u_1$ of $v$ in $P$ acts as an end vertex.
Let $v$ be closer to $u_i$ than to~$u_j$ in $P$ (that is, $i < j$).
Then we assign to $P$ the full coin of $v$ and $u_i$, half a coin of $u_j$, half a coin of the neighbor $u_1$ of $v$ in $P$, and half a coin of $u_{i-1}$. All of these coins can be assigned unless there is an \inward alternating path of length three from $v_1$ via $u_i$ and $u_{i-1}$.
There is one more coin that we have to assign.
That is, if a coin that acts as an end vertex is adjacent to an end vertex of another component, there is an improvement without creating new end vertices.
\smallskip

\noindent
\textbf{\LBL{Case 3.1.1.1}:} The subpath of $P$ from $v$ to $u_j$ has at least five vertices.\\
Then the analysis is analogous to \ref{Case 2}, that is, unless there is an improvement we either we can assign sufficiently many coins of the sub-path from $v_1$ to $u_i$ to $P$ or we can assign half a coin of both $u_j$ and~$u_{j+1}$ to $P$.
\smallskip

\noindent
\textbf{\LBL{Case 3.1.1.2}:} The subpath of $P$ from $v$ to $u_j$ has exactly four vertices.\\
Then all vertices of $P$ but $u_j$ act as end vertices, and by the LP constraints, there has to be at least one more edge leaving the subpath.
If the destination of that edge is an end vertex, we have found an 
\inward alternating path of length at most three, which is excluded by Lemma~\ref{lem:alternating}\ref{imp:cycle}.
Otherwise, that vertex still has its coin an we transfer the coin to $P$.
\smallskip

\noindent
\textbf{\LBL{Case 3.1.2}:} The vertices $u_i,u_j$ are not adjacent in $P$.\\
The analysis is analogous to \ref{Case 2}, but the argument has to be applied for $u_i$ and $u_j$ separately.
\smallskip

\noindent This finishes \ref{Case 3.1} since $P$ was able to collect $4$ coins additionally to the $5$ initially collected coins.
\smallskip

\noindent
\textbf{\LBL{Case 3.2}:} Vertex $v \neq v'$.\\
\smallskip
Let $u_i,u_j$ be internal vertices of $P$ such that $\{v_1,u_i\},\{v_2,u_j\}$ are edges of $E(G_x)\setminus E(M)$.

\noindent
\textbf{\LBL{Case 3.2.1}:} $u_i$ and $u_j$ are not consecutive.\\
This case is analogous to \ref{Case 2} applied to $u_i$ and $u_j$ independently.
\smallskip

\noindent
\textbf{\LBL{Case 3.2.2}:} $u_i$ and $u_j$ are consecutive.\\
If $i>j$, there is an inward alternating path of length three from $v_1$ to $v_2$, considered in Lemma~\ref{lem:alternating}\ref{imp:cycle}.
Therefore we may assume $i < j$.
\smallskip

\noindent
\textbf{\LBL{Case 3.2.2.1}:} There is no $i' < i$ such that there is an edge $\{u_{i'},w\}$ for an end vertex $w$ of a component $C \neq P$.
Then we assign all coins of the vertices $v_1,u_1,\dotsc,u_j$ to $P$ and half a coin from~$u_{j+1}$.
Clearly, $j\ge 3$ and none of the assigned coins has been assigned previously.
Together with the $4\frac12$ coins already assigned to $P$, the total number of coins adds up to at least $9$.
\smallskip

\noindent
\textbf{\LBL{Case 3.2.2.2}:} There is no $j' > j$ such that there is an edge $\{u_{j'},w\}$ for an end vertex $w$ of a component $C \neq P$.\\
This case is analogous to \ref{Case 3.2.2.1}.
\smallskip

\noindent
\textbf{\LBL{Case 3.2.2.3}:} There is an $i' < i$ and an edge $\{u_{i'},w\}$ for an end vertex $w$ of a component $C \neq P$, and there is a $j' > j$ and an edge $\{u_{j'},w'\}$ for an end vertex $w'$ of a component $C' \neq P$.\\
This case is analogous to \ref{Case 2.1.3.2}.
\smallskip

\noindent
\textbf{\LBL{Case 4}:} $|\out(P)| = 1$.\\
Then $P$ collects $2 \times 1/2 = 1$ coin from its two end vertices, plus 2 coins from vertices outside~$P$.
There are three edges $e_1 = \{v_1,u_i\},e_2 = \{v_1,u_j\},e_3 = \{v_2,u_p\}\in E(G_x)\setminus E(M)$ that each have one endpoint in $\mathsf{end}(P)$ and their endpoints in $\mathsf{int}(P)$ are distinct.
Hence, $P$ collects $3 \times 1/2 = 3/2$ coins from $u_i,u_j,u_p$.
Thus, $P$ needs to collect an additional $4\frac{1}{2}$ coins.

We analyze the possibilities how these edges may interfere with each other.
Assume, without loss of generality, that $i < j$. Notice that \ref{Case 4} is very similar to \ref{Case 3}. In particular, we
use \ref{Case 3.1} in order to assign the coins related to $u_i$ and $u_j$.
There are some complications that we address in the following cases.

Notice that any edge $\{u_{j'},v_2\}$ for $j' < j$ behaves the same way as an edge $\{u_{j'},w\}$ for
an end vertex~$w$ of a component $C \neq P$ with respect to $u_j$.
The analogous statements are true for $u_i$ and~$u_p$.
Therefore the only cases that are not analogous to \ref{Case 1} or \ref{Case 2} are the following.

\noindent
\textbf{\LBL{Case 4.1}:} $u_{j+1} = u_p$.\\
We claim that the coins of $u_j$ and $u_{j+1}$ have not been assigned twice.
Notice that assigning the whole coin of $u_j$ to $P$ while considering $v_1,u_i,u_j$ implies that there is a $j'<j$ such that $j'$ is adjacent to an
end vertex of another component $C$.
Then the remaining argument is analogous to \ref{Case 2.1.3.2}.
\smallskip

\noindent
\textbf{\LBL{Case 4.2}:} $u_{i+1} = u_p$.\\
This case is analogous to \ref{Case 4.1}.
\smallskip

\noindent
\textbf{\LBL{Case 5}:} $\out(P) = \emptyset$.\\
Then $P$ collects $2 \times 1/2 = 1$ coin from its two end vertices.
Then there are four edges $e_1 = \{v_1,u_i\},e_2 = \{v_1,u_j\},e_3 = \{v_2,u_p\},e_4 = \{v_2,u_q\}\in E(G_x)\setminus E(M)$ that each have one endpoint in $\mathsf{end}(P)$ and their endpoints in $\mathsf{int}(P)$ are distinct.
Hence, $P$ collects $4 \times 1/2 = 2$ coins from $u_i,u_j,u_h,u_q$.
Thus, $P$ needs to collect an additional 6 coins.

We analyze the possibilities how these edges may interfere with each other.
Assume, without loss of generality, that $i < j$, and (by symmetry) that $p > q$.
Notice that \ref{Case 5} is very similar to \ref{Case 3}.
In particular, we use \ref{Case 3.1} in order to assign the coins related to $u_i$ and $u_j$, as well as to assign the coins related to $u_p,u_q$.
There are some complications that we address in the following cases.

Notice that any edge $\{u_{j'},v_2\}$ for $j' < j$ behaves the same way as an edge $\{u_{j'},w\}$ for an end vertex~$w$ of a component $C \neq P$ with respect to $u_j$.
The analogous statements are true for $u_i$, $u_p$, and $u_q$.
Therefore the only cases that are not analogous to \ref{Case 1} or \ref{Case 2} are the following.

\noindent
\textbf{\LBL{Case 5.1}:} $u_{j+1} \in \{u_p,u_q\}$.\\
We claim that the coins of $u_j$ and $u_{j+1}$ have not been assigned twice.
Notice that assigning the whole coin of $u_j$ to $P$ while considering $v_1,u_i,u_j$ implies that there is a $j'<j$ such that $j'$ is adjacent to an
end vertex of another component $C$.
The analogous statement is true for the coin of $u_q$ and therefore there is a $q'>q$ adjacent to an end vertex of a component $C' \neq P$.
Then the remaining argument is analogous to \ref{Case 2.1.3.2}.
\smallskip

\noindent
\textbf{\LBL{Case 5.2}:} $u_{i+1} \in \{u_p,u_q\}$.\\
This case is analogous to \ref{Case 5.1}.

\qed

\section{\texorpdfstring{Asymmetric $(1,2)$-TSP}{Asymmetric (1,2)TSP}}
\label{app:atsp}
In this section, we consider linear programming-based approximations for \ATSP.
\begin{theorem}
\label{appthm:atsp}
  There is a polynomial-time $3/2$ approximation algorithm for \ATSP with respect to $\OptSERup(G)$.
\end{theorem}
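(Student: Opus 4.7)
The plan is to transport the LP-charging technique from the proof of Theorem~\ref{thm:stsp} into the directed setting and exploit the fact that a $3/2$ ratio corresponds to the much milder target $\alpha=2$ in \LPX; in particular, the whole argument should collapse onto the existence of basic improvements, with no directed analogue of the alternating-path lemmas needed.

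\emph{Step 1 (normalization).} First I would invoke Lemma~\ref{lem:pluseps} to reduce to an instance with $\OptSERup(G)=|V(G)|=n$ whose optimal solution $x^*$ has only cost-one arcs in its support graph $G_{x^*}$. Summing the equality constraints \eqref{eq:asubtour1} over all vertices then yields $\sum_{e\in E(G_{x^*})} x^*_e = n$, so bounding the cost of the output reduces to bounding the number of components $k$ of a suitable directed 2-matching.

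\emph{Step 2 (obtaining a structured directed 2-matching).} I would run the directed version of Algorithm~\ref{alg:stsp}, retaining only the singleton-removal step from Lemma~\ref{lem:nosingle} and the basic improvements defined in Section~\ref{sec:preparatorywork}. Call the resulting directed 2-matching $M$; it has no singleton components and admits no basic improvement, and since each iteration strictly improves $M$, the algorithm runs in polynomial time.

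\emph{Step 3 (LP assignment with $\alpha=2$).} I would exhibit a feasible solution to the directed analogue of \LPX with $\alpha=2$. For a cycle component $C$, set $y_{C,e}=x^*_e$ for every arc $e\in\delta^+(V(C))\cup\delta^-(V(C))$; by the subtour constraints \eqref{eq:asubtour} this contributes $x^*(\delta^+(V(C)))+x^*(\delta^-(V(C)))\ge 1+1=2$, and the two sets are disjoint by definition. For a path component $P$ with start vertex $s$ and end vertex $t$, set $y_{P,e}=x^*_e$ for every arc $e\in\delta^-(\{s\})\cup\delta^+(\{t\})$; the equality constraints give $x^*(\delta^-(\{s\}))=x^*(\delta^+(\{t\}))=1$, and the only arc potentially lying in both sets is $(t,s)$, whose LP value must vanish because otherwise adding $(t,s)$ would be a basic improvement closing $P$ into a cycle. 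Hence every component collects a $y$-mass of at least $2$.

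\emph{Step 4 (packing and conclusion).} The remaining check is that the packing constraints \eqref{appcon:double} are satisfied, i.e.\ no arc is charged to two different components. A positively weighted arc $(u,v)$ receives charge only when $v$ is a start vertex or $u$ is an end vertex, and two vertices lie in a unique component each; so a double charge to $C_1\ne C_2$ forces $u$ to be an end vertex of $C_1$ and $v$ to be a start vertex of $C_2$, yielding a basic improvement of $M$ and contradicting its choice. Feasibility of $y$ then gives $2k\le\sum_{C,e}y_{C,e}\le\sum_e x^*_e=n$, so $k\le n/2$. Closing $M$ into a Hamiltonian cycle by one arc per component (of cost at most $2$) therefore produces a tour of cost at most $n+k\le 3n/2=\tfrac{3}{2}\OptSERup(G)$.

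\emph{Main obstacle.} The only genuinely delicate point is verifying that every would-be double charge really reduces to a basic improvement in the directed setting; this is immediate from the definitions, since the asymmetry of the directed definition of $\delta^\pm$ automatically rules out the kinds of overlaps that, in the undirected case of Theorem~\ref{thm:stsp}, forced the introduction of sub-edges and alternating paths.
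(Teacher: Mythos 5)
Your proof is correct, but it takes a genuinely different and noticeably more elementary route than the paper's.

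The paper transports the full machinery of Theorem~\ref{thm:stsp} into the directed setting: it complements the LP values on matching arcs to obtain the vector $z$, subdivides every arc into $N$ sub-arcs, builds a family $\apset_3$ of pairwise arc-disjoint directed alternating paths of length three starting at end vertices (using the $z$-balance equations at intermediate vertices to guarantee extendability), and then assigns the first and third sub-arc of each such path to the component it emanates from. Disjointness of the paths is what makes the packing constraints~\eqref{appcon:double} hold. You instead charge LP mass directly: for a cycle $C$ you charge every arc in $\delta^+(V(C))\cup\delta^-(V(C))$, and for a path $P$ with start $s$ and end $t$ you charge every arc in $\delta^-(s)\cup\delta^+(t)$; the covering constraints~\eqref{appcon:alpha} with $\alpha=2$ follow from the subtour and degree constraints, and the packing constraints~\eqref{appcon:double} follow because any arc charged to two distinct components would be an arc from an end vertex to a start vertex of a different component, i.e.\ a basic improvement, and the one in-component overlap $(t,s)$ is likewise ruled out. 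This dispenses entirely with sub-arcs, $z$, and alternating paths, which is a genuine simplification for this theorem. The trade-off is that your charging scheme yields exactly $2$ per component and does not scale to the tighter target $\alpha=3$: the paper's alternating-path version is reused verbatim in Corollary~\ref{appcor:halfintatsp}, where for half-integral $x^*$ one can also charge the middle sub-arc $a_2$ of each length-three path and thereby reach $\alpha=3$; your scheme has no analogous third ``free'' source of mass per end vertex. So the paper's heavier setup buys uniformity across the $3/2$ and $4/3$ results, while your argument is the cleaner self-contained proof of the $3/2$ bound alone. One small point worth tightening in a final write-up: in Step~4 you should make the singleton case explicit or state clearly that you rely on Lemma~\ref{lem:nosingle} to exclude it, and the tour-cost accounting ``$n+k$'' should be spelled out as: delete one arc from each cycle component (leaving $n-k$ cost-one arcs across $k$ paths) and then add $k$ closing arcs of cost at most two each.
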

\begin{proof}
  Let $x^*$ be an optimal solution to $\SERup(G)$ and let $x,z$ be as in the proof of Theorem~\ref{thm:stsp}.
  Again, we subdivide the arcs into sub-arcs with respect to $z$ just as we did with the edges in the proof of Theorem~\ref{thm:stsp}.
  Let $M$ be a directed 2-matching of $G_{x^*}$ such that that all 1-arcs are in~$M$.
  We use directed alternating paths as defined in the proof of Lemma~\ref{lem:nosingle}.

  We use an algorithm similar to Algorithm~\ref{alg:stsp}.
  However, instead of Lemma~\ref{lem:alternating} and Lemma~\ref{lem:alternatingclosed} we only use the simple observation that if there is an arc from some end vertex to some start vertex, there is an improvement of $M$, unless both vertices are in one cycle.
  Note that in $M$, there may be cycles of length two.

  We construct a set of alternating paths $\apset_3$.
  Write $\delta$ as short for $\delta_{G_{x^*}}$.
  Initially, all sub-arcs in cycles are marked and all remaining sub-arcs are unmarked.
  Then we proceed as follows:
  \begin{enumerate}
    \item Choose an end vertex $t$ with an unmarked sub-arc $a$ of $x$ with $a \in \delta^+(t)$.
    \item Extend $a$ to a directed alternating path $Q$ of length three, using two further unmarked sub-arcs of $x$.
    \item Mark the three sub-arcs of $Q$.
  \end{enumerate}
  Similar to the proof of Theorem~\ref{thm:stsp}, we obtain $\apset_3$ by applying the procedure iteratively until all sub-arcs of $x$ that start from end vertices are marked.

  Let us first verify that each of the alternating paths can be extended to length three.
  There are two reasons that could possibly prevent an extension: there is no sub-arc that is part of a feasible alternating path or all sub-arcs that are feasible candidates are marked already.

  Let $a=(t,u)$ be the first sub-arc of some alternating path.
  Since adding $a$ to $M$ does not lead to an improvement, $u$ is not a start-vertex.
  We obtain
  \begin{equation*}
    z(\delta^-(u) \setminus E(M)) = x^*(\delta^-(u) \setminus E(M)) = 1 - x^*(\delta^-(u) \cap E(M)) = z(\delta^-(u) \cap E(M)) \enspace .
  \end{equation*}
  In other words, from $u$ one can extend as many alternating path as there are sub-arcs of $x$ leading to~$u$.
  Let $a'=(v,u)$ be the subsequent arc in $Q$.
  Then
  \begin{equation*}
    z(\delta^+(v) \cap E(M)) = 1 - x^*(\delta^+(v) \cap E(M)) = x^*(\delta^+(v) \setminus E(M)) = z(\delta^+(v) \setminus E(M)) \enspace .
  \end{equation*}
  Thus there are sufficiently many sub-arcs of $x$ from $v$ to extend $Q$.

  Note that in $\apset_3$, each arcs leaving end vertices either is arcs within a cycle or it is the first arc of an alternating path of length three.

  Let \LPX be analogous to Theorem~\ref{thm:stsp}.
  For any alternating path starting at some component (cycle or arc), let $a_1,a_2,a_3$ be the three sub-arcs that form $Q$.
  We set $y_{C,y_1} = y_{C,y_3}=1$. Since all considered alternating paths are disjoint, clearly we did not violate any constraint from \eqref{appcon:double}.

  Since for each end vertex $t$ of a path, $x^*(\delta^+(t))=1$ and for each cycle $C$, $x^*(\delta^+(V(C))) \ge 1$, we have assigned at least $2N$ sub-arcs to each component, which is sufficient to satisfy the constraints~\eqref{appcon:alpha} for $\alpha=2$. 
\end{proof}

\begin{corollary}
\label{appcor:halfintatsp}
  If we can obtain a half-integral solution to $\SERup(G)$ in polynomial time, there is a polynomial~time $4/3$ approximation algorithm for \ATSP with respect to $\OptSERup(G)$.
\end{corollary}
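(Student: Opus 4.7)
The strategy is to strengthen the analysis of Theorem~\ref{appthm:atsp} by using directed alternating paths of length $5$ instead of length~$3$, which raises the parameter $\alpha$ in the linear program $\LPX$ from $2N$ to $3N$ and therefore yields the approximation ratio $(3+1)/3 = 4/3$ after rescaling. The half-integrality hypothesis gives $N=2$ and hence a very restricted support structure (each arc carries at most two sub-arcs of $z$), which will make the length-$5$ extension manageable.

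First, I would compute an optimal half-integral solution $x^*$ and set up $x$, $z$, and the initial directed $2$-matching $M$ exactly as in the proof of Theorem~\ref{appthm:atsp}. Along the way, Lemma~\ref{lem:nosingle} and the repeated elimination of basic improvements ensure that $M$ is free of singleton components and admits no arc from an end vertex to a start vertex that would reduce the number of components. Second, I would build a family $\mathcal{F}$ of directed alternating paths of length exactly $5$ by iterating the following procedure analogous to the one of Theorem~\ref{appthm:atsp}: pick an unmarked sub-arc of $z$ leaving an end vertex of a path of $M$ (or leaving a cycle of $M$), extend it to a length-$5$ alternating path using three further unmarked sub-arcs of $z$, and mark all five sub-arcs. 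Iterate until no more extensions are possible.

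The existence of each extension step is verified by the same balanced counting as in Theorem~\ref{appthm:atsp}: at any intermediate vertex $u$ reached via a forward sub-arc, the equality constraints give $z(\delta^-(u)\setminus E(M))=z(\delta^-(u)\cap E(M))$, and symmetrically $z(\delta^+(v)\cap E(M))=z(\delta^+(v)\setminus E(M))$ at any vertex $v$ reached via a backward sub-arc, so an unmarked sub-arc of the required type is available unless the construction reaches a start vertex. In the latter case, the resulting alternating path of length at most $5$ from an end vertex to a start vertex would yield a basic improvement of $M$, contradicting our invariant. For each $Q=a_1,a_2,a_3,a_4,a_5 \in \mathcal{F}$ starting at component $C$, I would set $y_{C,a_1}=y_{C,a_3}=y_{C,a_5}=1$; since sub-arcs are marked once used, constraint~\eqref{appcon:double} is respected. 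Each path component receives $3N$ sub-arcs via its $N$ outgoing sub-arcs at the end vertex, and each cycle component receives at least $3N$ via the subtour elimination constraint $x^*(\delta^+(V(C)))\ge 1$. Thus $\alpha=3N$ is feasible, translating to $\alpha=3$ in the unscaled $\LPX$, and the $2$-matching certified by this feasible solution can be augmented to a tour of cost at most $n + n/3 = (4/3)\,\OptSERup(G)$.

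The principal obstacle is making the length-$5$ extension robust at intermediate vertices that lie on many previously built alternating paths. The balanced counting must be applied globally rather than just locally, and one must verify that the half-integrality bound $N=2$ keeps the supply of unmarked sub-arcs tight enough that, at every extension step, either an unmarked sub-arc of the required type exists or a short alternating path from an end vertex to a start vertex has been discovered (yielding the required improvement and contradicting the choice of $M$). This is where the half-integrality hypothesis is really used, replacing the considerably more elaborate case analysis of Theorem~\ref{thm:halfint} from the undirected setting; the rigidity of $1/2$-integral directed $2$-matchings should make the argument noticeably shorter than in the symmetric case.
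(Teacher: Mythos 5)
Your proposal takes a genuinely different route from the paper, and it contains a real gap. The paper's proof is three lines: it keeps the length-$3$ alternating paths of Theorem~\ref{appthm:atsp} unchanged, but additionally assigns the middle (backward, matching) sub-arc $a_2$ to the component, \ie\ it sets $y_{C,a_1}=y_{C,a_2}=y_{C,a_3}=1$. This is legal precisely because half-integrality forces $x^*_e = z^*_e = 1/2$ for every matching arc $e$ that can appear in an alternating path (the $1$-arcs with $x^*_e=1$ have $z^*_e=0$ and hence no $z$-sub-arcs), so each such $e$ has exactly one $z$-sub-arc and exactly one $x$-sub-arc, the former witnessing the unique alternating path that traverses $e$ and the latter being freely assignable. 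That already yields $\alpha=3N$ with no change to $\apset_3$. You instead propose to lengthen the alternating paths to $5$ and assign the three forward arcs $a_1,a_3,a_5$; this never touches $x^*=z$ and in fact would not use half-integrality at all.

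That is the first red flag: if the length-$5$ scheme worked, it would give a $4/3$ upper bound for general \ATSP, improving the $3/2$ of Theorem~\ref{appthm:atsp}, which is far stronger than what is claimed. The concrete gap is in the extension step. To extend past $v_3$ you need an unmarked backward sub-arc $a_4=(v_4,v_3)\in E(M)$, and the balance $z(\delta^-(v_3)\setminus E(M))=z(\delta^-(v_3)\cap E(M))$ holds only when $v_3$ is not a start vertex. If $v_3$ is a start vertex, $\delta^-_M(v_3)=\emptyset$ and the extension dies. You assert that this situation ``would yield a basic improvement of $M$,'' but a basic improvement is, by definition, the addition of a \emph{single arc} from an end vertex to a start vertex, not the application of a longer alternating path, and the \ATSP algorithm in the paper only excludes those single-arc improvements. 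Worse, applying the length-$3$ alternating path $v_0\xrightarrow{a_1}v_1\xleftarrow{a_2}v_2\xrightarrow{a_3}v_3$ from an end vertex to a start vertex need not improve $M$: if $v_0$ is the end and $v_3$ the start of the \emph{same} path $P$ of $M$, and $(v_2,v_1)\in E(P)$, then adding $a_1,a_3$ and removing $a_2$ splits $P$ into two cycles, which \emph{increases} the number of components. So the obstruction you flag in your last paragraph is not a bookkeeping subtlety that half-integrality cleans up; it is an honest failure mode, and your plan does not address it. The fix is to abandon length-$5$ paths altogether and instead notice, as the paper does, that half-integrality makes the existing length-$3$ paths carry one extra assignable sub-arc.
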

\begin{proof}
  We use the same proof as for Theorem~\ref{appthm:atsp}, except the following.
  For any alternating path starting at some component (cycle or arc), let $a_1,a_2,a_3$ be the three sub-arcs that form $Q$.
  We set $y_{C,y_1} = y_{C,y_2} = y_{C,y_3}=1$.

  This is justified since as in the symmetric case, for half-integral instances $x^* = z$.
\end{proof}

\section{Implications for \SER}
\label{sec:SER}

Optimal fractional \SERup solutions may have a cost that is higher by a value $\varepsilon < 1$ than optimal fractional \SER solutions.
Therefore, lower bounds on the integrality gap for \SERup are also lower bounds for \SER.
Here, we want to analyze the other direction: we want to deduct integrality gap upper bounds for \SER from \SERup upper bounds.
In absolute terms, the difference between the optimal fractional costs of the two relaxations is at most one since the only impact of the introduced hyperplane is to round up the objective value to the next integer.
This insight implies that the \SER integrality gap and the \SERup integrality gap converge to a single value with increasing instance size.
In particular, the convergence provides a purely mechanical method to obtain \SER integrality gap results from \SERup integrality gap upper bounds.
For some value $\varepsilon$, suppose we know an integrality gap upper bound for \SERup that lies between the known upper and lower bound for \SER such that the two upper bounds are a factor $(1+\varepsilon)$ apart.
Then there is a $c \in O_\varepsilon(1)$ such that we simply have to check the \SER integrality gaps of all instances of order at most $c$:
the value $c$ is chosen in such a way that we have either found an instance that provides a new integrality gap lower bound or we have shown that the integrality gap upper bound of \SER smaller than $1+\varepsilon$ times the one of \SERup.

It is known that for $c \le 12$, the \SER integrality gap for \STSP is at most $10/9$~\cite{QSWvZ15}.
In the following we will show how to amplify the convergence such that we obtain new results already for the known values $c$ (see Fig.~\ref{fig:comparison}).
To this end, we will show two claims that loosely speaking state the following.
\begin{enumerate}
\item[(i)] Given the support graph of a \SER integrality gap instance for \STSP, we can introduce a degree-two vertex without shrinking the integrality gap by too much (depending on $c$). 
\item[(ii)] Given an \STSP instance with integrality gap $\alpha$ for \SER such that the support graph has a degree two vertex, we can construct an \STSP instance with integrality gap at least $\alpha$ for \SERup.
\end{enumerate}
\begin{figure}
\begin{center}
\begin{tabular}{|l|rl|l|}\hline
                    & \SERup&                                               & \SER\\\hline
General             & $5/4$ &$= 1.25$ (Theorem~\ref{thm:stsp})              & $1.2693$\\
                    & $26/21$&$\approx 1.239$ (\cite{QSWvZ15})   & $1.257$\\
1/2-Integral        & $7/6 $&$\approx 1.166$ (Theorem~\ref{thm:halfint})    & $1.179$\\\hline
\end{tabular}
\end{center}
\caption{\label{fig:comparison}Comparison of integrality gap upper for \SERup and \SER based on Theorem~\ref{thm:use-computation} and computational results for instances of order at most $12$.}
\end{figure}

The first claim is based on the following insight.
\begin{lemma}
\label{lem:split_edge}
Let $G$ be a \STSP instance of order $n$. Then there is an instance $G'$ such that 
\begin{itemize}
\item $\Opt(G') \ge \Opt(G) + 1$,
\item $\OptSER(G') = \OptSER(G) + 1$ and
\item we obtain an optimal basic solution $x'$ to $\SER(G')$ that has a degree two vertex in $G'_{x'}$.
\end{itemize}
Suppose we are given an optimal half-integral basic solution to $G$, then we can ensure that also $x'$ is half-integral.
\end{lemma}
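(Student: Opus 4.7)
The plan is to pick a $1$-edge $e=\{u,v\}$ in the support of an optimal basic solution $x$ of $\SER(G)$---such an edge exists by the Boyd--Pulleyblank result~\cite{BoydPulleyblank1990} already invoked in Algorithm~\ref{alg:stsp}---and then subdivide $e$ with a fresh vertex $w$. I form $G'$ from $G$ by adjoining $w$ with $\cost(\{u,w\})=\cost(\{v,w\})=1$ and $\cost(\{z,w\})=2$ for every $z\in V(G)\setminus\{u,v\}$, and define $x'_f=x_f$ for $f\in E(G)\setminus\{e\}$, $x'_e=0$, $x'_{\{u,w\}}=x'_{\{v,w\}}=1$, and $x'_{\{z,w\}}=0$ otherwise. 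A case analysis on the membership of $w,u,v$ in a candidate set $S'\subseteq V(G')$ confirms that every degree and subtour constraint of $\SER(G')$ is satisfied, so $x'$ is feasible with $\cost(x')=\cost(x)+1$, giving $\OptSER(G')\le\OptSER(G)+1$. The bound $\Opt(G')\ge\Opt(G)+1$ is immediate: any tour of $G'$ passes through $w$ via two incident edges of total cost at least $2$, and replacing them by the direct edge between $w$'s two tour-neighbors yields a tour of $G$ saving at least $1$ in cost (the tightest case, both neighbors in $\{u,v\}$, saves exactly $1$; otherwise a cost-$2$ edge $\{z,w\}$ is eliminated at the expense of a new edge of cost at most $2$).

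The heart of the proof is the LP lower bound $\OptSER(G')\ge\OptSER(G)+1$, for which I use a fractional splitting-off argument. Given any feasible $y$ of $\SER(G')$, set $\alpha=y_{\{u,w\}}$, $\beta=y_{\{v,w\}}$, and $c_z=y_{\{z,w\}}$ for $z\notin\{u,v\}$, so that $\alpha+\beta+\sum_z c_z=2$. By Lov\'asz's fractional splitting-off theorem---applicable because $y$ has edge connectivity at least $2$ on $V(G')$ and $w$ has fractional degree exactly $2$---there exist nonnegative symmetric values $(m_{z_1,z_2})$ satisfying $\sum_{z_2}m_{z_1,z_2}=y_{\{z_1,w\}}$ for every $z_1\in V(G)$, such that $y''_{\{z_1,z_2\}}:=y_{\{z_1,z_2\}}+m_{z_1,z_2}$ is feasible for $\SER(G)$. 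Summing the balance equations over $z\ne u,v$ and using $m\ge 0$ forces $m_{u,v}\ge\max(0,\alpha+\beta-1)$. Since every edge of $G$ costs at most $2$, $\cost(m)\le m_{u,v}\cdot 1+(1-m_{u,v})\cdot 2=2-m_{u,v}$, and the cost of $y$ on edges incident to $w$ equals $\alpha+\beta+2(2-\alpha-\beta)=4-\alpha-\beta$. Hence
\begin{equation*}
\cost_G(y'') - \cost_{G'}(y) \;\le\; -(4-\alpha-\beta) + (2-m_{u,v}) \;\le\; -1,
\end{equation*}
by separating the cases $\alpha+\beta\ge 1$ and $\alpha+\beta<1$. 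This gives $\OptSER(G)\le\cost_G(y'')\le\OptSER(G')-1$, closing the bound.

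Finally, $x'$ is basic because the tight constraints of $x$ in $\SER(G)$, together with the degree equality at the new vertex $w$ and the non-negativity constraints for the new zero-valued edges (including the now-tight $x'_e\ge 0$), form a full-rank system uniquely defining $x'$ in $\SER(G')$; and if $x$ was half-integral then so is $x'$, because the only new nonzero LP values are the two $1$s on $\{u,w\}$ and $\{v,w\}$. The main obstacle is invoking fractional splitting-off in combination with the forced lower bound on $m_{u,v}$ coming from nonnegativity of the balance equations; every other step is direct verification.
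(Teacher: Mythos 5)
Your construction is the same as the paper's: subdivide a $1$-edge of the support and set the two new edges to cost one. The feasibility check for $x'$ and the $\Opt(G')\ge\Opt(G)+1$ argument (remove the subdivision vertex from a tour and reconnect its two tour-neighbors) also match the paper. The interesting divergence is the direction $\OptSER(G')\ge\OptSER(G)+1$. The paper disposes of this with a one-line claim that $x'$ ``cannot be improved by moving to an adjacent solution in the polytope'' because $x^*$ was optimal---a statement that, as written, is not really justified, since the polytope $\SER(G')$ has many new vertices and the correspondence between its adjacent bases and those of $\SER(G)$ is not spelled out. You instead give a genuine lower-bound argument: take any $y\in\SER(G')$, apply fractional splitting-off at the new degree-$2$ vertex to produce $y''\in\SER(G)$, and then use the forced lower bound $m_{u,v}\ge\max(0,\alpha+\beta-1)$ coming from the balance equations to show $\cost_G(y'')\le\cost_{G'}(y)-1$. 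The case analysis and arithmetic are correct (your inequality even drops a favorable $-y_{\{u,v\}}$ term arising from the cost of $\{u,v\}$ changing from $1$ to $2$ in $G'$, so the bound only improves). This buys you a self-contained and rigorous proof of the LP inequality at the price of invoking the fractional splitting-off theorem, a heavier external tool that the paper does not cite; the paper's argument is lighter but under-justified. The two approaches are essentially the explicit and implicit versions of the same projection idea.

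One real gap: you take for granted that the $1$-edge $e=\{u,v\}$ found via Boyd--Pulleyblank has $\cost(e)=1$. This is not automatic---Boyd--Pulleyblank only guarantees an edge with $x_e=1$, and if that edge had cost $2$, both your $\Opt$ bound (the ``tightest case'' would save $0$, not $1$) and your cost estimate $\cost(m)\le 2-m_{u,v}$ would break. The paper explicitly handles this by first applying Proposition~\ref{pro:aux} to reduce to an instance with $\OptSER(G)<|V(G)|+1$, which forces every $1$-edge in the support to have cost $1$; you should include the same reduction. Separately, your justification of basicness by ``full-rank system of tight constraints'' is about as brief as the paper's and would need more care in a fully detailed write-up (in particular, you need to check that the degree equalities at $u$ and $v$, together with the inherited tight constraints of $x^*$, pin down the two new values $x'_{\{u,w\}}=x'_{\{v,w\}}=1$), but it is not a different level of rigor from what the paper itself provides.
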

\begin{proof}
We first compute an optimal basic solution $x^*$ of $\SER(G)$.
Now we obtain the graph $G''$ from $G$ as follows.
We set $V(G'') = V(G)$ and form a complete graph on the vertex set.
For each edge $e \in E(G_{x^*}$, the cost of $e$ in $G''$ is set to the cost in $G$.
For each edge $e \in E(G) \setminus E(G_{x^*})$, the cost of $e$ in $G''$ is two.
Clearly, $x^*$ is a basic solution to $\SER(G'')$ since we only changed the cost vector in the objective function but not the polytope itself. 
Since the edge costs within the support graph did not change and we did not decrease edge costs of the remaining edges, $x^*$ is optimal for $G''$.
For the same reason, also the cost of an optimal integral solution and therefore also the \SER integrality gap cannot be decreased in $G''$.

Boyd and Pulleyblank~\cite{BoydPulleyblank1990} showed that any basic solution to \SER has a $1$-edge.
Therefore we can find a $1$-edge $e'=\{u,w\} \in E(G''_{x^*})$.
By Proposition~\ref{pro:aux}, we may assume that that $\cost(e')=1$.

We obtain the aimed-for graph $G'$ from $G''$ by subdividing $e'$, \ie, we introduce a new vertex $v$ to~ $G'$, set $c(\{u,v\}) = c(\{v,w\}) = 1$ and $c(\{u,w\})=2$.
For all $v' \notin \{u,v,w\}$, we set $c(\{v,v'\})=2$.

To show the first claim, suppose by contradiction that there was an integral solution in $G'$ of cost $\Opt(G)$.
Then let $u',u''$ be the two vertices adjacent to $v$ in that solution.
We simply remove $v$ and add the edge $\{u',u''\}$ to obtain an improved solution to $G$.
If $\{u',u''\} = \{u,w\}$, $\cost(\{u',u''\})=1$ and $\cost(\{u',v\}) + \cost(\{v,u''\}) = 2$.
Otherwise, $\cost(\{u',u''\})=2$ and $\cost(\{u',v\}) + \cost(\{v,u''\}) \ge 3$.
In both cases, we obtain an integral solution to $G$ of cost smaller than $\Opt(G)$, a contradiction.

Let $x'$ be the vector obtained from $x^*$ by setting $x'_{e'}=0$, $x'_{\{u,v\}} = x'_{\{v,w\}} = 1$, and $x'_e = x^*_e$ for each of the remaining edges $e$.
We claim that $x'$ is an feasible basic solution for $\SER(G')$.
We have that $x'$ is feasible since $x'(\delta(u)) = x'(\delta(v)) = x'(\delta(w))=2$ and each cut $S$ between $u$ and $w$ still contains a $1$-edge, just as in $x^*$.
Since $\SER$ implicitly enforces that $x'_e \le 1$ for each edge $e$ and since $x^*$ is a basic solution, also $x'$ is a basic solution. 
Furthermore, since $x^*$ is optimal, $x'$ cannot be improved by moving to an adjacent solution in the polytope and thus it is optimal.

Clearly, $v$ is a degree-two vertex in $G'_{x'}$ and if $x^*$ is half-integral, also $x'$ is.
\end{proof}

Therefore the integrality gap $\alpha'$ of $G'$ is at least $(\Opt(G) + 1)/(\OptSER(G) + )$.
By simple arithmetics we obtain that 
$\alpha' \ge \alpha - (\alpha - 1)/(\OptSER(G) + 1)$, where $\alpha = \Opt(G)/\OptSER(G)$ is the integrality gap of $G$.
Conversely, $\alpha \le \alpha' + (\alpha' - 1)/\OptSER(G)$.

Claim (ii) follows by generalizing a result of Williamson~\cite{Williamson1990}.
\begin{lemma}
\label{lem:extend}
  Let $G$ be an instance of \STSP and let $x^*$ be an optimal solution to $\SER(G)$ such that $\Opt(G)/\OptSER(G) = \alpha$ and there is a degree-two vertex in the support $G_{x^*}$.
  Then, for any $\gamma \in \mathbb{N}$, there is an instance~$G'$ of order $|V(G')| = \gamma \cdot |V(G)|$ with $\Opt(G')/\OptSER(G') \ge \alpha$.
\end{lemma}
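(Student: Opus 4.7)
The plan is to chain $\gamma$ copies of $G$ together through the degree-two vertex to obtain $G'$. First, observe that since $v$ has degree two in $G_{x^*}$, the equality constraint $x^*(\delta(v))=2$ combined with the implicit bound $x^*_e\le 1$ (obtained from the subtour constraint on the pair of endpoints of $e$) forces $x^*_{\{u,v\}}=x^*_{\{v,w\}}=1$. Let $G_1,\dots,G_\gamma$ denote $\gamma$ disjoint copies of $G$, and write $v_i,u_i,w_i$ for the copies of $v,u,w$ in $G_i$. Define $G'$ on $V(G')=\bigcup_i V(G_i)$ (of size $\gamma\cdot|V(G)|$) with edge costs as follows: within copy $G_i$, edges retain their cost from $G$ except that the cost of $\{u_i,v_i\}$ is raised from $1$ to $2$; the ``bridge'' edges $\{u_i,v_{(i\bmod\gamma)+1}\}$ for $i=1,\dots,\gamma$ have cost $1$; all other inter-copy edges have cost $2$.

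For the LP upper bound, I would construct $x' \in \SER(G')$ from $x^*$ by placing the support of $x^*$ inside each $G_i$ and re-routing the unit of flow from $\{u_i,v_i\}$ onto the bridge $\{u_i,v_{i+1}\}$: set $x'_{\{u_i,v_i\}}=0$, $x'_{\{u_i,v_{i+1}\}}=1$, and leave all other values unchanged. The degree constraints at $u_i$ and $v_{i+1}$ remain satisfied because one unit leaves and one unit enters each; subtour constraints for cuts inside a single copy remain satisfied since any internal cut separating $u_i$ from $v_i$ loses value one from $\{u_i,v_i\}$ but gains exactly one from the crossing bridge; subtour constraints for cuts separating full copies are satisfied because the $\gamma$ bridges form a cycle on the copies, so every such cut is crossed by at least two bridge edges. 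Since the swap exchanges $\gamma$ cost-one edges for $\gamma$ cost-one edges, $\cost(x')=\gamma\cdot\OptSER(G)$, giving $\OptSER(G')\le\gamma\cdot\OptSER(G)$.

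For the integer lower bound, I would show that any Hamiltonian cycle $T$ in $G'$ satisfies $\cost_{G'}(T)\ge\gamma\cdot\Opt(G)$. Decompose $T$ into its restriction $T_i$ inside each copy $G_i$, which forms $m_i$ vertex-disjoint paths covering $V(G_i)$; the total number of inter-copy edges of $T$ equals $\sum_i m_i$. For each copy, augment $T_i$ to a Hamiltonian cycle $C_i$ in the original graph $G$ by selecting $m_i$ closing edges that pair up the $2m_i$ path endpoints into a single cycle. The crucial idea is to choose the closings so that the total added cost across copies is at most the total inter-copy cost of $T$: whenever $T$ uses a cost-one bridge $\{u_i,v_{i+1}\}$, the endpoint $u_i$ of copy $i$ and $v_{i+1}$ of copy $i{+}1$ can be matched via the cost-one edges $\{u_i,v_i\}$ and $\{v_{i+1},\ldots\}$ in $G$, so each bridge ``pays for'' a closing of equal cost; each cost-two inter-copy edge pays for at most a cost-two closing. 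Combining the fact that $T_i$ has the same cost in $G'$ and in $G$ (the only edge with different cost, $\{u_i,v_i\}$, can be treated specially since using it makes the path already contain the natural closing), we obtain $\gamma\cdot\Opt(G)\le \sum_i\cost(C_i)\le\cost_{G'}(T)$.

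The main obstacle is the accounting in the third step: for tours with $m_i>1$ or with cost-two inter-copy edges in unusual configurations, we must exhibit pairings of endpoints that simultaneously (i) close each copy into a single Hamiltonian cycle of $G$ and (ii) keep the total closing cost within the inter-copy budget. The structural identity that makes this work is that the cheap bridge $\{u_i,v_{i+1}\}$ plays the same role in $G'$ as the cheap edge $\{u_i,v_i\}$ does in $G$, so any portion of $T$ that leaves copy $i$ via $u_i$ through a bridge can be ``projected'' onto a portion of a Hamiltonian cycle in $G$ that uses $\{u_i,v_i\}$, at matching cost. Verifying a valid projection for arbitrary configurations of $m_i$ and arbitrary mixtures of cheap and expensive inter-copy edges is the detailed technical content of the proof.
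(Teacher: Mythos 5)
Your $\gamma$-copy chain through the degree-two vertex is the right construction (the paper builds $G'$ for $\gamma=2$ and iterates, so strictly it only reaches powers of two; your one-shot chain actually matches the stated $\gamma\in\mathbb{N}$ more directly), and the LP side is essentially sound aside from not spelling out subtour cuts that meet several copies properly. The genuine gap, which you flag yourself, is the integer lower bound: the proposed charging ``closing cost $\le$ inter-copy cost plus savings'' does not hold in general. Take $\gamma=2$ and a tour $T$ with exactly two inter-copy edges, the bridge $\{u_1,v_2\}$ of cost one and a cost-two cross $\{a_1,b_2\}$. Then $T_1,T_2$ are single Hamiltonian paths and the closings are forced to be $\{u,a\}$ and $\{v,b\}$. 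Since $v$ has support degree two (and all non-support edges can be taken to have cost two), its only cost-one neighbors in $G$ are $u$ and $w$, so $\{v,b\}$ costs two whenever $b\ne w$, and $\{u,a\}$ can also cost two; moreover nothing forces $\{u_i,v_i\}\in T_i$, so there need be no saving. That is closing cost $4$ against inter-copy cost $3$, so $\sum_i\cost_G(C_i)>\cost_{G'}(T)$. Whether this configuration can coexist with $\cost_{G'}(T)<\gamma\Opt(G)$ is exactly what the sketch leaves unresolved; the inequality you want is not a consequence of the stated accounting.

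The paper avoids the per-bridge charge by not letting a bridge create endpoints at all. Given the tour $C'$, it removes only the $k$ cost-two cross edges (leaving $k$ paths), and then \emph{re-routes} each bridge used by the tour onto the corresponding internal raised-cost edge in the copy of its $v$-end, i.e.\ $\{v_1,s_2\}\mapsto\{v_1,s_1\}$; that edge has $G$-cost one, so the swap is $G$-cost-neutral, and after re-routing there are no cross edges left. The only degree-one vertices are the $2k$ created by the removals, and these are joined with $k$ closing edges of cost at most two, exactly offsetting the $2k$ just saved, which gives $\cost_G(C_1)+\cost_G(C_2)\le\cost_{G'}(C')$ without ever charging a closing against a cost-one bridge. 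This re-routing step --- the bridge substitutes for the internal cost-one edge rather than becoming an endpoint to be paired up --- is the move missing from your third step, and even then it takes a bit of care to handle the asymmetric case in which only some of the bridges are actually used by the tour.
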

\begin{proof}
  Let $n$ be the order of $G$.
  If $n \le 2$, then $\alpha=1$ and the claim is trivially true.

  Henceforth, let $n > 2$.
  We show how to obtain an instance $G'=(V',E')$ of order $2n$ and an optimal solution $x'$ to $\SER(G')$ such that the support of $x'$ has a vertex~$v$ with only two adjacent vertices and $\Opt(G')/\OptSER(G') = \alpha$.
  Since we can repeat the construction arbitrarily often, the claim follows inductively.

  Without loss of generality, we assume that all edges not in the support of~$x^*$ are of cost two, since the assumption does not change $\cost(x^*)$ and cannot decrease $\Opt(G)$.
  Let $s,t$ be the two vertices adjacent to $v$.
  Then, due to the equality constraints, $x_{\{v,s\}} = x_{\{v,t\}} = 1$ and $s \neq t$.
  Furthermore, we can assume $\cost{\{v,s\}} = \cost{\{v,t\}} = 1$.
  Otherwise, suppose that $\cost{\{v,s\}} = 2$.
  Decreasing the cost to one also decreases $\cost(x^*)$ by one and $\OptSER(G)$ by at most one.
  Hence, the integrality gap cannot decrease.
  The same holds for the edge~$\{v,t\}$.

  For $i \in \{1,2\}$, let $G_i=(V_i,E_i)$ be a copy of $G$ and let $v_i$, $s_i$, and $t_i$ be the copies of $v$, $s$, and $t$.
  We set $V' = V_1 \cup V_2$.
  We first set the edge costs in $E'$ to those of $E_1$ and $E_2$.
  For the remaining edges $ e \in \{\{u,w\} : u \in V_1, w \in V_2\}$, we set $\cost(e)=2$.
  Afterwards, we change the cost of four edges as follows (see also Fig.~\ref{fig:intgap}):
  \begin{equation*}
    \begin{split}
      \cost(\{v_1,s_2\}) & =  \cost(\{v_2,s_1\})  = 1,\\
      \cost(\{v_1,s_1\}) & =  \cost(\{v_2,s_2\})  = 2 \enspace .\\
    \end{split}
  \end{equation*}
  Correspondingly, we set
  \begin{equation*}
    \begin{split}
      x'_{\{v_1,s_2\}} & =  x'_{\{v_2,s_1\}}  = 1,\\
      x'_{\{v_1,s_1\}} & =  x'_{\{v_2,s_2\}}  = 0 \enspace .\\
    \end{split}
  \end{equation*}
  We set $x'$ for all remaining edges of $G_1$ and $G_2$ according to $x^*$.
  For all edges not yet considered, we set~$x'$ to zero.
  Clearly, $\cost(x') = 2 \cost(x^*)$, and the support of $x'$ has some vertex of degree~2.

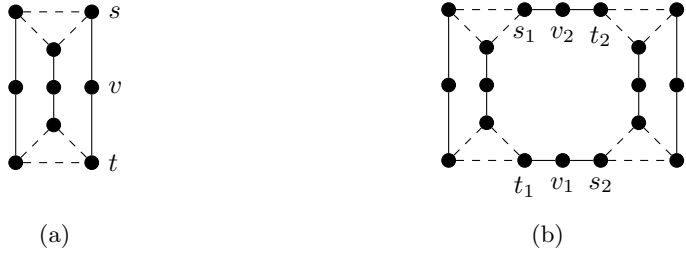
\begin{figure}[htb]
  \centering
  \begin{subfigure}[b]{0.4\textwidth}
    \centering
    \begin{tikzpicture}[scale=0.5]
      \node (1) at (0,0) [vertex, label=below:\phantom{M}]{};
      \node (2) at (2,0) [vertex, label=right:$t$]{};
      \node (3) at (1,1) [vertex]{};
      \node (4) at (0,2) [vertex]{};
      \node (5) at (1,2) [vertex]{};
      \node (6) at (2,2) [vertex, label=right:$v$]{};
      \node (7) at (1,3) [vertex]{};
      \node (8) at (0,4) [vertex]{};
      \node (9) at (2,4) [vertex, label=right:$s$]{};
      \draw[dashed](1)--(2);
      \draw[dashed](1)--(3);
      \draw[](1)--(4);
      \draw[dashed](2)--(3);
      \draw[](2)--(6);
      \draw[](3)--(5);
      \draw[](4)--(8);
      \draw[](5)--(7);
      \draw[](6)--(9);
      \draw[dashed](7)--(8);
      \draw[dashed](7)--(9);
      \draw[dashed](8)--(9);
    \end{tikzpicture}
    \caption{$~$}
    \label{fig:gap_undirected}
  \end{subfigure}
  \begin{subfigure}[b]{0.4\textwidth}
    \centering
    \begin{tikzpicture}[scale=0.5]
      \node (1) at (0,0) [vertex, label=below:\phantom{M}]{};
      \node (2) at (2,0) [vertex, label=below:$t_1$]{};
      \node (3) at (1,1) [vertex]{};
      \node (4) at (0,2) [vertex]{};
      \node (5) at (1,2) [vertex]{};
      \node (6) at (3,0) [vertex, label=below:$v_1$]{};
      \node (7) at (1,3) [vertex]{};
      \node (8) at (0,4) [vertex]{};
      \node (9) at (2,4) [vertex, label=below:$s_1$]{};
      \node (10) at (4,0) [vertex, label=below:$s_2$]{};
      \node (20) at (6,0) [vertex]{};
      \node (30) at (5,1) [vertex]{};
      \node (40) at (3,4) [vertex, label=below:$v_2$]{};
      \node (50) at (5,2) [vertex]{};
      \node (60) at (6,2) [vertex]{};
      \node (70) at (5,3) [vertex]{};
      \node (80) at (4,4) [vertex, label=below:$t_2$]{};
      \node (90) at (6,4) [vertex]{};
      \draw[dashed](1)--(2);
      \draw[dashed](1)--(3);
      \draw[](1)--(4);
      \draw[dashed](2)--(3);
      \draw[](2)--(6);
      \draw[](3)--(5);
      \draw[](4)--(8);
      \draw[](5)--(7);
      \draw[](6)--(10);
      \draw[dashed](7)--(8);
      \draw[dashed](7)--(9);
      \draw[dashed](8)--(9);
      \draw[dashed](10)--(20);
      \draw[dashed](10)--(30);
      \draw[](9)--(40);
      \draw[dashed](20)--(30);
      \draw[](20)--(60);
      \draw[](30)--(50);
      \draw[](40)--(80);
      \draw[](50)--(70);
      \draw[](60)--(90);
      \draw[dashed](70)--(80);
      \draw[dashed](70)--(90);
      \draw[dashed](80)--(90);
    \end{tikzpicture}
    \caption{$~$}
    \label{fig:gap_undirected2}
  \end{subfigure}
  \caption{Integrality gap lower bound instances for \STSP, where the edges are drawn if
  and only if they are of cost one. There is a solution
  $x$ to \SER such that $x_e=1/2$ for all dashed edges $e$ and $x_e=1$
  otherwise.}
  \label{fig:intgap}
\end{figure}

  Let us now verify that $x'$ is a feasible solution to $\SER(G')$.
  For the equality constraints it suffices to consider $v_1$, $v_2$, $s_1$, and $s_2$, because the edges incident to all other vertices did not change.   
  Since we simply replaced edges by others with identical value, all equality constraints are satisfied.

  For all $S \subseteq V_1$ and $S \subseteq V_2$, the subtour elimination constraints are satisfied, where we use that any of these cuts containing $\{s_i,v_i\}$ also contains $\{s_i,v_{3-i}\}$, $i \in \{1,2\}$.
  Fix a set~$S$ not yet considered.
  If either both $v_1$ and $v_2$ or none of them is in $S$, $\delta_{G'}(S)$ contains all edges of a cut in either $G_1$ or~$G_2$ and thus we are done.
  Thus, since $\delta(S) = \delta(V' \setminus S)$, we may assume that $v_1 \in S$ and $v_2 \notin S$.
  Let $S_1 = S \cap V_1$ and $S_2 = S \cap V_2$.
  We obtain that $x'(\delta(S)) = x'(\delta(S_1)) + x'(\delta(S_2)) - 2 x'_{\{v_1,s_2\}} \ge 2 + 2 - 2 = 2$.

  Finally, we have to argue that $\Opt(G') \ge 2\Opt(G)$.
  For the sake of contradiction, assume that $\Opt(G') < 2\Opt(G)$.
  We derive a contradiction by constructing solutions within~$G_1$ and~$G_2$ such that the smaller one has a cost of at most $\Opt(G')/2$.
  Fix an optimal solution~$C'$ for $G'$.
  Let $k$ be the number of edges $e=\{u,w\}$ such that $\cost(e)=2$, $u \in V_1$, and $w \in V_2$.
  We remove all of these edges and are left with $k$ paths if $k \ge 1$ or one cycle if $k=0$.
  We replace $\{v_1,s_2\}$ by $\{v_1,s_1\}$, and replace $\{v_2,s_1\}$ by $\{v_2,s_2\}$, if these are contained in $C$.
  As a result, there are no edges between $G_1$ and $G_2$ left and all vertices have a degree of at most two.
  Note that within each of the two graphs $G_1$ and $G_2$, there is either a single cycle or a collection of paths (but not both). There are exactly $2k$ vertices of degree one.
  Therefore, we can introduce $k$ edges of cost at most~$2k$ in order to form Hamiltonian cycles $C_1$ and $C_2$ in~$G_1$ and $G_2$.
  Let us rename the graphs such that $\cost(C_1) \le \cost(C_2)$.
  By dropping the indices, we use $C_1$ to form a tour~$C$ in~$G$.
  Since $\cost(\{v,s\})=1$ in $G$, $\cost(C) \le \Opt(G')/2 < \Opt(G)$, a contradiction.
\end{proof}

Let $G'$ be the graph obtained from an instance $G$ by applying Lemma~\ref{lem:split_edge}.
Since any basic solution to $\SER$ or $\SERup$ is a vector of rational values, we can apply Lemma~\ref{lem:extend} to amplify any $G'$ as to obtain a graph $G''$ where
$\SER(G'') = \SERup(G'')$ and $\Opt(G')/\OptSER(G') = \Opt(G'')/\OptSER(G'')$.

To summarize, the main result of this section can be stated as follows.
\begin{theorem}
\label{thm:use-computation}
Let $\mathcal{G}$ be a class of instances such that for each $G \in \mathcal{G}$, either $\OptSER(G) = \lfloor(\OptSER(G))\rfloor$
or $\OptSER(G) - \lfloor(\OptSER(G))\rfloor \ge \gamma$.
Suppose that for any $G \in \mathcal{G}$, $\Opt(G)/\OptSERup(G) \le \alpha'$.
Then, for any $c \in \mathbb{N}$, either the \SER integrality gap of any instance in $\mathcal{G}$ is at most
$\beta := \alpha' + (\alpha' - 1)/(c+\gamma)$
or there is an instance $G \in \mathcal{G}$ such that $\OptSER(G) \le c$ and $\Opt(G)/\OptSERup(G) \ge \beta$.
\end{theorem}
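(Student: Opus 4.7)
The plan is to establish the \SER bound~$\beta$ by a contrapositive argument that combines Lemmas~\ref{lem:split_edge} and~\ref{lem:extend} with the hypothesis on~$\alpha'$. I fix an arbitrary $G \in \mathcal{G}$ with $\OptSER(G) > c$ and aim to show $\Opt(G)/\OptSER(G) \le \beta$; an instance with $\OptSER(G) \le c$ that violates~$\beta$ feeds the second alternative of the conclusion (in the integer case $\OptSER(G)=\OptSERup(G)$ this is immediate, and in the fractional case it can be produced by a small amplification since basic \SER solutions are rational).

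First, I would apply Lemma~\ref{lem:split_edge} to~$G$ to obtain an instance $G' \in \mathcal{G}$ satisfying $\OptSER(G') = \OptSER(G)+1$, $\Opt(G') \ge \Opt(G)+1$, and admitting an optimal basic \SER solution whose support has a vertex of degree exactly~$2$; half-integrality, if present, is preserved. Next I would apply Lemma~\ref{lem:extend} to~$G'$: because $\OptSER(G')$ is rational, choosing the amplification factor to be (a multiple of) its denominator produces an instance $G'' \in \mathcal{G}$ with $\OptSER(G'') \in \mathbb{Z}$ and $\Opt(G'')/\OptSER(G'') \ge \Opt(G')/\OptSER(G')$. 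Integrality of $\OptSER(G'')$ yields $\OptSERup(G'') = \OptSER(G'')$, and combining with the hypothesis gives
\[
\frac{\Opt(G')}{\OptSER(G')} \;\le\; \frac{\Opt(G'')}{\OptSERup(G'')} \;\le\; \alpha'.
\]

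Unpacking the split-edge relation, $\Opt(G)+1 \le \alpha'(\OptSER(G)+1)$, whence $\Opt(G)/\OptSER(G) \le \alpha' + (\alpha'-1)/\OptSER(G)$. To conclude, the class property forces $\OptSER(G) \ge c+\gamma$ whenever $\OptSER(G) > c$: either $\OptSER(G) \in \mathbb{Z}$ so that $\OptSER(G) \ge c+1 \ge c+\gamma$ (using $\gamma \le 1$), or $\lfloor\OptSER(G)\rfloor \ge c$ combined with the fractional-part lower bound~$\gamma$ gives $\OptSER(G) \ge c+\gamma$. Substituting yields $\Opt(G)/\OptSER(G) \le \alpha' + (\alpha'-1)/(c+\gamma) = \beta$, completing the main case.

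The principal obstacle I anticipate lies in the amplification step, since Lemma~\ref{lem:extend}'s proof explicitly details only a doubling construction (factor~$2$). For half-integral $\OptSER(G')$---the main regime of interest---iterated doubling already clears the denominator $2$ in a single step; in full generality one must verify that the two-copy gluing generalizes to a $k$-copy construction for arbitrary $k \in \mathbb{N}$, preserving the integrality-gap lower bound as well as membership in~$\mathcal{G}$. The remaining arithmetic and the short case distinction coming from the definition of~$\mathcal{G}$ are then routine.
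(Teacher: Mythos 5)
Your overall approach matches the one the paper itself sketches in the paragraphs preceding the theorem: apply Lemma~\ref{lem:split_edge} to raise $\OptSER$ by one while obtaining a degree-two support vertex, apply Lemma~\ref{lem:extend} to amplify until $\OptSER$ becomes an integer so that $\OptSERup$ and $\OptSER$ coincide, invoke the $\alpha'$-hypothesis on the amplified instance, and then unwind the split-edge arithmetic together with the observation that $\OptSER(G) > c$ plus the fractional-part condition forces $\OptSER(G) \ge c+\gamma$. The arithmetic in the large-instance case is correct, and the sequencing of the two lemmas is exactly what the paper intends.

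Two remarks. First, the caveat you raise about Lemma~\ref{lem:extend} is genuine: the proof there only exhibits a doubling and appeals to iteration, which strictly speaking covers amplification factors that are powers of two, while clearing the denominator of $\OptSER(G')$ in general requires an arbitrary integer multiplier. This gap is inherited from the paper's own development and you are right to flag it. Second, your treatment of the small-instance branch does not work as written. If $G$ has $\OptSER(G)\le c$ and $\Opt(G)/\OptSER(G)\ge\beta$ with $\OptSER(G)$ non-integral, then $\OptSERup(G)>\OptSER(G)$, so $\Opt(G)/\OptSERup(G)$ may well fall below $\beta$, and the ``small amplification'' you propose cannot rescue this: both Lemma~\ref{lem:split_edge} and Lemma~\ref{lem:extend} strictly increase $\OptSER$, so any transformed instance leaves the regime $\OptSER\le c$ that the second disjunct demands. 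Note further that, as literally stated, the second disjunct with $\OptSERup$ can never hold under the theorem's own hypothesis whenever $\alpha'>1$, since it would force $\Opt(G)/\OptSERup(G)\ge\beta>\alpha'$; the intended reading is almost certainly $\Opt(G)/\OptSER(G)\ge\beta$, and with that reading the small-instance case feeds the second disjunct immediately, with no transformation required, which is simpler than what you wrote.
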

The values in Fig.~\ref{fig:comparison} are obtained from setting $c=13$ and, for the class of instances with half-integral support, $\gamma = 1/2$.

\section{A Note on $\boldsymbol{\ATSP}$ Integrality Gap Instances}

We did not attempt to show similar properties of \SER for \ATSP, since there are no known bounds for small instances. 
We have, however, a similar amplification technique to obtain arbitrarily large integrality gap instances given that the small instance at hand has certain properties.

We have seen, in Theorem~\ref{lem:simple_intgap}, that the instance of Fig.~\ref{fig:gap_directed} provides an intagrality gap of $6/5$.
Let us now verify that also the instance of Fig.~\ref{fig:gap_directed2} provides an integrality gap lower bound of $6/5$.
Clearly, by checking all cuts we can observe easily that assigning $x_e=1/2$ to each arc yields a feasible solution to \SER of cost $10$.

To show a lower bound on the size of an optimal integral solution, we analyze the number of cost two arcs.
If there are at least two arcs of cost two, the overall cost of the solution is at least~$12$ and we are done.
By the argument for the previous instance, there is at least one arc of cost two.
By means of contradiction, let us assume that there is an integral solution with exactly one arc $a$ of cost two.

By symmetry, again we can assume without loss of generality that $(v_1,t_1)$ is an arc of the integral solution.
Then the tail of $a$ is in the left 4-cycle, since otherwise there is a vertex in the left 4-cycle that cannot be collected without using an arc of cost two.
The head of $a$ cannot be $v_1$ since otherwise we created a short cycle (or used another arc of cost two).
Thus, the arc~$(s_2,v_1)$ must belong to the solution.
By an analogue argument as above, the head of $a$ belongs to the right 4-cycle.
However, to collect $v_2$ the tour has to move back to the left 4-cycle and can only reach $v_1$ by using another arc of cost $2$.
Therefore, indeed any integral solution uses at least two arcs of cost two.

We now show that the construction can be extended to arbitrarily large size.
\begin{lemma}
\label{applem:aextend}
  Let $G$ be an instance of \ATSP with an optimal solution $x^*$ to $\SER(G)$ such that $\Opt(G)/\OptSER(G) = \alpha$ and a vertex $v$ with exactly two adjacent vertices in the support~$G_{x^*}$.
  Suppose that any two paths $P_1,P_2$ with $V(P_1) \cap V(P_2) = \{v\}$ and $V(P_1) \cup V(P_2) = V(G)$ that either both start in~$v$ or both end in $v$ satisfy $\cost(P_1) + \cost(P_2) \ge \Opt(G)-2$.
  Then there are infinitely many instances~$G'$ of \ATSP with $\Opt(G')/\OptSER(G') \ge \alpha$.
\end{lemma}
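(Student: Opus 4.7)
I would adapt the doubling argument of Lemma~\ref{lem:extend} to the directed setting, inserting the two-paths hypothesis to close the residual slack that arises in the directed analysis. By the directed version of Proposition~\ref{pro:aux} we may assume $\cost((s,v)) = \cost((v,t)) = 1$ for the two support neighbors $s,t$ of $v$ and $\cost(e) = 2$ for every other arc of $G$. Build $G'$ on $V_1 \cup V_2$, two disjoint copies of $V(G)$: within each copy $V_i$ inherit the costs from $G$ except re-weight $\cost((s_i,v_i)) := 2$, and across copies set $\cost((s_1,v_2)) = \cost((s_2,v_1)) := 1$ and every other inter-copy arc to cost $2$. Define $x'$ by $x'_e = x^*_e$ on arcs inside each copy, except $x'_{(s_i,v_i)} := 0$, together with $x'_{(s_1,v_2)} = x'_{(s_2,v_1)} := 1$ on the cheap crossings. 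The swap preserves every in- and out-degree; every cut of $V(G')$ sits inside one copy (in which case $\SER(G)$-feasibility of $x^*$ applies) or separates the two copies (in which case both cheap arcs cross it). Hence $x' \in \SER(G')$ with $\cost(x') = 2\cost(x^*) = 2\OptSER(G)$.

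To prove $\Opt(G') \geq 2\Opt(G)$, fix any Hamiltonian cycle $C'$ of $G'$ and let $k$ be the number of $V_1 \to V_2$ arcs of $C'$ (equal to the number of $V_2 \to V_1$ arcs). With $\beta_j := [(s_j,v_{3-j}) \in C']$, $\alpha_i := [(s_i,v_i) \in C']$, and $w_i$ the sum of $G$-costs of arcs of $C'$ inside $V_i$, a direct accounting gives
\begin{equation*}
\cost(C') \;=\; w_1 + w_2 + \alpha_1 + \alpha_2 + 4k - \beta_1 - \beta_2 \enspace .
\end{equation*}
The projection of $C'$ onto $V_i$ is a union of $k$ vertex-disjoint paths covering $V_i$; $v_i$ is a path-start iff $\beta_{3-i}=1$ and $s_i$ is a path-end iff $\beta_i=1$. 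By closing each projection into a Hamiltonian cycle of $G$ with $k$ additional arcs of $G$-cost $\leq 2$, preferring the cost-$1$ arc $(s,v)$ whenever the path-end $s_i$ and path-start $v_i$ lie on distinct paths of the projection, one obtains a bound $w_i \geq \Opt(G) - 2k + \epsilon_i$ with $\epsilon_i \in \{0,1\}$. A standard case analysis on $\beta_1,\beta_2$ and on the relative positions of $v_i,s_i$ in the projections then yields $\cost(C') \geq 2\Opt(G)$ in every case except a small residual set of configurations in which the cost-$1$ arc $(s,v)$ cannot be used as a closing arc (for instance, because in the problematic projection $v_i$ is internal to a Hamiltonian-type path whose adjacent arc would have to be reversed).

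In each residual configuration, the projections are decomposed and spliced so as to exhibit two paths in $G$ both starting at $v$ (or both ending at $v$), with pairwise intersection $\{v\}$ and union $V(G)$: concretely, one splits the Hamiltonian-type projection at $v$ and combines its two halves with linking arcs of $G$-cost at most $2$ coming from the other projection and from the cost-$1$ arc incident to $v$. The two-paths hypothesis then lower-bounds the sum of the two paths' costs by $\Opt(G) - 2$, which, after subtracting the at most $2(k-1)$ cost of the auxiliary linking arcs, recovers the missing slack and gives $\cost(C') \geq 2\Opt(G)$ in the residual configurations as well. Finally, $v_1$ has exactly two support neighbors in $G'_{x'}$, namely $s_2$ (via the cheap crossing) and $t_1$, and the two-paths hypothesis at $v_1$ in $G'$ lifts from its counterpart at $v$ in $G$ by projecting any candidate pair of paths onto the two copies; hence the doubling iterates, producing infinitely many \ATSP instances with integrality gap at least $\alpha$. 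The main obstacle is exactly the construction of the two paths at $v$ in the residual cases: choosing which portions of the two projections to combine so that the resulting pair satisfies the ``both starting at $v$'' (or ``both ending at $v$'') template of the hypothesis, given that in a directed graph one cannot freely reverse sub-paths to re-orient endpoints.
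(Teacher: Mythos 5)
Your overall plan---double the instance and swap a small set of arcs at $v$ so that the copies become intertwined, then invoke the two-paths hypothesis when the decomposition of a candidate tour $C'$ fails to give two vertex-disjoint cycles---is indeed the same strategy the paper uses. However, there are concrete gaps that would make the proof as written fail on precisely the instances the lemma is meant to cover.

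The central problem is that you implicitly assume the arcs into and out of $v$ in $G_{x^*}$ are $1$-arcs. You redirect only $(s_1,v_1)$ and $(s_2,v_2)$ to $(s_1,v_2)$ and $(s_2,v_1)$, and you set $x'_{(s_1,v_2)} = x'_{(s_2,v_1)} := 1$. This requires $x^*_{(s,v)} = 1$. But the hypothesis only says $v$ has exactly two adjacent \emph{vertices} in the support, which in the directed case permits all four of $(s,v),(v,s),(t,v),(v,t)$ to lie in the support with fractional values. Indeed the equality constraints force $x^*_{(s,v)} + x^*_{(t,v)} = 1$ and $x^*_{(v,s)} + x^*_{(v,t)} = 1$, and the motivating instance (Fig.~\ref{fig:simple_gap_directed}) has all four arcs at value $1/2$. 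With $a := x^*_{(v,s)}$ possibly in $(0,1)$, your $x'$ is infeasible: the in-degree constraint at $v_2$ is violated because you set $x'_{(s_1,v_2)} := 1$ while $(t_2,v_2)$ still carries weight $a > 0$. The paper circumvents this by swapping \emph{both directions} (eight arcs in total) and carrying the actual fractional values $a$ and $1-a$ on the cross-copy arcs rather than forcing them to $1$.

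Your feasibility argument ``every cut of $V(G')$ sits inside one copy or separates the two copies'' is also too quick: most cuts $S$ contain a proper nonempty part of each copy, and for those you need to account for exactly which of the swapped arcs cross $S$. The paper resolves this by distinguishing whether $\{v_1,v_2\}\cap S$ has size $0$, $2$, or $1$; the first two cases reduce to the original LP via an identification of $v_1,v_2$, and the size-$1$ case requires the explicit bookkeeping of the replaced arcs of both orientations.

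Finally, your cost-accounting identity and the step ``a standard case analysis \ldots yields $\cost(C')\ge 2\Opt(G)$ in every case except a small residual set'' are asserted rather than derived, and the residual reduction to the two-paths hypothesis is left at the level of intent---you even flag it yourself as the obstacle. In the paper this is where the reverse construction $\mathrm{rev}(\cdot)$ and the case split on which pair of the arcs $\{(s_1,v_2),(t_1,v_1)\}$, $\{(v_2,s_1),(v_1,t_1)\}$, etc.\ both appear in $C'$ does the real work, and the two-paths hypothesis is applied only after showing that the remaining arcs of cost two number at least $2$, which is needed to make the arithmetic $\cost(P_1)+\cost(P_2) \le \Opt(G')/2 - 2$ close. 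As written, your proposal would have to be substantially reworked (both-direction swap with fractional values, the three-way cut analysis, and a precise reconstruction of the two paths with controlled extra cost) before it could be considered a proof of the lemma in the generality stated.
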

\begin{proof}
  We show how to obtain an instance $G'=(V',E')$ of order $2n$ and a solution~$x'$ to $\SER(G')$ such that the support graph $G'_{x'}$ again satisfies the conditions of the lemma and $\Opt(G')/\OptSER(G') \ge \alpha$ (see also Fig.~\ref{appfig:gap_join}).
  The idea is similar to the proof of Lemma~\ref{lem:extend}.
  Again, we create two copies~$G_1,G_2$ of~$G$. 

  Without loss of generality, we assume that all arcs not in the support of~$x^*$ are of cost two, since the assumption does not change $\cost(x^*)$ and cannot decrease $\Opt(G)$.
  Let $s$ and $t$ be the two vertices adjacent to $v$.

  For $i \in \{1,2\}$, let $G_i=(V_i,E_i)$ be a copy of $G$ and let $v_i,s_i,t_i$ be the copies of~$v,s,t$, respectively.
  We set $V(G') = V(G_1) \cup V(G_2)$.
  We first set the arc costs in $E(G')$ to those of~$E(G_1)$ and~$E(G_2)$.
  For the remaining arcs $ e \in \{(u,w),(w,u) : u \in V(G_1), w \in V(G_2)\}$, we set $\cost(e)=2$.
  Afterwards, we change the weights of the following eight arcs:
  \begin{align*}
    \cost((s_1,v_2)) & := \cost((s_1,v_1)),&  \cost((v_2,s_1)) & := \cost((v_1,s_1)),\\
    \cost((s_2,v_1)) & := \cost((s_2,v_2)),&  \cost((v_1,s_2)) & := \cost((v_2,s_2)),\\
    \cost((s_1,v_1)) & := 2,               &  \cost((s_2,v_2)) & := 2,\\
    \cost((v_1,s_1)) & := 2,               &  \cost((v_2,s_2)) & := 2 \enspace .
  \end{align*}
  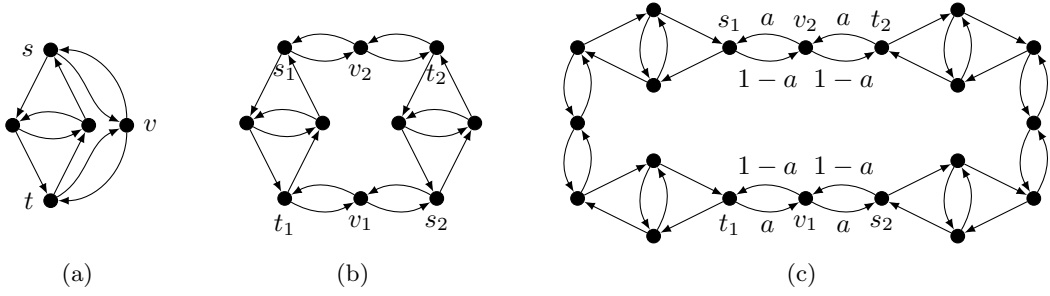
\begin{figure}[tb]
    \centering
  \begin{subfigure}[b]{0.22\textwidth}
    \centering
    \begin{tikzpicture}[->,scale=0.5]
      \node (1) at (0,2) [vertex]{};
      \node (2) at (1,0) [vertex, label=below:\phantom{M}, label=left:$t$]{};
      \node (3) at (1,4) [vertex, label=left:$s$]{};
      \node (4) at (2,2) [vertex]{};
      \node (5) at (3,2) [vertex, label=right:$v$]{};
      \draw[>=latex] (1) -> (2);
      \draw[>=latex] (2) -> (4);
      \draw[>=latex] (4) -> (3);
      \draw[>=latex] (3) -> (1);
      \draw[>=latex] (1) to[bend right] (4);
      \draw[>=latex] (4) to[bend right] (1);
      \draw[>=latex] (2) to[out=30, in=205] (5);
      \draw[>=latex] (5) to[out=270, in = 0] (2);
      \draw[>=latex] (3) to[out=330, in=155] (5);
      \draw[>=latex] (5) to[out=90, in=0] (3);
    \end{tikzpicture}
    \caption{$~$}
    \label{fig:gap_directed}
  \end{subfigure}
  \begin{subfigure}[b]{0.22\textwidth}
    \centering
    \begin{tikzpicture}[->,scale=0.5]
      \node (1) at (0,2) [vertex]{};
      \node (2) at (1,0) [vertex, label=below:\phantom{M}, label=below:$t_1$]{};
      \node (3) at (1,4) [vertex, label=below:$s_1$]{};
      \node (4) at (2,2) [vertex]{};
      \node (5) at (3,4) [vertex, label=below:$v_2$]{};
      \node (10) at (4,2) [vertex]{};
      \node (20) at (5,0) [vertex, label=below:\phantom{M}, label=below:$s_2$]{};
      \node (30) at (5,4) [vertex, label=below:$t_2$]{};
      \node (40) at (6,2) [vertex]{};
      \node (50) at (3,0) [vertex, label=below:$v_1$]{};
      \draw[>=latex] (1) -> (2);
      \draw[>=latex] (2) -> (4);
      \draw[>=latex] (4) -> (3);
      \draw[>=latex] (3) -> (1);
      \draw[>=latex] (1) to[bend right] (4);
      \draw[>=latex] (4) to[bend right] (1);
      \draw[>=latex] (2) to[bend right] (50);
      \draw[>=latex] (50) to[bend right] (2);
      \draw[>=latex] (3) to[bend right] (5);
      \draw[>=latex] (5) to[bend right] (3);
      \draw[>=latex] (10) -> (20);
      \draw[>=latex] (20) -> (40);
      \draw[>=latex] (40) -> (30);
      \draw[>=latex] (30) -> (10);
      \draw[>=latex] (10) to[bend right] (40);
      \draw[>=latex] (40) to[bend right] (10);
      \draw[>=latex] (20) to[bend right] (50);
      \draw[>=latex] (50) to[bend right] (20);
      \draw[>=latex] (30) to[bend right] (5);
      \draw[>=latex] (5) to[bend right] (30);
    \end{tikzpicture}
    \caption{$~$}
    \label{fig:gap_directed2}
  \end{subfigure}
  \begin{subfigure}[b]{0.5\textwidth}
  \centering
      \begin{tikzpicture}[->,scale=0.5]
        \node (1) at (2,0) [vertex]{};
        \node (2) at (0,1) [vertex]{};
        \node (t1) at (4,1) [vertex, label=below:$t_1$]{};
        \node (4) at (2,2) [vertex]{};
        \node (v1) at (6,1) [vertex, label=below:$v_1$]{};
        \node (10) at (2,4) [vertex]{};
        \node (20) at (0,5) [vertex]{};
        \node (s1) at (4,5) [vertex, label=above:$s_1$]{};
        \node (40) at (2,6) [vertex]{};
        \node (50) at (0,3) [vertex]{};
        \node (100) at (10,0) [vertex]{};
        \node (s2) at (8,1) [vertex, label=below:$s_2$]{};
        \node (300) at (12,1) [vertex]{};
        \node (400) at (10,2) [vertex]{};
        \node (500) at (12,3) [vertex]{};
        \node (1000) at (10,4) [vertex]{};
        \node (t2) at (8,5) [vertex, label=above:$t_2$]{};
        \node (3000) at (12,5) [vertex]{};
        \node (4000) at (10,6) [vertex]{};
        \node (v2) at (6,5) [vertex, label=above:$v_2$]{};
        \draw[>=latex] (1) -> (2);
        \draw[>=latex] (2) -> (4);
        \draw[>=latex] (4) -> (t1);
        \draw[>=latex] (t1) -> (1);
        \draw[>=latex] (1) to[bend right] (4);
        \draw[>=latex] (4) to[bend right] (1);
        \draw[>=latex] (2) to[bend right] (50);
        \draw[>=latex] (50) to[bend right] (2);
        \draw[>=latex] (t1) to[bend right] node[pos=.5,below] {$a$} (v1);
        \draw[>=latex] (v1) to[bend right] node[midway,above] {$1 - a$} (t1);
        \draw[>=latex] (10) -> (20);
        \draw[>=latex] (20) -> (40);
        \draw[>=latex] (40) -> (s1);
        \draw[>=latex] (s1) -> (10);
        \draw[>=latex] (10) to[bend right] (40);
        \draw[>=latex] (40) to[bend right] (10);
        \draw[>=latex] (20) to[bend right] (50);
        \draw[>=latex] (50) to[bend right] (20);
        \draw[>=latex] (s1) to[bend right] node[midway,below] {$1 - a$} (v2);
        \draw[>=latex] (v2) to[bend right] node[pos=.5,above] {$a$} (s1);
        \draw[>=latex] (100) -> (s2);
        \draw[>=latex] (s2) -> (400);
        \draw[>=latex] (400) -> (300);
        \draw[>=latex] (300) -> (100);
        \draw[>=latex] (100) to[bend right] (400);
        \draw[>=latex] (400) to[bend right] (100);
        \draw[>=latex] (s2) to[bend right] node[midway,above] {$1 - a$} (v1);
        \draw[>=latex] (v1) to[bend right] node[pos=.5,below] {$a$} (s2);
        \draw[>=latex] (300) to[bend right] (500);
        \draw[>=latex] (500) to[bend right] (300);
        \draw[>=latex] (1000) -> (t2);
        \draw[>=latex] (t2) -> (4000);
        \draw[>=latex] (4000) -> (3000);
        \draw[>=latex] (3000) -> (1000);
        \draw[>=latex] (1000) to[bend right] (4000);
        \draw[>=latex] (4000) to[bend right] (1000);
        \draw[>=latex] (t2) to[bend right] node[pos=.5,above] {$a$} (v2);
        \draw[>=latex] (v2) to[bend right] node[midway,below] {$1 - a$} (t2);
        \draw[>=latex] (3000) to[bend right] (500);
        \draw[>=latex] (500) to[bend right] (3000);
      \end{tikzpicture}
      \caption{~}
      \label{appfig:gap_join}
      \end{subfigure}
    \caption{
        \label{fig:directed_intgps}
        Example how to join two instances.
        }
  \end{figure}
  Let $a := x^*_{(v,s)}$.
  Then $x^*_{(t,v)} = a$ and $x^*_{(s,v)} = x^*_{(v,t)} = 1-a$, again due to the equality constraints.

  We set
  \begin{align*}
    x'_{(v_1,s_2)}, x'_{(v_2,s_1)}& := a, & x'_{(s_1,v_2)}, x'_{(s_2,v_1)}& := 1 - a,\\
    x'_{(s_1,v_1)}, x'_{(v_1,s_1)}& := 0, & x'_{(s_2,v_2)}, x'_{(v_2,s_2)}& := 0 \enspace .
  \end{align*}
  For all remaining arcs of $G_1$ and $G_2$ we set~$x'$ according to $x^*$.
  For all arcs not yet considered, we set~$x'$ to zero.
  Clearly, $\cost(x') = 2 \cost(x^*)$, since for each arc of $G_1$ and $G_2$ there is a corresponding arc with the same LP value and the same cost in $G'$.

  Let us now verify that $x'$ is a feasible solution to $\SER(G)$.
  For the equality constraints it suffices to inspect $v_1$, $v_2$, $s_1$, $s_2$, $t_1$, $t_2$, because the arcs incident to all other vertices did not change.
  Since we simply replaced arcs by others with both identical cost and LP value, all equality constraints are satisfied.

  Similarly, for all $S \subseteq V_1$ and $S \subseteq V_2$, the subtour elimination constraints are satisfied, where we use that any cut containing $(s_1,v_1)$ also contains $(s_1,v_2)$; for all other changed arcs, the situation is analogous.
  If either both $v_1,v_2$ or none of $v_1,v_2$ belongs to $S$, then consider the graph where $v_1$ and~$v_2$ are identified to a single vertex, keeping parallel arcs.
  We have $x'(\delta^-_{G'}(S)) \ge x'(\delta^-_{G'}(S) \cap E(G_1)) \ge x'(\delta^-_{G_1}(S \cap V(G_1))) = x^*(\delta^-_{G_1}(S \cap V(G_1))) \ge 1$, and thus we are done.

  Finally, let us fix a set $S$ not yet considered. 
  Since $\delta^-_{G'}(S) = \delta^-_{G'}(V(G') \setminus S)$, we may assume that $v_1 \in S$ and $v_2 \notin S$.

  We have $x^*(\delta^-_{G_2}(S \cap V(G_2))) \ge 1$, and $(v_2,s_2)$ is the only arc in $\delta^-_{G_2}(S \cap V(G_2)) \cap E(G_2)$ whose value in~$G'$ and $x'$ could have changed.
  Therefore, either $(s_2,v_1) \in \delta^-_{G'}(S)$ or $x'(\delta^-_{G'}(S \cap V(G_2))\cap E(G_2)) \ge 1-a$.

  Analogously, we have $x^*(\delta^-_{G_1}(S \cap V(G_1))) \ge 1$, and $(s_1,v_1)$ is the only arc in $\delta^-_{G'}(S \cap V(G_1)) \cap E(G_1)$ whose value in $G'$ and $x'$ could have changed.
  Therefore, either $(v_2,s_1) \in \delta^-_{G'}(S)$ or $x'(\delta^-_{G'}(S \cap V(G_1))\cap E(G_1)) \ge a$.

  Thus,
  \begin{equation*}
    \begin{split}
      x'(\delta^-_{G'}(S)) & \ge  \min\{x'(\delta^-_{G'}(S \cap V(G_2))\cap E(G_2)), x'(\delta^-_{G'}(S) \cap (s_2,v_1))\}\\
                           &\qquad + \min\{x'(\delta^-_{G'}(S \cap V(G_1))\cap E(G_1)), x'(\delta^-_{G'}(S) \cap (v_2,s_1))\} \\
                           & = 1 - a + a = 1,
    \end{split}
  \end{equation*}
  where $x'(\emptyset)=0$.
  Therefore, $x'$ is a feasible solution to $\SER(G')$.
  
  We now show that $\Opt(G') = 2 \Opt(G)$.
  To see that $\Opt(G') \le 2 \Opt(G)$, consider a tour~$C$ in $G$. 
  If neither $(v,s)$ nor $(s,v)$ is in $E(C)$, the cost of copies $C_1,C_2$ of the tours in $G_1$ and~$G_2$ does not change in $G'$, and we can combine both tours $C_1,C_2$ to a single tour by removing two arcs of cost two and introducing two arcs of cost at most two.
  Otherwise, by symmetry we may assume that $(s,v) \in E(C)$. 
  Then in $G'$ we replace $(s_1,v_1)$ of $C_1$ by $(s_1,v_2)$ and $(s_2,v_2)$ of $C_2$ by $(s_2,v_1)$ to obtain a new tour, whose cost again is at most $2\cost(C)$.
  This shows that $\Opt(G') \le 2 \Opt(G)$.

  We continue to prove that $\Opt(G') \ge 2 \Opt(G)$.
  Assume, for the sake of contradiction, that there is a tour $C$ of cost $\cost(C) < 2\Opt(G)$.
  Let $k$ be the number of cost two arcs in $C$. 
  We remove these $k$ arcs of cost two from $C$, and obtain a collection $C'$ of paths (unless $k=0$).

  \newcommand{\rev}[1]{\ensuremath{\mathrm{rev}(#1)}\xspace}
  For any subgraph $H$ of $G'$, we define a reverse construction \rev{H} from~$H$ as follows:
  \begin{itemize}
    \item if $(s_1,v_2) \in E(H)$, remove $(s_1,v_2)$ and add $(s_1,v_1)$;
    \item if $(v_2,s_1) \in E(H)$, remove $(v_2,s_1)$ and add $(v_1,s_1)$;
    \item if $(s_2,v_1) \in E(H)$, remove $(s_2,v_1)$ and add $(s_2,v_2)$;
    \item if $(v_1,s_2) \in E(H)$, remove $(v_1,s_2)$ and add $(v_2,s_2)$.
  \end{itemize}
  The arc costs in \rev{H} are reversed accordingly.
  Then $\rev{C'}$ is a modified collection of paths and cycles without paths crossing between $G_1$ and $G_2$.
  For $i = 1,2$, let $C'_i$ be the restriction of \rev{C'} to $G_i$.

  Let us first assume that
  \begin{align*}
  |\{(s_1,v_2),(t_1,v_1)\} \cap E(C')| < 2, &\qquad |\{(v_2,s_1),(v_1,t_1)\} \cap E(C')| < 2,\\
  |\{(v_2,t_2),(v_1,s_2)\} \cap E(C')| < 2, &\qquad |\{(t_2,v_2),(s_2,v_1)\} \cap E(C')| < 2.
  \end{align*}
  Intuitively, this means that if there are two paths crossing between $G_1$ and $G_2$ via~$v_1$ and $v_2$, they have opposite directions.

  Then both the in-degree and out-degree of each vertex in \rev{C'} is at most one.
  Since we did not change the number of arcs, there are still~$k$ paths and either zero, one or two cycles. 
  By introducing~$k$ arcs of cost two, we obtain a tour~$C_1$ in~$G_1$ and a tour $C_2$ in $G_2$ such that $\cost(C_1) + \cost(C_2) \le \Opt(G')$.
  Therefore, $\min\{\cost(C_1),\cost(C_2)\} < \Opt(G)$, a contradiction.

  Otherwise, by symmetry, it is sufficient to consider the case $|\{(v_2,s_1),(v_1,t_1)\} \cap E(C')| = 2$ such that $|E(C'_1)| \ge |E(C'_2)|$. 
  Then in $C'_1$ there are two paths $P'_1$ and~$P'_2$ starting from $v_1$ and $v_2$ and all vertices have in-degrees and out-degrees of at most one. 
  By adding at most $k/2-1$ arcs to~$C'_1$, we can extend~$P'_1$ and $P'_2$ to~$P_1$ and $P_2$ such that in $\rev{G'}$ it holds $V(P_1) \cap V(P_2) = \{v_1\}$, $V(P_1) \cup V(P_2) = V(G_1)$, and both $P_1$ and $P_2$ start in $v_1$. 

  However, since there are exactly as many arcs crossings from $G_2$ to $G_1$ as from $G_1$ to $G_2$, in~$G'$ there are at least two arcs of cost two from $G_1$ to $G_2$.
  Therefore, $k \ge 2$ and $E(P_1) \cup E(P_2)$ contains at most $k/2-1$ arcs of cost two.
  By basic graph theory, $|E(P_1) \cup E(P_2)| = |V(G_1)|-1$, and therefore $\cost(P_1) + \cost(P_2) \le |V(G_1)|-1 + k/2 - 1 \le \Opt(G')/2 - 2 < \Opt(G) - 2$, contradicting the assumptions of the lemma.

  To finish the proof, we have to show that also $G'$ satisfies the conditions of the lemma.
  Again, we show the claim by means of contradiction. 
  Suppose there are two paths $P_1$, $P_2$ in~$G'$ such that $V(P_1) \cap V(P_2) = \{v_1\}$, $V(P_1) \cup V(P_2) = V(G')$ with $\cost(P_1) + \cost(P_2) < \Opt(G')-2$.
  
  Consider the case when both $P_1$ and $P_2$ start in $v_1$.
  Let $k$ be the number of arcs of cost two in $E(P_1) \cup E(P_2)$.
  We remove all $k$ arcs of cost two and obtain a collection $C$ of $k+1$ paths (where the component containing $v_1$ is counted as one path).
  Note that the paths in $C$ can only pass between $G_1$ and~$G_2$ via $v_1$ and $v_2$.
  Therefore, we can use $k$ arcs to connect the paths of $C$ such that we obtain two paths $P'_1$ and $P'_2$ where $V(P'_1) \cap V(P'_2) = \{v_1\}$, $V(P'_1) \cup V(P'_2) = V(G')$, both $P'_1,P'_2$ start in $v_1$, $\cost(P'_1) + \cost(P'_2) < \Opt(G')-2$, and additionally there are no cost two arcs between $G_1$ and $G_2$.
  Note that one of the paths, say $P'_1$, has to contain $v_2$.
  Thus, $\rev{P'_1}$ contains a tour in either $G_1$ or in $G_2$.
  By renaming, let us assume that the tour is in $G_1$.
  Then $\rev{P'_2}$ and the part of $\rev{P'_1}$ in $G_2$ form two paths $P''_1, P''_2$ in $G_2$ such that $V(P''_1) \cap V(P''_2) = \{v_2\}$, $V(P''_1) \cup V(P''_2) = V(G_2)$, both $P''_1,P''_2$ start in~$v_2$ or both $P''_1,P''_2$ end in $v_2$, and by assumption of the lemma, $\cost(P''_1) + \cost(P''_2) \ge \Opt(G)-2$.
  We obtain $\cost(P_1) + \cost(P_2) \ge \cost(P'_1) + \cost(P'_2) \ge \Opt{G} + \Opt{G} - 2 = \Opt{G'}-2$, a contradiction.

  The remaining cases where $P_1$ and $P_2$ start in $v_2$, or end in $v_1$, or end in $v_2$, are dealt with analogously.
\end{proof}

\section{Intractability of Local Search for \STSP}
\label{sec:intractabilityoflocalsearch}
In this section we show that searching the $k$-edge change neighborhood is $\mathsf{W}[1]$-hard for \STSP, which means that finding the best tour in the $k$-edge change neighborhood essentially requires complete search.
Our result strengthens the result by D{\'a}niel Marx~\cite{Marx2008} that local search for metric TSP with three distances is $\mathsf{W}[1]$-hard parameterized by the edge change distance.
Compared to his proof, we only have two distances, one and two, and we make some small simplifications.
We choose to follow the lines of Marx so that the reader can easily make comparisons.

The local search problem for \STSP is, given a tour $\mathcal T$, to find the best tour in the $k$-edge change neighborhood of $\mathcal T$, which are those tours that can be reached from the current tour by replacing at most~$k$ edges.
On the one hand, if $k$ is part of the input, then this problem is $\mathsf{NP}$-hard, as for $k = n$, the problem is equivalent to finding the best possible tour.
On the other hand, the problem is polynomial-time solvable for every fixed value of $k$, in $n^{O(k)}$ time by complete search.

The main result of this section is that finding the best tour in the $k$-edge change neighborhood is $\mathsf{W}[1]$-hard, which implies that the problem is not fixed-parameter tractable, unless $\mathsf{W}[1]=\mathsf{FPT}$ (which would imply subexponential time algorithms for many canonical $\mathsf{NP}$-complete problems.

We consider a tour as a set of $n$ ordered pairs of cities.
If $X$ and $Y$ are two tours on the same set of cities, then the \emph{distance} of $X$ and $Y$ is $|X\setminus Y| = |Y \setminus X|$.
Formally, we study the parameterized complexity of the following problem:\\

\framebox[0.95\textwidth]{
   \begin{tabular}{rl}
     \multicolumn{2}{l}{{\sc $k$-Edge Change \STSP}}\\
     Input: & A set $V$ of $n$ cities, a symmetric distance matrix over all city\\
            & pairs with distances 1 and 2, a tour $C$, and an integer $k\in\mathbb N$.\\
     Question: & Is there a tour $C'$ with length less than that of $C$, and with\\
               & distance at most $k$ from $C$?
   \end{tabular}
}\\

\begin{theorem}
\label{thm:kchange12tspisw1hard}
  Given an instance of \STSP and a Hamiltonian cycle $C$ of it, it is $\mathsf{W}[1]$-hard to decide whether there is a Hamilton cycle $C'$ such
  that the cost of $C'$ is strictly less than the cost of $C$ and the distance between $C$ and $C'$ is at most $k$.
\end{theorem}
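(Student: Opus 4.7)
The plan is to give a parameterized reduction from \textsc{Multicolored Clique}, which asks whether a graph $G$ whose vertices are partitioned into $k$ color classes $V_1,\dots,V_k$ contains a clique using exactly one vertex from each class; this problem is $\mathsf{W}[1]$-hard when parameterized by $k$. Given such an instance we build a \STSP instance on a city set $U$ of size polynomial in $|V(G)|$, a symmetric cost function with values in $\{1,2\}$, a reference tour $C$, and a parameter $k' = f(k) = O(k^2)$, so that $C$ admits an improving tour within edge-change distance $k'$ if and only if $G$ has a multicolored $k$-clique.

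For each color class $V_i$ I would introduce a \emph{selection block} $B_i$ consisting of a long sequence of cities, together with one city $\sigma_{i,v}$ for every $v \in V_i$; the reference tour $C$ traverses $B_i$ in a canonical cyclic order that forces one unit of ``extra'' cost two (a wasted edge) inside the block, while any alternative re-routing through $B_i$ that bypasses exactly one edge and routes via some $\sigma_{i,v}$ saves one unit of cost---this re-routing encodes the choice of $v$ as the representative of color $i$. For each pair $\{i,j\}$ I would introduce an \emph{edge-verification gadget} $G_{ij}$ that contains, for each edge $\{u,v\}\in E(G)$ with $u\in V_i, v\in V_j$, a pair of auxiliary cities linked to both $\sigma_{i,u}$ and $\sigma_{j,v}$ by cost-one edges, and linked to everything else by cost-two edges. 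The reference tour $C$ walks through $G_{ij}$ using a short default sub-path of cost two that costs one unit more than an ``alternative'' sub-path consistent with an existing edge of $G$; the alternative saves one unit, but only if the two $\sigma$-selections made in $B_i$ and $B_j$ correspond to adjacent vertices in $G$. All other city-pairs receive cost two so that no uncontrolled shortcuts exist.

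The accounting will then be: any re-routing uses at least a constant number of edge swaps per selection block and per verification gadget, so an improving tour of edge-distance at most $k' = c_1 k + c_2\binom{k}{2}$ from $C$ can afford to change only the selection blocks and verification gadgets, and it saves one unit per block actually re-routed plus one unit per verification gadget actually shortened. Setting costs so that the total number of one-units saved must exceed the one-unit penalties paid for local detours forces (i) every selection block to be re-routed, picking exactly one $v_i \in V_i$, and (ii) every verification gadget $G_{ij}$ to be shortened, which is possible only when $\{v_i,v_j\}\in E(G)$. Thus an improving tour within $k'$ edge changes exists iff the selections yield a multicolored clique.

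The main obstacle, as anticipated in Marx's original argument, is the \emph{rigidity} of the reduction: with only two distances available we cannot forbid ``illegal'' swaps by making them very expensive, so I must design the gadgets so that every local deviation that changes the reference tour costs at least one unit while saving strictly less. The selection blocks are therefore built long enough that any partial re-routing leaves at least one forced cost-two edge, and the verification gadgets are built so that the cheap alternative sub-path physically requires passing through the correct two $\sigma$-cities; otherwise the path has to close through a cost-two edge. Ensuring these rigidity properties simultaneously, and counting edge changes carefully enough to rule out cheating by re-using the same swap across gadgets, is the crux of the argument and where the simplifications relative to Marx's three-distance proof have to be introduced.
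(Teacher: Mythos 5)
Your high-level plan---reduce from a clique problem, build selection gadgets per color/vertex and verification gadgets per pair, and size the edge-change budget at $O(t^2)$---does match the paper's strategy, but the key mechanism is different from the paper's and as sketched it contains a gap that would make the reduction fail. The paper (following Marx) uses a \emph{switch gadget} with the crucial rigidity property that any Hamiltonian cycle through the ambient graph must traverse it in exactly one of two ways (``upper gates'' $\alpha\to\beta$ or ``lower gates'' $\gamma\to\delta$), and it places exactly \emph{one} cost-two edge in the reference tour, namely from $v_{\textnormal{last}}$ back to $v_{\textnormal{first}}$. Saving that single unit is the only possible improvement, and it is accessible only by threading a coherent $\gamma$--$\delta$ chain through alternating vertex and edge gadgets, whose wiring forces the chain to encode a $t$-clique. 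Your sketch instead places a wasted cost-two edge inside \emph{each} selection block and a cost-two default sub-path inside \emph{each} verification gadget, claiming each can be ``re-routed to save one unit.'' But then an improving tour can simply re-route a single block---well within the budget $k' = c_1 k + c_2\binom{k}{2}$---and the answer is always ``yes'' regardless of whether a clique exists. Your closing paragraph acknowledges the tension (``the total number of one-units saved must exceed the one-unit penalties''), but this contradicts the earlier description of each block being locally improvable; you never pin down a concrete gadget in which every local deviation is net zero or worse while only a global clique-encoding deviation is strictly improving. That is precisely what the switch gadget achieves and what is missing here.

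Two further points worth aligning with the paper: (i) it reduces from plain $t$-\textsc{Clique} rather than \textsc{Multicolored Clique} (both work, so this is cosmetic, but the paper avoids the overhead of color classes by using an Eulerian-tour ordering of the $\binom{t}{2}$ pairs to partition them into $t$ vertex segments of $(t-1)/2$ gadgets each, assuming $t$ odd); and (ii) the paper's rigidity argument does \emph{not} rely on penalties for ``illegal'' moves at all---it relies on the structural fact that switch gadgets admit only two Hamiltonian traversals and that the only cost-two edge in $G$ leaving $v_{\textnormal{last}}$ that the new tour can use is one of the $\gamma$-entries. To repair your proposal you would need to (a) give an explicit gadget with a proof that any Hamiltonian cycle traverses it in one of two canonical modes, and (b) restructure the reference tour so that it contains a single removable cost-two edge whose removal requires routing through all gadgets consistently. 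As it stands, the independent per-gadget savings make the reduction unsound.
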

The proof is by reduction from {\sc $t$-Clique}: Given an undirected unweighted graph $H$ where we have to find a clique of size $t$, we construct an equivalent instance of {\sc $k$-Edge Change \STSP} on an undirected graph $G$.
Assume, without loss of generality, that $t$ is odd, for otherwise we add a new vertex that is adjacent to all vertices of $H$ and find a clique of size~$t + 1$.

\subsection{The Switch Gadget}
The graph $G$ is built from several copies of the switch gadget shown in Fig.~\ref{fig:switchgadget_a}.
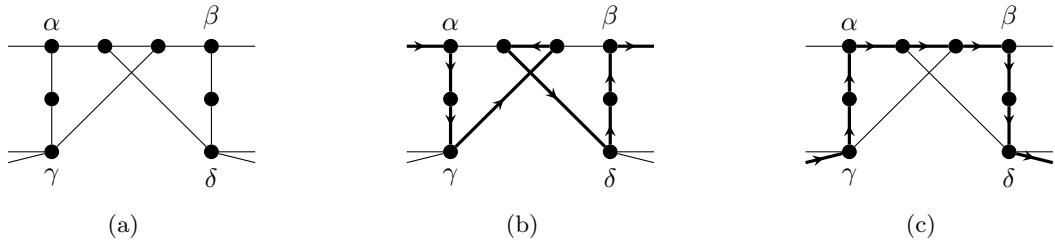
\begin{figure}[htpb]
  \centering
  \begin{subfigure}[b]{0.3\textwidth}
    \centering
    \begin{tikzpicture}[scale=0.7]
      \node (1) at (0,1){};
      \node (alpha) at (1,1) [vertex,label=above:$\alpha$]{};
      \node (2) at (2,1) [vertex]{};
      \node (3) at (3,1) [vertex]{};
      \node (beta) at (4,1) [vertex,label=above:$\beta$]{};
      \node (4) at (5,1){};
      \node (5) at (1,0)[vertex]{};
      \node (6) at (4,0)[vertex]{};
      \node (7) at (0,-1){};
      \node (8) at (0,-1.25){};
      \node (gamma) at (1,-1)[vertex,label=below:$\gamma$]{};
      \node (delta) at (4,-1)[vertex,label=below:$\delta$]{};
      \node (9) at (5,-1)[]{};
      \node (10) at (5,-1.25)[]{};
      \draw[](1)--(alpha);
      \draw[](alpha)--(2);
      \draw[](2)--(3);
      \draw[](3)--(beta);
      \draw[](beta)--(4);
      \draw[](alpha)--(5);
      \draw[](beta)--(6);
      \draw[](5)--(gamma);
      \draw[](6)--(delta);
      \draw[](7)--(gamma);
      \draw[](8)--(gamma);
      \draw[](gamma)--(3);
      \draw[](2)--(delta);
      \draw[](delta)--(9);
      \draw[](delta)--(10);
    \end{tikzpicture}
    \caption{$~$}
    \label{fig:switchgadget_a}
  \end{subfigure}
  \quad
  \begin{subfigure}[b]{0.3\textwidth}
    \centering
    \begin{tikzpicture}[scale=0.7]
      \node (1) at (0,1){};
      \node (alpha) at (1,1) [vertex,label=above:$\alpha$]{};
      \node (2) at (2,1) [vertex]{};
      \node (3) at (3,1) [vertex]{};
      \node (beta) at (4,1) [vertex,label=above:$\beta$]{};
      \node (4) at (5,1){};
      \node (5) at (1,0)[vertex]{};
      \node (6) at (4,0)[vertex]{};
      \node (7) at (0,-1){};
      \node (8) at (0,-1.25){};
      \node (gamma) at (1,-1)[vertex,label=below:$\gamma$]{};
      \node (delta) at (4,-1)[vertex,label=below:$\delta$]{};
      \node (9) at (5,-1)[]{};
      \node (10) at (5,-1.25)[]{};
      \draw[postaction={decorate},very thick](1)--(alpha);
      \draw[](alpha)--(2);
      \draw[postaction={decorate},very thick](3)--(2);
      \draw[](3)--(beta);
      \draw[postaction={decorate},very thick](beta)--(4);
      \draw[postaction={decorate},very thick](alpha)--(5);
      \draw[postaction={decorate},very thick](6)--(beta);
      \draw[postaction={decorate},very thick](5)--(gamma);
      \draw[postaction={decorate},very thick](delta)--(6);
      \draw[](7)--(gamma);
      \draw[](8)--(gamma);
      \draw[postaction={decorate},very thick](gamma)--(3);
      \draw[postaction={decorate},very thick](2)--(delta);
      \draw[](delta)--(9);
      \draw[](delta)--(10);
    \end{tikzpicture}
    \caption{$~$}
    \label{fig:switchgadget_b}
  \end{subfigure}
  \quad
  \begin{subfigure}[b]{0.3\textwidth}
    \centering
    \begin{tikzpicture}[scale=0.7]
      \node (1) at (0,1){};
      \node (alpha) at (1,1) [vertex,label=above:$\alpha$]{};
      \node (2) at (2,1) [vertex]{};
      \node (3) at (3,1) [vertex]{};
      \node (beta) at (4,1) [vertex,label=above:$\beta$]{};
      \node (4) at (5,1){};
      \node (5) at (1,0)[vertex]{};
      \node (6) at (4,0)[vertex]{};
      \node (7) at (0,-1){};
      \node (8) at (0,-1.25){};
      \node (gamma) at (1,-1)[vertex,label=below:$\gamma$]{};
      \node (delta) at (4,-1)[vertex,label=below:$\delta$]{};
      \node (9) at (5,-1)[]{};
      \node (10) at (5,-1.25)[]{};
      \draw[](1)--(alpha);
      \draw[postaction={decorate},very thick](alpha)--(2);
      \draw[postaction={decorate},very thick](2)--(3);
      \draw[postaction={decorate},very thick](3)--(beta);
      \draw[](beta)--(4);
      \draw[postaction={decorate},very thick](5)--(alpha);
      \draw[postaction={decorate},very thick](beta)--(6);
      \draw[postaction={decorate},very thick](gamma)--(5);
      \draw[postaction={decorate},very thick](6)--(delta);
      \draw[](7)--(gamma);
      \draw[postaction={decorate},very thick](8)--(gamma);
      \draw[](gamma)--(3);
      \draw[](2)--(delta);
      \draw[](delta)--(9);
      \draw[postaction={decorate},very thick](delta)--(10);
    \end{tikzpicture}
    \caption{$~$}
    \label{fig:switchgadget_c}
  \end{subfigure}
  \caption{The switch gadget.}
\label{fig:switchgadget}
\end{figure}
This gadget was first introduced by Engebretsen and Karpinski~\cite{EngebretsenKarpinski2006}.
The gadget is connected to the rest of the graph at the vertices $\alpha, \beta, \gamma, \delta$.
It is easy to see that if a switch gadget is part of a larger graph and there is a Hamiltonian cycle in the larger graph, then this cycle traverses the gadget in one of the two ways presented in Fig.~\ref{fig:switchgadget_b} and~\ref{fig:switchgadget_c}.
Either the cycle enters at $\alpha$ and leaves at $\beta$ in as~\ref{fig:switchgadget_b}; or it enters at $\gamma$ and leaves at $\delta$ as in~\ref{fig:switchgadget_c}, or vice-versa.
We remark that the whole construction is undirected; the orientations only describe the order in which vertices are traversed by the tours.
The parameters of the reduction will be set in such a way that a Hamiltonian cycle $C'$ has to traverse a selection of switch gadget in way~\ref{fig:switchgadget_c} in order to prevent that the total cost of the cycle will be too large.
The gadget effectively acts as a switch: either it is used as an $\alpha\rightarrow\beta$-path, or as a $\gamma\rightarrow\delta$-path.
In the first case, we say that the cycle uses the \emph{upper gates} of the gadget, in the second case we say that the cycle uses the \emph{lower gates}.

\subsection{Vertex and Edge Segments}
Let $n$ be number of vertices in $H$, and let $m$ be its number of edges.
For convenience, we identify the vertices with the integers $\{1, \dotsc, n\}$ and the edges with the integers $\{1, \hdots,m\}$.
We construct an undirected graph $G$ that consists of $(n + m)t(t - 1)/2$ copies of the switch gadget and some additional vertices.
The graph $G$ represents an instance of \STSP where all edges of $G$ have cost $1$ and all edges not in $G$ have cost $2$.

We have the following switch gadgets:
\begin{itemize}
  \item \emph{vertex gadget} $V_{i,\{j_1,j_2\}}$ for each $i\in \{1,\hdots,n\}$ and $\{j_1,j_2\}\in {\{1,\hdots,t\} \choose 2}$ and
  \item \emph{edge gadget} $S_{i,\{j_1,j_2\}}$ for each $i\in \{1,\hdots,m\}$ and $\{j_1,j_2\}\in{\{1,\hdots,t\} \choose 2}$.
\end{itemize}
We form \emph{segments} from one or more gadgets.

The \emph{vertex segment} $V_{i,j}$, for $i \in \{1,\hdots,n\}$ and $j\in \{1,\hdots,t\}$, consists of the \emph{entrance vertex} $a_{i,j}$, the \emph{exit vertex}~$b_{i,j}$, and $(t - 1)/2$ gadgets $V_{i,\{j,j'\}}$.

Consider the pairs $\{j_1, j_2\}\in {\{1,\hdots,t\} \choose 2}$ and let $P_1,\dotsc, P_{t(t-1)/2}$ be an ordering of these pairs such that the second element of a pair is the same as the first element of the next pair, that is, $P_{\ell} = \{p_{\ell},p_{\ell+1}\}$ for every $\ell\in {\{1,\hdots,t\}\choose 2}$.
Such an ordering exists: consider for instance the sequence of edges of an Eulerian tour in $K_t$, the complete graph of order $t$.
(Since $t$ is odd, $K_t$ is an Eulerian graph.)
These pairs define a partition of all ${t \choose 2}$ pairs into $t$ classes $R_1,R_2,\dotsc,R_t$ of $(t-1)/2$ vertices each, where $R_j = \{\{p_{j'},p_{j'+1}\}~|~p_{j'} = j\}$ for $j \in \{1,\hdots,t\}$.
The vertex segment $V_{i,j}$ contains all vertex gadgets of pairs in $R_j$; see Fig.~\ref{fig:thevertexsegment}.
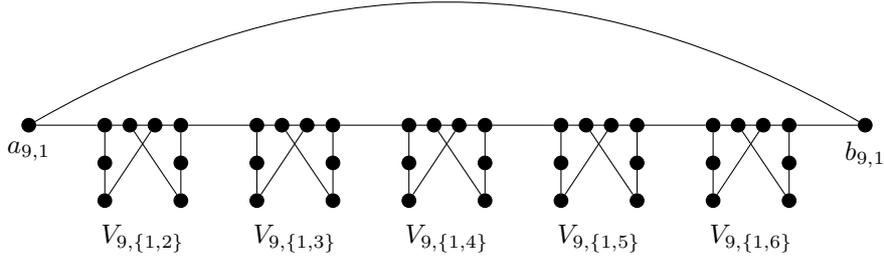
\begin{figure}[htpb]
  \centering
  \begin{tikzpicture}
    \node (1a) at (1,1) [vertex]{};
    \node (2a) at (2,1) [vertex]{};
    \node (3a) at (2,0) [vertex]{};
    \node (4a) at (1,0) [vertex]{};
    \node (5a) at (1,0.5)[vertex]{};
    \node (6a) at (2,0.5)[vertex]{};
    \node (7a) at (1.33,1)[vertex]{};
    \node (8a) at (1.66,1)[vertex]{};
    \draw(1a)--(2a);
    \draw(2a)--(3a);
    \draw(4a)--(1a);
    \draw(3a)--(7a);
    \draw(4a)--(8a);
    \node (1b) at (3,1) [vertex]{};
    \node (2b) at (4,1) [vertex]{};
    \node (3b) at (4,0) [vertex]{};
    \node (4b) at (3,0) [vertex]{};
    \node (5b) at (3,0.5)[vertex]{};
    \node (6b) at (4,0.5)[vertex]{};
    \node (7b) at (3.33,1)[vertex]{};
    \node (8b) at (3.66,1)[vertex]{};
    \draw(1b)--(2b);
    \draw(2b)--(3b);
    \draw(4b)--(1b);
    \draw(3b)--(7b);
    \draw(4b)--(8b);
    \draw[](2a)--(1b);
    \node (1c) at (5,1) [vertex]{};
    \node (2c) at (6,1) [vertex]{};
    \node (3c) at (6,0) [vertex]{};
    \node (4c) at (5,0) [vertex]{};
    \node (5c) at (5,0.5)[vertex]{};
    \node (6c) at (6,0.5)[vertex]{};
    \node (7c) at (5.33,1)[vertex]{};
    \node (8c) at (5.66,1)[vertex]{};
    \draw(1c)--(2c);
    \draw(2c)--(3c);
    \draw(4c)--(1c);
    \draw(3c)--(7c);
    \draw(4c)--(8c);
    \draw[](2b)--(1c);
    \node (1d) at (7,1) [vertex]{};
    \node (2d) at (8,1) [vertex]{};
    \node (3d) at (8,0) [vertex]{};
    \node (4d) at (7,0) [vertex]{};
    \node (5d) at (7,0.5)[vertex]{};
    \node (6d) at (8,0.5)[vertex]{};
    \node (7d) at (7.33,1)[vertex]{};
    \node (8d) at (7.66,1)[vertex]{};
    \draw(1d)--(2d);
    \draw(2d)--(3d);
    \draw(4d)--(1d);
    \draw(3d)--(7d);
    \draw(4d)--(8d);
    \draw[](2c)--(1d);
    \node (1e) at (9,1) [vertex]{};
    \node (2e) at (10,1) [vertex]{};
    \node (3e) at (10,0) [vertex]{};
    \node (4e) at (9,0) [vertex]{};
    \node (5e) at (9,0.5)[vertex]{};
    \node (6e) at (10,0.5)[vertex]{};
    \node (7e) at (9.33,1)[vertex]{};
    \node (8e) at (9.66,1)[vertex]{};
    \draw(1e)--(2e);
    \draw(2e)--(3e);
    \draw(4e)--(1e);
    \draw(3e)--(7e);
    \draw(4e)--(8e);
    \draw[](2d)--(1e);
    \node (exit) at (11,1) [vertex,label=below:$b_{9,1}$]{};
    \node (entry) at (0,1) [vertex,label=below:$a_{9,1}$]{};
    \draw[](entry)--(1a);
    \draw[](2e)--(exit);
    \node at (1.5,-0.5) {$V_{9,\{1,2\}}$};
    \node at (3.5,-0.5) {$V_{9,\{1,3\}}$};
    \node at (5.5,-0.5) {$V_{9,\{1,4\}}$};
    \node at (7.5,-0.5) {$V_{9,\{1,5\}}$};
    \node at (9.5,-0.5) {$V_{9,\{1,6\}}$};
    \draw [] (entry) edge[bend left] (exit);
  \end{tikzpicture}
  \caption{The vertex segment $V_{9,1}$ for $t = 11$ and one possible $R_1$.}
\label{fig:thevertexsegment}
\end{figure}

To simplify the notation, let $W_1,\dotsc,W_{(t-1)/2}$ be an arbitrary ordering of these gadgets.
For every $\ell\in\{1,\dotsc,(t-1)/2-1\}$, there is an edge from vertex $\beta$ of~$W_{\ell}$ to vertex $\alpha$ of $W_{\ell + 1}$.
Furthermore, there is an edge from $a_{i,j}$ to vertex $\alpha$ of gadget $W_1$, and there is an edge from vertex $\beta$ of gadget~$W_{t-1}$ to $b_{i,j}$.
Finally, there is a \emph{bypass edge} from $a_{i,j}$ to $b_{i,j}$.

The \emph{edge segment} $E_{i,\{j_1,j_2\}}$, for $i\in\{1,\hdots,m\}$ and $j_1,j_2\in\{1,\dotsc,t\}$ with $j_1\not=j_2$ contains an \emph{entrance vertex} $z_{i,\{j_1,j_2\}}$, an \emph{exit vertex} $q_{i,\{j_1,j_2\}}$, and the gadget $S_{i,\{j_1,j_2\}}$, see Fig.~\ref{fig:theedgesegment}.
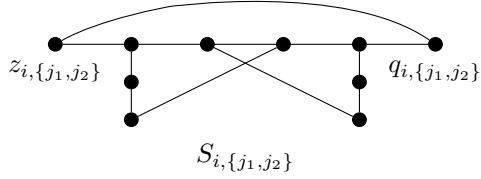
\begin{figure}[htpb]
  \centering
  \begin{tikzpicture}
      \node (1f) at (9,1) [vertex]{};
      \node (2f) at (12,1) [vertex]{};
      \node (3f) at (12,0) [vertex]{};
      \node (4f) at (9,0) [vertex]{};
      \node (h1) at (10,1)[vertex]{};
      \node (h2) at (11,1)[vertex]{};
      \node (g1) at (9,0.5)[vertex]{};
      \node (g4) at (12,0.5)[vertex]{};
      \draw(1f)--(2f);
      \draw(2f)--(3f);
      \draw(4f)--(1f);
      \draw(3f)--(h1);
      \draw(4f)--(h2);
      \node (exit) at (13,1) [vertex,label=below:$q_{i,\{j_1,j_2\}}$]{};
      \node (entry) at (8,1) [vertex,label=below:$z_{i,\{j_1,j_2\}}$]{};
      \draw[](entry)--(1f);
      \draw[](2f)--(exit);
      \node at (10.5,-0.5) {$S_{i,\{j_1,j_2\}}$};
      \draw [] (entry) to [out=+30,in=-5,in looseness=-1.5] +(1.5,+0.5) to [out=+5,in=145,out looseness=-1.5](exit);
  \end{tikzpicture}
  \caption{The edge segment $E_{i,\{j_1,j_2\}}$.}
\label{fig:theedgesegment}
\end{figure}
There is an edge from $z_{i,\{j_1,j_2\}}$ to vertex $\alpha$ of $S_{i,\{j_1,j_2\}}$ and an edge from vertex $\beta$ of $S_{i,\{j_1,j_2\}}$ to
$q_{i,\{j_1,j_2\}}$.
Additionally, there is a \emph{bypass edge} from $z_{i,\{j_1,j_2\}}$ to $q_{i,\{j_1,j_2\}}$.

Consider an arbitrary ordering of the $nt+mt(t-1)/2$ segments defined above.
Add an edge from the exit of each segment to the entrance of the next segment.
Note that in the \STSP instance defined by $G$, there is an edge of cost 2 that goes from the exit of the last segment (denote it by $v_{\textnormal{last}}$) to the entrance of the first segment (denote it by $v_{\textnormal{first}}$).
There will be some more edges in the graph~$G$, but before completing the description of $G$, we first define the Hamiltonian cycle~$C$.
The cycle starts at $v_{\textnormal{first}}$, goes through the upper gate of the gadget(s) in the first segment, leaves the segment at the exit, enters the second segment at its entrance, etc.
The cycle does not use the bypass edges, thus it visits every vertex of every gadget.
Finally, when~$C$ reaches the exit of the last segment ($v_{\textnormal{last}}$), it goes back to the entrance of the first segment~($v_{\textnormal{first}}$) using the edge of cost $2$.
The cycle traverses exactly 8 vertices in each switch gadget and additionally two vertices of each segment.
Therefore, the total number of vertices in~$G$ is
\newcommand{\orderG}{\ensuremath{4 n t^2 - 2 n t + 5 m t (t-1)}\xspace}
$nt(8(t-1)/2 + 2) + (8 + 2) m t(t-1)/2 = \orderG$, and the total cost of~$C$ is $\orderG + 1$.

\subsection{Encoding the Graph}
As discussed above, we will ensure that the cycle $C'$ traverses every gadget as either ~\ref{fig:switchgadget_b} or~\ref{fig:switchgadget_c} in Fig.~\ref{fig:switchgadget}.
In the latter case, we say that the gadget is \emph{active}.
We will show that the active gadgets describe a $t$-clique of graph $H$.
If gadget~$S_{i,\{j_1,j_2\}}$ in edge segment $E_{i,\{j_1,j_2\}}$ is active, then this means that the edge $i$ is the edge connecting the $j_1$-th and $j_2$-th vertices of the clique.
If gadget~$V_{i,\{j_1,j_2\}}$ is active, then this means that vertex $i$ is incident to the edge between the $j_1$-th and $j_2$-th vertex of the clique.
We connect the  vertices of gadgets in a way that enforces that the active gadgets describe a clique.

For every edge gadget, we add edges as follows (see Fig.~\ref{fig:connectingsegments}).
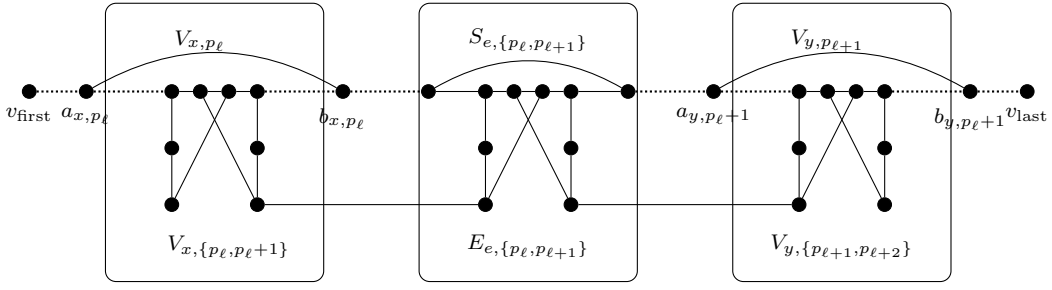
\begin{figure}
  \centering
  \begin{tikzpicture}[scale=0.75]
    \tikzstyle{every node}=[font=\footnotesize]
    \node (v-first) at (1.5,1) [vertex,label=below:$v_{\textnormal{first}}$]{};
    \node (axpl)    at (2.5,1) [vertex,label=below:$a_{x,p_{\ell}}$]{};
    \node (1a)      at (4,1) [vertex]{};
    \node (1b)      at (4.5,1) [vertex]{};
    \node (1c)      at (4,0) [vertex]{};
    \node (1e)      at (4,-1) [vertex]{};
    \node (2a)      at (5,1) [vertex]{};
    \node (2b)      at (5.5,1) [vertex]{};
    \node (2d)      at (5.5,0) [vertex]{};
    \node (2f)      at (5.5,-1) [vertex]{};
    \node (bxpl)    at (7,1) [vertex,label=below:$b_{x,p_{\ell}}$]{};
    \node (e-entry) at (8.5,1)[vertex]{};
    \node (3a)      at (9.5,1)[vertex]{};
    \node (3b)      at (10,1)[vertex]{};
    \node (3c)      at (9.5,0)[vertex]{};
    \node (3e)      at (9.5,-1)[vertex]{};
    \node (3ax)      at (10.5,1)[vertex]{};
    \node (3bx)      at (11,1)[vertex]{};
    \node (3dx)      at (11,0)[vertex]{};
    \node (3fx)      at (11,-1)[vertex]{};
    \node (e-exit)  at (12,1)[vertex]{};
    \node (aypl1)   at (13.5,1)[vertex,label=below:$a_{y,p_{\ell}+1}$]{};
    \node (4a)      at (15,1)[vertex]{};
    \node (4b)      at (15.5,1)[vertex]{};
    \node (4c)      at (15,0)[vertex]{};
    \node (4e)      at (15,-1)[vertex]{};
    \node (5a)      at (16,1)[vertex]{};
    \node (5b)      at (16.5,1)[vertex]{};
    \node (5d)      at (16.5,0)[vertex]{};
    \node (5f)      at (16.5,-1)[vertex]{};
    \node (bypl1)   at (18,1)[vertex,label=below:$b_{y,p_{\ell}+1}$]{};
    \node (v-last)  at (19,1)[vertex,label=below:$v_{\textnormal{last}}$]{};
    %
    % Only labels, no real vertices:
    %
    \node           at (5,-1.75){$V_{x,\{p_{\ell},p_\ell+1\}}$};
    \node           at (10.25,-1.75){$E_{e,\{p_{\ell},p_{\ell+1}\}}$};
    \node           at (15.75,-1.75){$V_{y,\{p_{\ell+1},p_{\ell+2}\}}$};
    \node           at (4.5,1.9){$V_{x,p_{\ell}}$};
    \node           at (10.25,1.9){$S_{e,\{p_{\ell},p_{\ell+1}\}}$};
    \node           at (15.5,1.9){$V_{y,p_{\ell+1}}$};
    %
    % Edges
    %
    \draw[densely dotted,thick](v-first)--(axpl);
    \draw[densely dotted,thick](axpl)--(1a);
    \draw[](1a)--(1b);
    \draw[](1a)--(1c);
    \draw[](1c)--(1e);
    \draw[](1b)--(2a);
    \draw[](2a)--(2b);
    \draw[](2b)--(2d);
    \draw[](2d)--(2f);
    \draw[](1e)--(2a);
    \draw[](2f)--(1b);
    \draw[densely dotted,thick](2b)--(bxpl);
    \draw[densely dotted,thick](bxpl)--(e-entry);
    \draw[](e-entry)--(3a);
    \draw[](3a)--(3b);
    \draw[](3a)--(3c);
    \draw[](3c)--(3e);
    \draw[](3b)--(3ax);
    \draw[](3ax)--(3bx);
    \draw[](3bx)--(3dx);
    \draw[](3dx)--(3fx);
    \draw[](3bx)--(e-exit);
    \draw[](3e)--(3ax);
    \draw[](3fx)--(3b);
    \draw[densely dotted,thick](e-exit)--(aypl1);
    \draw[densely dotted,thick](aypl1)--(4a);
    \draw[](4a)--(4b);
    \draw[](4a)--(4c);
    \draw[](4c)--(4e);
    \draw[](4b)--(5a);
    \draw[](5a)--(5b);
    \draw[](5b)--(5d);
    \draw[](5d)--(5f);
    \draw[](4e)--(5a);
    \draw[](5f)--(4b);
    \draw[densely dotted,thick](5b)--(bypl1);
    \draw[densely dotted,thick](bypl1)--(v-last);
    \draw[](2f)--(3e);
    \draw[](3fx)--(4e);
    \draw[] (axpl) edge[bend left] (bxpl);
    \draw [] (aypl1) edge[bend left] (bypl1);
    \draw [] (e-entry) edge[bend left] (e-exit);
    \node[rectangle, draw, text width=7.5em, text centered, rounded corners, minimum height=10.5em] (leftbox) at (4.75,0.1) {};
    \node[rectangle, draw, text width=7.5em, text centered, rounded corners, minimum height=10.5em] (midbox) at (10.25,0.1) {};
    \node[rectangle, draw, text width=7.5em, text centered, rounded corners, minimum height=10.5em] (rightbox) at (15.75,0.1) {};
  \end{tikzpicture}
  \caption{If $x$ and $y$ are the two end points of edge $e$, then segments
  $V_{x,p_{\ell}}$, $V_{y,p_{\ell + 1}}$, and $S_{e,\{p_{\ell},p_{\ell + 1}\}}$  are connected as shown above.
  The dotted lines represent missing sequences of gadgets and segments.}
\label{fig:connectingsegments}
\end{figure}
If vertices~$i,i'$ are endpoints of edge~$r$, then there is an edge from vertex $\delta$ of $V_{i,\{p_{\ell},p_{\ell + 1}\}}$ to vertex~$\gamma$  of $E_{r,\{p_{\ell},p_{\ell + 1}\}}$, and there is an edge from vertex~$\delta$ of $E_{r,\{p_{\ell},p_{\ell + 1}\}}$ to vertex $\gamma$ of $V_{i',\{p_{\ell+1},p_{\ell + 2})}$.
Furthermore, for every $i\in\{1,\dotsc,n\}$, there is an edge from~$v_{\textnormal{last}}$ to vertex~$\gamma$ of $V_{i,\{p_1,p_2\}})$, and for each edge $r$ there is an edge from vertex $\delta$ of~$S_{r,\{p_{t(t-1)/2-1},p_{t(t-1)}\}}$ to~$v_{\textnormal{first}}$.
This completes the description of the graph $G$.

\subsection{\texorpdfstring{{\sc $k$-Edge Change \STSP} $\Rightarrow$ {\sc $t$-Clique}}{k-Edge Change \STSP -> t-Clique}}
We claim that $G$ admits a Hamiltonian cycle $C'$ of weight strictly less than $\orderG + 1$ that is at distance at most $k = 7t(t-1) + 2(t + 1)$ from $C$, then there is a $t$-clique in $H$.
The cycle~$C'$ has as many edges as vertices, hence the only way the total weight is at most $\orderG$ is if $C'$ does not use the edge of weight 2 that goes from $v_{\textnormal{last}}$ to~$v_{\textnormal{first}}$.
Furthermore, every gadget has to be traversed either as~\ref{fig:switchgadget_b} or~\ref{fig:switchgadget_c} of Fig.~\ref{fig:switchgadget}.

Let us think about the cycle $C'$ as a path that starts from and returns to~$v_{\textnormal{first}}$.
Similar to $C$, the cycle~$C'$ has to go through the segments one by one.
It is clear that if $C'$ enters a segment at its entrance, then it has to leave it via its exit as otherwise it has gadgets that cannot be collected entirely with edges of $G$.
However, inside a segment, $C'$ can do two things: either it goes through the gadget(s) (similarly to~ $C$), or it skips the gadget(s) using the bypass edge.
In the latter case, we say that the segment is \emph{active}.
If vertex segment $V_{i,j}$ is active, then we will take it as an indication that vertex $i$ should be the $j$-th vertex of the clique.
If edge segment~$E_{i,(j_1,j_2)}$ is active, then this will mean that the $j_1$-th and the $j_2$-th vertices of the clique are connected by edge $i$.
By the time~$C'$ reaches $v_{\textnormal{last}}$, every gadget is completely traversed, or not visited at all.
The cycle has to return to $v_{\textnormal{first}}$ by visiting all the skipped gadgets.
As a result, the only possibility to continue from $v_{\textnormal{last}}$ without an edge of cost $2$ is to go to the~$\gamma$-vertex of some gadget $V_{i,\{p_1,p_2\}}$ of an active vertex segment $V_{i,p_1}$.
The tour has to continue from~$\delta$ of~$V_{i,\{p_1,p_2\}}$ to~$\gamma$ of some edge gadget $S_{r,\{p_1,p_2\}}$ where $r$ is an edge incident to $i$ and $E_{r,\{p_1,p_2\}}$ is an active edge segment.

We argue that unless the tour traverses exactly $t(t-1)$ switch gadgets from $\gamma$ to $\delta$, there is no tour in $G$ without an edge of cost $2$.
Consider a pair $\{p_{\ell},p_{\ell+1}\}$ for $1< \ell < t(t-1)/2$ such that for some edge $r$, the tour traverses $S_{r,\{p_{\ell-1},p_{\ell}\}}$.
Then it has to continue via a vertex gadget~$V_{i',\{p_{\ell},p_{\ell+1}\}}$ and an edge gadget $S_{r',\{p_{\ell},p_{\ell+1}\}}$ for some vertex $i'$ and some edge $r'$.
Therefore, for each pair, there are exactly two gadgets to traverse and no pair is used more frequently.
After a gadget $S_{r'',\{p_{t(t-1)/2-1},p_{t(t-1)/2}\}}$ is traversed, the only possibility to continue is to go to~$v_{\textnormal{first}}$ and thus to finish the tour.

We now analyze the number of active vertex segments.
Each traversed vertex gadget has to be in an active vertex segment and there are exactly $t(t-1)/2$ traversed vertex gadgets.
Since each vertex segment contains $(t-1)/2$ vertex gadgets, there have to be at least $t$ active vertex segments.
At the same time, there are at most $t$ active vertex segments as otherwise there is an active vertex segment with a vertex gadget that was not visited.
The $t$ vertex segments correspond to $t$ vertices in $H$ and the $t(t-1)/2$ active edge gadgets correspond to $t(t-1)/2$ distinct edges between the $t$ vertices, which is only possible if there is a clique of size $t$ in $H$.

The value $k = 7t(t-1) + 2(t + 1)$ arises as follows.
In each of the $t(t-1)$ switch gadgets, we have to change $4$ edges in order to traverse from $\gamma$ to $\delta$.
There are $t(t-1)/2 + t$ segments that have to be activated, where we have to introduce one edge in each of them.
Furthermore, for each entered~$\gamma$ vertex we have to remove one edge and introduce another edge, $2t(t-1)$ changed edges in total.
We also have to add one edge to connect from $\delta$ of $S_{r'',\{p_{t(t-1)/2-1},p_{t(t-1)/2}\}}$ to $v_{\textnormal{first}}$ and we have to remove the edge of cost $2$.
For each active segment, we also have to remove one edge from the last~$\beta$ vertex as it is not incident to another~$\alpha$-vertex.

\subsection{\texorpdfstring{{\sc $t$-Clique} $\Rightarrow$ {\sc $k$-Edge Change \STSP}}{t-Clique -> k-Edge Change \STSP}}
To prove the other direction, we have to show that if there is a clique $K$ of size~$t$ in $H$, then there is Hamiltonian cycle $C'$ in $G$ of cost $\orderG$ at distance at most $k$ from~$C$.
Let $v_1,\hdots, v_t$ be the vertices in $K$ and for indices $j_1,j_2$ let $e_{j_1,j_2}$ denote the edge between $v_{j_1}$ and $v_{j_2}$.

Cycle $C'$ starts from $v_{\textnormal{first}}$ and goes through the segments, similarly to $C$.
However, for every $j\in\{1,\hdots,t\}$, cycle $C'$ traverses vertex segment $V_{v_j,j}$ in a way different from $C$: after the entrance $a_{v_j,j}$, the cycle goes to exit $b_{v_j,j}$ on the bypass edge, completely avoiding the gadgets in the segment.
For each $j_1 \neq j_2$, in segment $E_{e_{j_1,j_2},\{j_1,j_2\}}$ the cycle $C'$ uses the bypass edge from the entrance to the exit and avoids the gadget $S_{e_{j_1,j_2},\{j_1,j_2\}}$.

After the cycle reaches $v_{\textnormal{last}}$, it has to return to $v_{\textnormal{first}}$ by visiting the skipped gadgets, which we can do as follows.
First, we go from $v_{\textnormal{last}}$ to vertex $\gamma$ of the gadget $V_{p_1,\{p_1,p_2\}}$.
Now assume that we are at vertex $\gamma$ of $V_{v_{p_{\ell}},\{p_{\ell},p_{\ell+1}\}}$ for some $\ell\in\{1,\hdots,t(t-1)/2-2\}$.
The cycle uses the lower gates to visit $V_{v_{p_{\ell}},\{p_{\ell},p_{\ell+1}\}}$, leaves the gadget at vertex $\delta$, and goes to vertex $\gamma$ of edge gadget~$E_{e_{p_{\ell},p_{\ell+1}},\{p_{\ell},p_{\ell+1}\}}$.
After going through this edge gadget on the lower gates, the cycle goes to vertex $\gamma$ of $V_{v_{p_{\ell+1}},(p_{\ell+1},p_{\ell+2})}$.
By the definition of $R$, $\{p_\ell,p_{\ell+1}\} \in R_{p_{\ell}}$ and $\{p_{\ell+1},p_{\ell+2}\} \in R_{p_{\ell+1}}$, which implies that both vertex segments are active.
We continue this way until vertex $\delta$ of gadget $E_{e_{p_{\ell}},\{p_{\ell},p_{\ell+1}\}}$ is reached for $\ell = t (t - 1)/2 -1$.
At that point the cycle $C'$ is terminated by an edge going to~$v_{\textnormal{first}}$.
It is clear that every skipped gadget is visited exactly once.
By the discussion in the first direction of the proof, the distance of~$C'$ from~$C$ is exactly~$k$.
Furthermore, cycle $C'$ does not use the edge from $v_{\textnormal{last}}$ to $v_{\textnormal{first}}$, hence its total cost is $\orderG$, which is strictly smaller than the weight of $C$.
This completes the proof of Theorem~\ref{thm:kchange12tspisw1hard}.
\hfill$\qed$

\medskip

Theorem~\ref{thm:kchange12tspisw1hard} implies that {\sc $k$-Edge Change \STSP} cannot be solved in time $f(k) \cdot n^{O(1)}$ for any computable function $f$, unless $\mathsf{W}[1]=\mathsf{FPT}$.
If we make a stronger complexity-theoretic assumption, then we can actually prove a lower bound on the exponent of $n$:
The Exponential Time Hypothesis~\cite{ImpagliazzoEtAl2001} says that $n$-variable 3-SAT cannot be solved in time $2^{o(n)}$.
\begin{theorem}
  There is no $f(k) \cdot n^{o(\sqrt[3]{k})}$ time algorithm for {\sc $k$-Edge Change \STSP} with~$n$ cities, unless the Exponential Time Hypothesis fails.
\end{theorem}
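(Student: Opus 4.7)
The plan is to reduce from the $t$-Clique problem using exactly the construction from the proof of Theorem~\ref{thm:kchange12tspisw1hard}, and to combine it with the tight ETH-based lower bound for $t$-Clique due to Chen et al.: assuming ETH, $t$-Clique on an $n_H$-vertex graph cannot be solved in time $f(t) \cdot n_H^{o(t)}$ for any computable function $f$.

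First I would record the parameters of the reduction. Given a $t$-Clique instance on an $n_H$-vertex, $m_H$-edge graph $H$, the constructed \STSP instance has $N = 4 n_H t^2 - 2 n_H t + 5 m_H t(t-1) = O(n_H^2 \cdot t^2)$ cities and parameter $k = 7t(t-1) + 2(t+1) = \Theta(t^2)$; in particular, $\sqrt[3]{k} = \Theta(t^{2/3})$. The reduction itself runs in time polynomial in $n_H$ and $t$, so it contributes nothing worth tracking.

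Next I would derive the contradiction. Suppose there were an algorithm for {\sc $k$-Edge Change \STSP} running in time $f(k) \cdot n^{o(\sqrt[3]{k})}$ on instances with $n$ cities. Applying it to the reduced instance yields a $t$-Clique algorithm running in time
\[
  f\bigl(\Theta(t^2)\bigr) \cdot \bigl(O(n_H^2 \, t^2)\bigr)^{o(t^{2/3})} \;=\; f'(t) \cdot n_H^{o(t^{2/3})},
\]
where the factor $t^{o(t^{2/3})}$, which depends only on $t$, has been absorbed into the new computable function $f'$. Since any $g(t)$ with $g(t)/t^{2/3} \to 0$ also satisfies $g(t)/t \to 0$, we have $n_H^{o(t^{2/3})} \subseteq n_H^{o(t)}$, so the resulting algorithm runs in time $f'(t) \cdot n_H^{o(t)}$, contradicting ETH.

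The only delicate point is the bookkeeping of the little-$o$ asymptotics in passing from the parameter $k$ in the STSP instance to the parameter $t$ in the clique instance (in particular, absorbing the $t$-dependent overhead from $N = O(n_H^2 t^2)$ into $f'$), but this is routine; the heavy lifting has already been done by the reduction in Theorem~\ref{thm:kchange12tspisw1hard}.
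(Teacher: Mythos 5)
Your proof is correct and takes essentially the same route as the paper: reduce from {\sc $t$-Clique} via Theorem~\ref{thm:kchange12tspisw1hard} and invoke the ETH-based lower bound of Chen et al.\ for clique. One small point worth noting: the paper's proof asserts $k = O(t^3)$, which is what motivates the $\sqrt[3]{k}$ exponent, but you correctly read off $k = 7t(t-1) + 2(t+1) = \Theta(t^2)$ from the construction; with that tighter bound the same argument would in fact rule out $f(k)\cdot n^{o(\sqrt{k})}$ time, a strictly stronger conclusion than the stated theorem. Since your derivation (passing through $n_H^{o(t^{2/3})} \subseteq n_H^{o(t)}$) only needs the stated $\sqrt[3]{k}$, it is sound as written.
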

\begin{proof}
  The proof in Theorem~\ref{thm:kchange12tspisw1hard} takes an instance $(H, t)$
  of {\sc $t$-Clique}, and turns it into an equivalent instance of {\sc$k$-Edge Change \STSP} with $k=O(t^3)$.
  Therefore, an $f(k)\cdot n^{o(\sqrt[3]{k})}$-time algorithm for {\sc $k$-Edge Change \STSP} would be able to solve any instance of {\sc $t$-Clique} in time $f'(t)\cdot n^{o(t)}$.
  As shown by Chen et al.~\cite{ChenEtAl2004}, this would imply that $n$-variable 3-SAT can be solved in time~$2^{o(n)}$.
\end{proof}

\end{document}